\newtheorem{theorem}{Theorem}
\newtheorem{corollary}[theorem]{Corollary}
\newtheorem{adefinition}[theorem]{Definition}
\newtheorem{lemma}[theorem]{Lemma}
\newtheorem{proposition}[theorem]{Proposition}
\newtheorem{aremark}[theorem]{Remark}
\newenvironment{remark}{\begin{aremark}\rm}{\end{aremark}}
\newenvironment{proof}[1][Proof]{\noindent\textbf{#1.} }{\ \rule{0.5em}{0.5em}}
\numberwithin{equation}{section} \numberwithin{theorem}{section}
\begin{document}

\title{On Random Matrices Arising \\
in Deep Neural Networks: General I.I.D. Case }

\author{L. Pastur and V. Slavin \\
%EndAName
{\small B.Verkin Institute for Low Temperature Physics and Engineering}\\
{\small Kharkiv, Ukraine}}

\date{}

\maketitle

%\begin{history}
%\received{(Day Month Year)} \revised{(Day Month Year)}
%\end{history}

\begin{abstract}
We study the eigenvalue distribution of random matrices pertinent
to the analysis of deep neural networks. The matrices resemble the
product of the sample covariance matrices, however, an important
difference is that the analog of the population covariance matrix
is now a function of random data matrices (synaptic weight
matrices in the deep neural network terminology). The problem has
been treated in recent work \cite{Pe-Co:18} by using the
techniques of free probability theory. Since, however, free
probability theory deals with population covariance matrices which
are independent of the data matrices, its applicability in this
case has to be justified. The justification has been given in
\cite{Pa:20} for Gaussian data matrices with independent entries,
a standard analytical model of free probability, by using a
version of the techniques of random matrix theory. In this paper
we use another version of the techniques to
extend the results of \cite{Pa:20} to the case where the
entries of the data matrices are just independent identically
distributed random variables with zero mean and finite fourth
moment. This, in particular, justifies the mean field approximation
in the infinite width  limit for the deep untrained neural networks and
the property of the macroscopic universality of random matrix theory
in this case.
\end{abstract}

%\keywords{random matrices; deep neural networks.}

%\ccode{Mathematics Subject Classification 2000: 15B52; 92B20}

%%%%%%%%%%%%%%%%%%%%%%%%%%%%%%%%%%%%%%%%%%%%%%%%%%%%%%%%%%%%%%%%%%%%%%%%%%%%%%%%

\section{Introduction}

Deep learning, a powerful computational technique based on deep artificial neural
networks (DNN) of various architecture, proved to be an efficient tool in a
wide variety of problems involving large data sets, see, e.g. \cite%
{Bu:17,Ca-Ch:18,Go-Co:16,Le-Co:15,Ro-Co:22,Sh-Ma:19}. A general scheme for
the so-called feed-forward, fully connected neural networks with $L$ layers
of width $n_{l}$ for the $l$th layer
%and a nonlinearity $\varphi $
is as follows.

Let%
\begin{equation}
x^{0}=\{x_{j_{0}}^{0}\}_{j_{0}=1}^{n_{0}}\in \mathbb{R}^{n_{0}}  \label{x0}
\end{equation}%
be the  \emph{input} to the network and $x^{L}=\{x_{j_{L}}^{L}\}_{j_{L}=1}^{n_{L}}%
\in \mathbb{R}^{n_{L}}$ be the \emph{output}. Their components are known as the \emph{neurons} (the terminology here and below is inspired by that in biological neural networks).
The components of the \emph{activations} $x^{l}=\{x_{j_{l}}^{l}\}_{j_{l}=1}^{n_{l}},\; l=1,\dots, L-1$ and the components of
the post-affine transformations $y^{l}=\{y_{j_{l}}^{l}\}_{j_{l}=1}^{n_{l}}$
in the $l$th layer
%known as the \emph{neurons} and
are related via an affine transformation
\begin{equation}
y^{l}=W^{l}x^{l-1}+b^{l},\;x_{j_{l}}^{l}=\varphi
(y_{j_{l}}^{l}),\;j_{l}=1,\dots,n_{l},\;l=1,\dots,L,  \label{rec}
\end{equation}%
where
\begin{equation}
W^{l}=\{W_{j_{l}j_{l-1}}^{l}\}_{j_{l},j_{l-1}=1}^{n_{l},n_{l-1}},\;l=1,\dots,L
\label{wl}
\end{equation}%
are $n_{l}\times n_{l-1}$ \emph{weigh}t matrices,%
\begin{equation}
b^{l}=\{b_{j_{l}}^{l}\}_{j_{l}=1}^{n_{l}},\;l=1, \dots,L  \label{bl}
\end{equation}%
are $n_{l}$-component \emph{bias} vectors and $\varphi :\mathbb{R}\rightarrow
\mathbb{R}_+$ is the component-wise nonlinearity known as the \emph{activation function}.
%We will refer to these as the post- and pre-activations. (We let x
%0 i xi for the input, dropping the Arabic numeral superscript,
%and instead use a Greek superscript x to denote a particular
%input).
%Weight and bias parameters for the $l$th layer form the $n^{l}
%\times n^{l-1}$ rectangular random  matrix
%$W^l=\{W^l_{jk}\}_{j,k=1}^{n_l,n_{l-1}}$ and the vector
%$b_l=\{b^l_j\}_{j=1}^{N^l}$, and since this is an untrained neural
It is usually monotone and piece-wise differentiable
(S-shaped or sigmoid), e.g. $tanh$, $tan^{-1}$, $(1+e^{-x})^{-1}$ and HardTanh (see (\ref{hata})).
A widely used and fast calculated activation function is the \emph{rectified linear unit} (ReLU)
$x_+ := \max\{0,x\}$.

An important ingredient of the deep learning is the training procedure. It
modifies the parameters (weight matrices and biases) on the every step of
the iteration to reduce the
misfit between the input and the output data of the layer by using certain
optimization procedures, usually the stochastic gradient descend (SGD).
Being multiply repeated in the DNN, the procedure provides the desired final output
as well as certain final parameters of the DNN in question.
%, see e.g. \cite{Ma-Ma:17}.

In the DNN practice the weights and the biases are randomly initialized and the SGD
also includes a certain randomization. Moreover, the modern theory deals also with
untrained and even random parameters of the DNN architecture, see  \cite{Pe-Co:18,Ba-Co:20,Ro-Co:22,Ca-Sc:20,Gi-Co:16,Li-Qi:00,Ma-Co:16,Pe-Ba:17,Po-Co:16,Sc-Wa:17,Sc-Co:17,
Ta-Co:18,Wa-Co:18,Ya:20}.
It is often assumed in these and other works that the weight matrices and biases are independent and identically distributed (i.i.d.) in $l$ and have i.i.d. Gaussian entries and components.

Following this trend in the DNN studies and taking into account that
quite common initialization schemes in deep learning do not use
Gaussians \cite{Gl-Be:10} on
one hand and recalling the independence of
various results of random matrix theory on the concrete distribution of the parameters (macroscopic universality) \cite{Pa-Sh:11,Ta-Vu:10} on the other hand,
we consider in this paper a general i.i.d. case where:

 (i) the bias vectors $b^{l},\;l=1,2,\dots,L$ are i.i.d. in $l$
and for every $l$ their components $\{b_{j_{l}}^{l}\}_{j_{l}=1}^{n_{l}}$ are
i.i.d. random variables such that
\begin{equation}
\mathbf{E}\{b_{j_{l}}^{l}\}=0,\;\mathbf{E}\{(b_{j_{l}}^{l})^{2}\}=\sigma
_{b}^{2},  \label{bga}
\end{equation}

  (ii) the weight matrices $W^{l},\;l=1,2 \dots,L$ are also i.i.d in $l$
and
\begin{align}
W^{l}&
=n_{l-1}^{-1/2}X^{l}=n_{l-1}^{-1/2}\{X_{j_{l}j_{l-1}}^{l}%
\}_{j_{l},j_{l-1}=1}^{n_{l},n_{l-1}},  \notag \\
\mathbf{E}\{X_{j_{l}j_{l-1}}^{l}\}& =0,\;\mathbf{E}%
\{(X_{j_{l}j_{l_{1}-1}}^{l})^{2}\}=w^{2},\;\mathbf{E}%
\{(X_{j_{l}j_{l_{1}-1}}^{l})^{4}\}=m_{4}<\infty,  \label{wga}
\end{align}%
where for every $l$ the entries $\{X_{j_{l}j_{l-1}}^{l}%
\}_{j_{l},j_{l-1}=1}^{n_{l},n_{l-1}}$ of $X^{l}$ are i.i.d. random variables.

Note that the quite common initialization schemes in deep learning do not use
Gaussians, see, e.g. \cite{Gl-Be:10}.

We will view $n_{l}\times n_{l-1}$ matrices $X^{l}$ as the upper left
rectangular blocks of the semi-infinite random matrix%
\begin{equation}
\{X_{j_{l}j_{l-1}}^{l}\}_{j_{l},j_{l-1}=1}^{\infty ,\infty }  \label{xinf}
\end{equation}%
whose i.i.d. entries satisfy (\ref{wga}).

Likewise, for every $l$ we will view $b^{l}$ in (\ref{bl}) as the first $n_{l}$
components of the semi-infinite vector
\begin{equation}
\{b_{j_{l}}^{l}\}_{j_{l}=1}^{\infty }  \label{binf}
\end{equation}%
whose i.i.d. components satisfy (\ref{bga}).

As a result of this form of weights and biases of the $l$th layer they are
for all $n_{l}=1,2, \dots$ defined on the same infinite-dimensional product
probability space $\Omega ^{l}$ generated by (\ref{xinf}) -- (\ref{binf}).
Let also
\begin{equation}  \label{oml}
\Omega _{l}=\Omega ^{l}\times \Omega ^{l-1}\times \dots \times \Omega
^{1},\;l=1,\dots,L
\end{equation}%
be the infinite-dimensional probability space on which the recurrence (\ref%
{rec}) is defined for a given $L$ (the number of layers).

This procedure of enlarging the probability space is standard in probability theory, see e.g. [37]. In our case the procedure allows us to formulate our results on
the large size asymptotic behavior of the eigenvalue distribution of matrices (\ref{JJM})
as those valid with probability 1 in $\Omega_L$, i.e., for any "typical" realization of random
parameters (for an analogous approach in random matrix theory see, e.g. \cite{Pa-Sh:11})."

%Note that matrices $W^{l}(W^{l})^{T}$ of (\ref{wga}) are known in statistics
%as the Wishart matrices \cite{Pa-Sh:11,Mu:05} and their limting NCM is.

A key and quite non-trivial step in training deep networks is to move the weights $W^l$,
hence, the activations $x^{l}$ in each layer $l$ so as to move the output $x^L$ in the final
layer in a desired direction. To this end the analysis of the back propagation of
errors at the output of layer $l$, determining how we have to change $x^l$ to move $x^L$,
is useful. A corresponding tool is the Jacobian $\frac{\partial x^L}{\partial x^0}$ showing
how an error $\varepsilon$, or desired direction of motion in the output $x^L$, back-propagates
to a desired change in the input $\Delta (x^0)^T = \varepsilon^T J$, see more in
\cite{Ba-Co:20,Sc-Co:17}.

The above makes the Jacobian an important quantity of the field. We have according to (\ref{x0}) -- (\ref{bl})
\begin{equation}
J_{\mathbf{n}_{L}}^{L}:=\left\{ \frac{\partial x_{j_{L}}^{L}}{\partial
x_{j_{0}}^{0}}\right\}
_{j_{0},j_{L}=1}^{n_{0},n_{L}}= D^{L}W^{L} \cdots D^{1}W^{1},\;%
\mathbf{n}_{L}=(n_{1},\dots,n_{L}),  \label{jac}
\end{equation}%
an $n_{L}\times n_{0}$ random matrix, where $\{W^l \}_{l=1}^L$ are given by (\ref{wga})
and\begin{equation}
D^{l}=\{D_{j_{l}}^{l}\delta
_{j_{l}k_{l}}\}_{j_{l},k_{l}=1}^{n_{l}},\;D_{j_{l}}^{l}=\varphi ^{\prime }%
\Big(n_{l-1}^{-1/2}%
\sum_{j_{l-1}=1}^{n_{l-1}}X_{j_{l}j_{l-1}}^{l}x_{j_{l-1}}^{l-1}+b_{j_{l}}^{l}%
\Big), \; l=1,\dots,L  \label{D}
\end{equation}%
are diagonal random matrices.

%The input-output Jacobian $J_{\mathbf{n}_{L}}^{L}$ is closely %related to the
%backpropagation operator mapping output errors to weight matrices at a given
%layer. %, in the sense that if the former is
Of particular interest is the spectrum of singular values of $J_{\mathbf{n}%
_{L}}^{L}$, i.e., the square roots of eigenvalues of the $n_{L}\times n_{L}$
positive definite matrix
\begin{equation}
M_{\mathbf{n}_{L}}^{L}=J_{\mathbf{n}_{L}}^{L}(J_{\mathbf{n}_{L}}^{L})^{T}
\label{JJM}
\end{equation}%
for networks with the above random weights and biases and for large $%
\{n_{l}\}_{l=1}^{L}$, i.e., for deep networks with wide layers but with a
fixed depth $L$, see \cite
{Pe-Co:18,Li-Qi:00,Ma-Co:16,Pe-Ba:17,Po-Co:16,Sc-Co:17,Ta-Co:18,Ya:20} for
motivations, settings and results. More precisely, we will study in this
paper the asymptotic regime %, where
%\begin{equation}
%$1\ll L\ll n_{l},\;l=1,2,...\;  \label{asi}
%\end{equation}%
determined by the simultaneous limits
\begin{equation}
n_{l}\rightarrow \infty ,\;n_{l-1}/n_{l} \to c_{l}\in (0,\infty
),\;_{{}}\;l=1,\dots,L  \label{asf1}
\end{equation}%
denoted below as
\begin{equation}
\lim_{\mathbf{n}_{L}\rightarrow \infty } \dots \label{limn}
\end{equation}%
%and followed by the limit
%\begin{equation}
%L\rightarrow \infty .  \label{asf2}
%\end{equation}%
The above limit can be viewed as an
implementation of the heuristic inequality $ L \ll n$, meaning that the DNN
in question are much more wide than they are deep. The simplest case where $L=1$ and $D_1=\mathbf{1}_n$ is known in statistics as the Wishart matrices \cite{Pa-Sh:11,Mu:05}
and in this case the limiting NCM (see (\ref{ncm}) for definition) is
\begin{equation}\label{nump}
\nu _{MP}(\lambda)=(2\pi)^{-1}\sqrt{(4-\lambda)/\lambda}, \; \lambda \in [1,4].
\end{equation}
Denote by $\{\lambda _{t}^{L}\}_{t=1}^{n_{L}}$ the eigenvalues of
the real symmetric random matrix (\ref{JJM}) %$M_{\mathbf{n}_{L}}^{L}$
and introduce its
Normalized Counting Measure (NCM)
\begin{equation}
\nu _{M_{\mathbf{n}_{L}}^{L}}=n_{L}^{-1}\sum_{t=1}^{N_{L}}\delta _{\lambda
_{t}^{L}}.  \label{ncm}
\end{equation}%
We will deal with the leading term of $\nu _{M_{\mathbf{n}_{L}}^{L}}$in the
asymptotic regime (\ref{asf1}), i.e., with the limit%
%which we denote $\lim_{\mathbf{n}_{L}\rightarrow
%\infty }$:
\begin{equation}
\nu _{M^{L}}=\lim_{\mathbf{n}_{L}\rightarrow \infty }\nu _{M_{\mathbf{n}%
_{L}}^{L}}.  \label{ids}
\end{equation}%
Note that since $\nu _{M_{\mathbf{n}_{L}}^{L}}$ is random, the meaning of
the limit has to be indicated.

The problem has been considered in \cite{Pe-Co:18} (see also \cite%
{Pa:20,Ba-Co:20,Li-Qi:00,Ta-Co:18,Ya:20}) in the case where all $b^{l}$ and $%
X^{l},\;l=1,2,\dots,L$ in (\ref{bga}) -- (\ref{wga}) are Gaussian and have the
same size $n$ and $n\times n$ respectively, i.e.,%
\begin{equation}
n=n_{0}=\dots=n_{L},\;c_{l}=1,\;l=1,\dots,L.  \label{nequ}
\end{equation}%%
%the components of $$ i.i.d. in $l$, thus $\sigma^2_{b^l} \; w_l$ and $m^l_4$,
%We will write in this case $n$ instead of $n_{l},\;l=0,...,L$.
%Assume for the sake of simplicity of subsequent formulas that $w^{l}=1$ in (%
%\ref{wga}). the random matrices $X^{l}$ of (\ref{wga}) and the random
%vectors $b^{l}$ of (\ref{bga}) are i.i.d. in $l$ and that .

In \cite{Pe-Co:18} compact formulas for the limit
\begin{equation}
\overline{\nu }_{M^{L}}=\lim_{n\rightarrow \infty }\overline{\nu }%
_{M_{n}^{L}},\;\overline{\nu }_{M_{n}^{L}}:=\mathbf{E}\{\nu _{M_{n}^{L}}\}
\label{mids}
\end{equation}%
and its Stieltjes transform
\begin{equation}
f_{M^{L}}(z)=\int_{0}^{\infty }\frac{\overline{\nu }_{M^{L}}(d\lambda )}{%
\lambda -z},\;\Im z\neq 0  \label{stm}
\end{equation}%
were presented. The formula for $\overline{\nu }_{M^{L}}$ is given in (\ref%
{nucon}) below. To write the formula for $f_{M^{L}}$ it is convenient to use
the moment generating function
\begin{equation}
m_{M^{L}}(z)=\sum_{k=1}^{\infty }m_{k}z^{k},\;m_{k}=\int_{0}^{\infty
}\lambda ^{k}\overline{\nu }_{M^{L}}(d\lambda ),  \label{mgen}
\end{equation}%
of $\overline{\nu }_{M^{L}}$ related to $f_{M^{L}}$ as
\begin{equation}
m_{M^{L}}(z)=-1-z^{-1}f_{M^{L}}(z^{-1}).  \label{stmg}
\end{equation}%
Let
\begin{equation}
K_{n}^{l}:=(D_{n}^{l})^{2}=\{(D_{j_{l}}^{l})^{2}\}_{j_{l}=1}^{n}  \label{kan}
\end{equation}%
be the square of the $n\times n$ random diagonal matrix (\ref{D}) with $%
n_{l}=n$
%, denoted $D_{n}^{l}$ to make explicit its dependence
%on $n$ of (\ref{nequ}),
and let $m_{K^{l}}$ be the moment generating function of the $%
n\rightarrow \infty $ limit $\overline{\nu }_{K^{l}}$ of the expectation of
the NCM of $K_{n}^{l}$. Then we have according to formulas (14) and (16) in
\cite{Pe-Co:18} in the case, where $\overline{\nu }_{K^{l}}$, hence $%
m_{K^{l}} $, do not depend on $l$ (see Remark \ref{r:penn} (i)),
\begin{align}
m_{M^{L}}(z)& =m_{K}(z^{1/L} \psi_{L}(m_{M^{L}}(z)),  \notag \\
\psi _{L}(m)& =m^{(L-1)/L}(1+m)^{1/L}.  \label{penfo}
\end{align}%
Hence, $f_{M^{L}}$ of (\ref{stm}) satisfies a certain functional equation,
the standard situation in random matrix theory and its applications, see
\cite{Pa-Sh:11} for general results and \cite{Go-Co:15,Mu:02} for results on
the products of random matrices. Note that our notation is different from
that of \cite{Pe-Co:18}: our $f_{M^{L}}(z)$ of (\ref{stm}) is $-G_{X}(z)$ of
(7) in \cite{Pe-Co:18} and our $m_{M^{L}}(z)$ of (\ref{mgen}) is $M^{-1}_{X}(1/z)$
of (9) in \cite{Pe-Co:18}.

The derivation of (\ref{penfo}) and the corresponding formulas (see (\ref%
{nucon}) -- (\ref{nucone}) below) for the limiting mean NCM\ $\overline{\nu }%
_{M^{L}}$ in \cite{Pe-Co:18} are based on the claimed in this paper asymptotic
freeness of diagonal matrices $D_{n_{l}}^{l}=\{D_{j_{l}}^{l}%
\}_{j_{l}=1}^{n_{l}},\;l=1,\dots,L$ of (\ref{D}) %with $n=n_{l}$
and Gaussian matrices $X_{n_{l}}^{l},\;l=1,\dots,L$ of (\ref{wl}) -- (\ref%
{wga}) %with $n_{l-1}=n_{l}=n$
(see, e.g. \cite{Mi-Sp:17} for the definitions and properties of
asymptotic freeness). This leads directly to (\ref{penfo}) in view of the
multiplicative property of the moment generating functions (\ref{mgen}) and
the so-called $S$-transforms of the  mean limiting NCM $\overline{\nu }_{K^{l}}$
of $K_{n_{l}}^{l}$ and the  mean limiting NCM $\overline{\nu}_{MP}$ (see (\ref{nump}))
%\begin{equation}\label{nump}
%\nu _{MP}=4\pi^{-1}\sqrt{(4-\lambda)/\lambda}
%\end{equation}
of $n^{-1}X_{n_{l}}^{l}(X_{n_{l}}^{l})^{T}$ in the regime (\ref{asf1}), see
Remark \ref{r:penn} (ii) and Corollary \ref{c:conv}.

There is, however, a delicate point in the argument of
\cite{Pe-Co:18}, since, to the best of our knowledge, the asymptotic
freeness has been established so far for the Gaussian random matrices $%
X_{n_{l}}^{l}$ of (\ref{wga}) and deterministic (more generally, random but $%
X_{n_{l}}^{l}$-independent) diagonal matrices, see e.g. \cite%
{Mi-Sp:17} and also \cite{Go-Co:15,Mu:02}.
On the other hand, the diagonal matrices $D_{n}^{l}$ in (\ref{D}) depend
explicitly on $(X_{n}^{l},b_{n}^{l})$ of (\ref{wl}) -- (\ref{bl}) and,
implicitly, via $x^{l-1}$, on the all preceding $(X_{n}^{l^{\prime
}},b_{n}^{l^{\prime }}),\;l^{\prime }=l-1,\dots,1$. Thus, the proof of
validity of (\ref{penfo}) requires an additional reasoning. \ It was given
in \cite{Pa:20} for the Gaussian weights and biases by using a version of
standard tools of random matrix theory (see \cite{Pa-Sh:11}, Chapter 7). Note that it was also proved in \cite{Pa:20} that the formula (\ref{ids}) is
valid not only in the mean (see (\ref{mids}) and \cite{Pe-Co:18}), but also
with probability 1 in $\Omega_{L}$ of (\ref{oml}) (recall that the measures
in the r.h.s. of (\ref{ids})\ are random) and that the corresponding limiting measure $\nu
_{M^{L}}$
%in the l.h.s. of (\ref{ids})
coincides with $\overline{\nu }%
_{M^{L}}$ of (\ref{mids}), i.e., $\nu _{M^{L}}$ is non-random (the selfaveraging property of the limiting NCM).

The basic ingredient of the proof in \cite{Pa:20} is the justification of
the replacement of the argument of $\varphi ^{\prime }$ in (\ref{D}), i.e.,
the post-affine $y_{j_{l}}^{l}$ of (\ref{rec}), by a Gaussian random variable which
is statistically independent of $\{X^{l^{\prime }}\}_{l^{\prime }=1}^l$, see
Lemma 3.5 of \cite{Pa:20}. This reduces the analysis of random matrices (\ref%
{JJM}) to that of random matrices with random but $X^l$-independent
analogs of diagonal matrices $D^{l}$, a well studied problem of random matrix theory,
see, e.g. \cite{Pa:20,Co-Ha:14}, and justifies the so-called infinite width mean-field
limit discussed in \cite{Ma-Co:16,Po-Co:16,Sc-Co:17,Ya:20}.

The goal of this paper is to show that the results presented in \cite%
{Pe-Co:18} and justified in \cite{Pa:20} (see also \cite{Ya:20}) for the
Gaussian weights and biases are valid for arbitrary random weights and
biases satisfying (\ref{bga}) -- (\ref{wga}). It is worth mentioning that
our initial intention was to carry out this extension just by using the
so-called interpolation trick of random matrix theory. The trick allows one
to extend a number of results of the theory valid for Gaussian matrices with
i.i.d. entries to those for matrices with i.i.d. entries possessing just
several finite moments, see \cite{Pa-Sh:11,Pa:00} (this is known as
the macroscopic, or global, universality). We have found, however, that in
our case the
corresponding proof is quite tedious and long. Thus, we apply another method
which dates back to \cite{Ma-Pa:67,Pa:72} and has been widely used and
extended afterwards \cite{Pa-Sh:11,Ak-Co:11,Ba-Si:00,Gi:12}. The method is quite
transparent and its application to matrices (\ref{JJM}) requires just minor
modifications of that used in random matrix theory where the analogs
of matrices $D^{l}$ of (\ref{D}) are either non-random or random but
independent of $X^{l}$ of (\ref{wga}).

The paper is organized as follows. In the next Section 2 we prove the validity
of (\ref{ids}) with probability 1 in $\Omega _{L}$ of (\ref{oml}), formula (%
\ref{penfo}) and the corresponding formula (\ref{nucon}) for $\nu _{M^{L}}=%
\overline{\nu }_{M^{L}}$ of \cite{Pe-Co:18}.  This is given in Theorem \ref{t:main} which proof is based on a natural inductive procedure allowing for the passage from the $l$th to the $(l+1)$th
layer and it is quite close to that of \cite{Pa:20}. This is because the procedure is almost independent on the probability law of the weight entries, provided that a formula relating the limiting (in the layer width) Stieltjes
transforms of the NCMs of two subsequent layers is known for these entries.
For the i.i.d. case of the present paper the formula is the same as that in \cite{Pe-Co:18,Pa:20} for the Gaussian case,
although its proof is quite different from that in \cite{Pa:20}. The formula
is given in Theorem \ref{t:ind}. Section 2 includes also certain numerical
results that illustrate and confirm our analytical results. Section 3 contains
the proof of Theorem \ref{t:ind} as well as necessary auxiliary results used in the proof.

Note that to make the paper self-consistent we present here certain facts
that have been already given in our previous paper \cite{Pa:20}. %Besides, in
%the subsequent paper \cite{Pa:21} we consider the case where the weight
%matrices are the Haar distributed orthogonal matrices, thereby justifying
%certain moments of the analysis of this case in \cite{Pe-Co:18}.
%, the proof of the formula requires an additional and, we
%believe, a non-trivial argument, presented%
%in Section 3 and justifying the coincidence of the
%limiting NCM of this case and that with random but independent of $X^{l}_{n_l}$
%certain matrices $\mathsf{D}_{n_l}^{l}$ (see Lemma \ref{l:inter}).

%It follows from the results of the section that the coincidence of the
%limiting eigenvalue distribution of matrices of two indicated cases is due
%to the form of dependence of $D_{n}^{l}$ on $(X_{n}^{l^{\prime
%}},b_{n}^{l^{\prime }}),\;l^{\prime }=l,l-1,...,1$ given by (\ref{D}), which
%is, so to say, "slow varying" and does not contribute to the leading term
%(the limit (\ref{ids})) of the corresponding eigenvalue distribution.

\section{Main Result and its Proof.}

As was already mentioned in Introduction, the goal of the paper is to extend
the results presented in \cite{Pe-Co:18} and justified in \cite{Pa:20} for
Gaussian weights and biases to those satisfying (\ref{bga}) -- (\ref{wga})
but not necessarily Gaussian. We will prove that in this
fairly general case the resulting eigenvalue distribution of random matrices
(\ref{JJM}) coincides with that of matrices of the same form where, however,
the analogs of diagonal matrices (\ref{D}), (\ref{kan}) are random
but \emph{independent} of $X^{l}$ (see Theorem \ref{t:main}, formulas (\ref{nukal}) -- (\ref{qlql}) in particular). Thus, we will comment first on the corresponding
result of random matrix theory (see, e.g. \cite%
{Pa:20,Pa-Sh:11,Co-Ha:14} and references therein).
%and follows, by the way, from our
%proof. % both known and proved in this paper.

Consider for every positive integer $n$: (i) the $n\times n$ random matrix $%
X_{n}$ with i.i.d. entries satisfying (\ref{wga}); (ii) positive definite
matrices $\mathsf{K}_{n}$ and $\mathsf{R}_{n}$
that are either deterministic or even random but independent of $X_{n}$ and such that their
Normalized Counting Measures $\nu _{\mathsf{K}_{n}}$ and $\nu _{\mathsf{R}%
_{n}}$ (see (\ref{ncm})) converge weakly as $%
n\rightarrow \infty $ (with probability 1 if random in an appropriate  probability space $\Omega_{\mathsf{KR}}$) to non-random measures $\nu _{\mathsf{K}}$ and $\nu _{%
\mathsf{R}}$. Set
\begin{equation}
\mathsf{M}_{n}=n^{-1}\mathsf{K}_{n}^{1/2}X_n\mathsf{R}_{n}X_{n}^{T}\mathsf{K}%
_{n}^{1/2}.  \label{mnbf}
\end{equation}%
According to random matrix theory (see, e.g. \cite{Pa:20,Pa-Sh:11,Co-Ha:14} and
references therein),
%and Theorem \ref{t:ind} below),
in this case and under certain conditions on $X_{n}$ the Normalized
Counting Measure $\nu _{\mathsf{%
M}_{n}}$ of $\mathsf{M}_{n}$ converges weakly with probability 1 as $%
n\rightarrow \infty $ (in the probability space $\Omega_{\mathsf{KR}} \times \Omega_{X}$, cf. (\ref{xinf})) to a non-random measure $\nu _{\mathsf{M}}$ which is
uniquely determined by the limiting measures $\nu _{\mathsf{K}}$ and $\nu _{%
\mathsf{R}}$ via a certain analytical procedure.
%(see, e.g. formulas (\ref{stm}) and (\ref{fhk}) -- (\ref{kh}) below).
We can write down this fact symbolically as %
\begin{equation}
\nu _{\mathsf{M}}=\nu _{\mathsf{K}}\diamond \nu _{\mathsf{R}}.  \label{opdia}
\end{equation}
%Moreover, it is easy to show that the role of $\nu _{K}$ and $\nu _{R}$
%can play any pair of non-negative measures having total mass 1 and a support
%belonging to the positive semiaxis. Thus,
In fact, the procedure defines a binary operation in the set of
non-negative measures with the total mass 1 and a support belonging to the
positive semi-axis (see more in Corollary \ref{c:conv}).
%and Remark \ref{r:diamond}).
%in particular, the operation can be repeated several
%times, every time with different pair $\nu _{K}$ and $\nu _{R}$.
%Note that the operation is just a version of the so-called multiplicative
%convolution of free probability theory \cite{Mi-Sp:17,Pe-Hi:00}, having the
%above random matrices as a basic analytic model.

The main result of works \cite{Pe-Co:18,Pa:20,Ya:20}, dealing with Gaussian
weights and biases and extended in this paper for any i.i.d. weights and
biases satisfying (\ref{wga}) and (\ref{bga})), is that the limiting
Normalized Counting Measure (\ref{ids}) of random matrices (\ref{JJM}),
where the role of $\mathsf{K}_{n}$ of (\ref{mnbf}) plays the matrix
defined by (\ref{D}) and (\ref{kan}) and depending on matrices $\{X^{l}\}^{L}_{l=1}$ of (\ref{wga}), is, nevertheless, equal to the "product" with respect the operation (\ref{opdia}) of
$L$ measures $\nu
_{K^{l}},\;l=1,...,L$ that
%are indicated in Theorem \ref{t:main} and
are the limiting Normalized Counting Measures of random matrices of (\ref{D})
and (\ref{kan}).
%\ given in (\ref{nukal}) -- (\ref{qlql}).
%that do not depend on $X^{l}$'s of (\ref{wga}),
%see (\ref{kab1}) and the subsequent text.

Note that the operation (\ref{opdia}) is closely related to the so-called multiplicative
convolution of free probability theory \cite{Mi-Sp:17}, having the
above random matrices as a basic analytic model.

Thus we will begin with a proof of this assertion which, we believe, is of independent interest for random matrix theory.

We follow \cite{Pe-Co:18,Pa:20} and confine ourselves to the case (\ref{nequ}%
) where all the weight matrices and bias vectors are of the same size $n$,
see (\ref{nequ}). The general case is essentially the same (see, e.g. Remark %
\ref{r:rra} (ii)).

In addition, we assume for the sake of simplicity of
subsequent formulas that
%the random matrices $X^{l}$ of (\ref{wga}) and the
%random vectors $b^{l}$ of (\ref{bga}) are i.i.d. in $l$ and we set
\begin{equation}
w^{2}=1  \label{wm4}
\end{equation}
in (\ref{wga}), thus, fixing the scale of the spectral axis. The general case
follows from the above by a simple change of variables.
%It is argued in \cite{Pe-Co:18} that the most
%interesting case from the point of view of
%machine learning is that where $%
%n_{l}=n,l=1,2,...,L$, hence the weights $W^{l},\;l=1,2,...,L$ are the i.i.d.
%$n\times n$ random matrices with i.i.d. independent entries of zero mean and
%unit variance and the biases $b^{l},$ $l=1,2,...,L$ are i.i.d. $n$%
%-dimensional random vectors with i.i.d. components of zero %mean and variance
%$\sigma _{b}^{2}>0$.
%We will begin with an assertion which show that the formula

\begin{theorem}
\label{t:ind} Consider for every positive integer $n$ the $n\times n$ random
matrix
\begin{equation}
\mathcal{M}_{n}=n^{-1}S_{n}X_{n}^{T}K_{n}X_{n}S_{n},  \label{mncal}
\end{equation}%
where:

\smallskip (a) $S_{n}$ is a positive definite $n\times n$ matrix such that
\begin{equation}
\sup_{n}n^{-1}\mathrm{Tr}R_{n}^{2}=r_{2}<\infty ,\;R_{n}=S_{n}^{2},
\label{r2}
\end{equation}%
and
\begin{equation}
\lim_{n\rightarrow \infty }\nu _{R_{n}}=\nu _{R},\;\nu _{R}(\mathbb{R}%
_{+})=1,  \label{nur}
\end{equation}%
where $\nu _{R_{n}}$ is the Normalized Counting Measure of $R_{n}$, $\nu
_{R} $ is a non-negative measure not concentrated at zero and $%
\lim_{n\rightarrow \infty }$ denotes the weak convergence of probability
measures (see \cite{Sh:96}, Section III.1);

\smallskip (b) $X_{n}$ is the $n\times n$ random matrix
\begin{equation}
X_{n}=\{X_{j\alpha }\}_{j,\alpha =1}^{n},\;\mathbf{E}\{X_{j\alpha }\}=0,\;%
\mathbf{E}\{X_{j\alpha }^{2}\}=1,\;\mathbf{E}\{X_{j\alpha
}^{4}\}=m_{4}<\infty  \label{Xn}
\end{equation}%
with jointly  i.i.d. entries (cf. (\ref{wga})), $b_{n}$ is the $n$%
-component random vector
\begin{equation}
b^{(n)}=\{b_{j}\}_{j=1}^{n},\;\mathbf{E}\{b_{j}\}=0,\;\mathbf{E}%
\{b_{j}^{2}\}=\sigma _{b}^{2}  \label{b}
\end{equation}%
with jointly  i.i.d. components (cf. (\ref{bga})) and for all $n$ the matrix $X_{n}$
and the vector $b_{n}$ viewed as defined on the same probability space%
\begin{equation}
\Omega _{X,b}=\Omega _{X}\times \Omega _{b},  \label{oxb}
\end{equation}%
where $\Omega _{X}$ and $\Omega _{b}$ are generated by the analogs of (\ref{xinf}) and (\ref%
{binf});

\smallskip (c) $K_{n}$ is the diagonal random matrix
\begin{equation}
K_{n}=\{\delta _{jk}K_{jn}\}_{j,k=1}^{n},\;K_{jn}=\left( \varphi ^{\prime }%
\Big(n^{-1/2}\sum_{a=1}^{n}X_{j\alpha }x_{\alpha n}+b_{j}\Big)\right) ^{2},
\label{Dn}
\end{equation}%
where $\varphi :\mathbb{R}\rightarrow \mathbb{R}$ is
continuously differentiable, is not identically constant and
such that (cf. (\ref{phi1}))
\begin{equation}
\sup_{x\in \mathbb{R}}|\varphi (x)|=\Phi _{0}<\infty ,\;\sup_{x\in \mathbb{R}%
}|\varphi ^{\prime }(x)|=\Phi _{1}<\infty ,  \label{Phi}
\end{equation}
$x_{n}=\{x_{\alpha n}\}_{\alpha =1}^{n}$ is a collection of real numbers
such that there exists the limit%
\begin{equation}
q=\lim_{n\rightarrow \infty }q_{n}>\sigma
_{b}^{2}>0,\;q_{n}=n^{-1}\sum_{\alpha =1}^{n}(x_{\alpha n})^{2}+\sigma
_{b}^{2}  \label{qqn}
\end{equation}
and
\begin{equation}
\lim_{n\rightarrow \infty }n^{-2}\sum_{\alpha =1}^{n}(x_{\alpha n})^{4}=0.
\label{xbou}
\end{equation}

Then the Normalized Counting Measure (NCM) $\nu _{\mathcal{M}_{n}}$ of $%
\mathcal{M}_{n}$ converges weakly with probability 1 in $\Omega _{X,b}$ of (%
\ref{oxb}) to a non-random measure $\nu _{\mathcal{M}}$, such that
%$\nu _{\mathcal{M}}(\mathbb{R_+})=1$%
\begin{equation}
\nu _{\mathcal{M}}(\mathbb{R_+})=1,  \label{numr1}
\end{equation}%
and that its Stieltjes transform $f_{\mathcal{M}}$ can be obtained from the
formulas%
\begin{align}
f_{\mathcal{M}}(z):& =\int_{0}^{\infty }\frac{\nu _{\mathcal{M}}(d\lambda )}{%
\lambda -z}  \notag \\
& =\int_{0}^{\infty }\frac{\nu _{R}(d\lambda )}{k(z)\lambda -z}%
=-z^{-1}+z^{-1}h(z)k(z), \; z \in \mathbb{C} \setminus \mathbb{R}_+, \label{fhk}
\end{align}%
where the pair ($h,k$) is the unique solution of the system of functional
equations%
\begin{equation}
h(z)=\int_{0}^{\infty }\frac{\lambda \nu _{R}(d\lambda )}{k(z)\lambda -z}, \; z \in \mathbb{C} \setminus \mathbb{R}_+
\label{hk}
\end{equation}%
\begin{equation}
k(z)=\int_{0}^{\infty }\frac{\lambda \nu _{K}(d\lambda )}{h(z)\lambda +1}, \; z \in \mathbb{C} \setminus \mathbb{R}_+,
\label{kh}
\end{equation}%
in which $\nu _{R}$ is defined in (\ref{nur}), and %$\nu _{K}$ is the
%probability distribution of
%$(\varphi ^{\prime }((q-\sigma _{b}^{2})^{1/2}\gamma
%+b_{1}))^{2}$ with $q$ of (\ref{qqn}) and the standard Gaussian random
%variable $\gamma $, i.e.,
\begin{equation}
\nu _{K}(\Delta )=\mathbf{P}\Big\{\big(\varphi ^{\prime }((q-\sigma
_{b}^{2})^{1/2}\gamma +b_{1})\big)^{2}\in \Delta \Big\},\;\Delta \in \mathbb{R},
\label{nuka}
\end{equation}%
where $q$ is given by (\ref{qqn}), $\gamma $ is the standard Gaussian random
variable and we are looking for a solution of (\ref{hk}) -- (\ref{kh})
%variable $\gamma $and we are looking for a solution of (\ref{hk}) -- (\ref{kh}) in the class
in the class of pairs $(h,k)$ of functions analytic outside
the closed positive semi-axis, continuous and positive on the negative semi-axis
and
such that\begin{equation}
\Im h(z)\Im z>0,\;\Im z\neq 0;\;\sup_{\xi \geq 1}\xi h(-\xi )\in (0,\infty ).
\label{hcond}
\end{equation}
\end{theorem}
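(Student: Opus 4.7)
The plan is to adapt the Marchenko--Pastur/Pastur resolvent technique \cite{Ma-Pa:67,Pa:72,Pa-Sh:11}: derive an approximate system of identities for the normalized traces
$$
f_n(z) := n^{-1}\mathrm{Tr}(\mathcal{M}_n - z)^{-1},\qquad h_n(z) := n^{-1}\mathrm{Tr}\bigl(R_n(\mathcal{M}_n - z)^{-1}\bigr),
$$
combining the rank-one/Sherman--Morrison decomposition along the rows of $X_n$ with the concentration of quadratic forms in independent entries, and then pass to the limit, using uniqueness of the solution in the class (\ref{hcond}) to identify $f_{\mathcal{M}}$.

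I would first establish three preliminaries. \emph{Self-averaging:} using the martingale filtration generated by the independent pairs $(X^j,b_j)_{j=1}^n$, the rank-one bound for the trace of the resolvent shows that modifying one such pair (and hence the corresponding $K_{jj}$, a smooth image by (\ref{Phi})) changes $f_n(z)$ by at most $O((n|\Im z|)^{-1})$; Azuma--Hoeffding plus Borel--Cantelli then give $f_n-\mathbf{E}\{f_n\}\to 0$ a.s.\ in $\Omega_{X,b}$, and similarly for $h_n$, so it is enough to treat the expectations. \emph{Convergence of $\nu_{K_n}$:} the scalars $y_j^n := n^{-1/2}\sum_\alpha X_{j\alpha}x_{\alpha n} + b_j$ are independent across $j$, and the Lindeberg CLT under (\ref{Xn}),(\ref{qqn}),(\ref{xbou}) shows each converges in law to $(q-\sigma_b^2)^{1/2}\gamma + b_1$; composition with the bounded continuous map $(\varphi')^2$ and a LLN across $j$ give $\nu_{K_n}\to\nu_K$ weakly a.s. \emph{A priori bounds:} $\|(\mathcal{M}_n-z)^{-1}\|\le|\Im z|^{-1}$ together with (\ref{r2}) gives tightness and lets us restrict to subsequential limits.

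Next I derive the functional equations. Writing $\mathcal{M}_n = n^{-1}\sum_{j=1}^n K_{jj}\,\xi_j\xi_j^T$ with $\xi_j := S_n(X^j)^T$, and setting $\mathcal{M}_n^{(j)} := \mathcal{M}_n - n^{-1}K_{jj}\xi_j\xi_j^T$, $G^{(j)}(z):=(\mathcal{M}_n^{(j)}-z)^{-1}$, the key structural observation is that $\mathcal{M}_n^{(j)}$ is independent of the pair $(X^j,b_j)$, since for every $k\neq j$ both $K_{kk}$ and $\xi_k$ depend only on $(X^k,b_k)$. Starting from $\mathrm{Tr}(\mathcal{M}_n G) = n + z\,\mathrm{Tr}\,G$ and applying Sherman--Morrison to each rank-one summand, one obtains
$$
z f_n(z) + 1 \;=\; \frac{1}{n}\sum_{j=1}^n \frac{K_{jj}\,\alpha_j(z)}{1 + K_{jj}\,\alpha_j(z)},\qquad \alpha_j(z) := n^{-1}\xi_j^T G^{(j)}(z)\xi_j.
$$
A parallel manipulation on the companion matrix $N_n := n^{-1}X_n^T K_n X_n$, combined with the similarity $(\mathcal{M}_n - z)^{-1} = S_n^{-1}(N_n - zR_n^{-1})^{-1}S_n^{-1}$ which gives $h_n(z) = n^{-1}\mathrm{Tr}(N_n - zR_n^{-1})^{-1}$, produces in the limit the companion equation $h(z) = \int \lambda\,d\nu_R(\lambda)/(k(z)\lambda - z)$ with $k(z):=\int\lambda\,d\nu_K(\lambda)/(1+\lambda h(z))$.

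The main obstacle is establishing $\alpha_j(z) - h_n(z) \to 0$ in $L^2$ uniformly in $j$, despite the fact that $K_{jj}$ is not independent of $X^j$. In the classical MP situation the matrix $S_nG^{(j)}S_n/n$ of the quadratic form $\alpha_j$ is measurable with respect to $\sigma\{(X^k,b_k):k\neq j\}$ and the standard quadratic-form lemma gives $\mathbf{E}\{(\alpha_j - n^{-1}\mathrm{Tr}(S_nG^{(j)}S_n))^2\}=O(n^{-1})$ using only the fourth moment in (\ref{Xn}); here the difficulty is the extra factor $K_{jj}$, but $K_{jj}$ depends on $X^j$ solely through the single scalar projection onto the unit vector $\hat x_n := x_n/\|x_n\|$. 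Decomposing $X^j$ into its $\hat x_n$-component $c_j$ and its orthogonal complement $Y^j$, the contribution of the rank-one piece $c_j\hat x_n^T$ to $\alpha_j(z)$ is $O(n^{-1}|\Im z|^{-1})$ in $L^2$ (by the operator-norm bound on $G^{(j)}$ and $\|\hat x_n\|=1$), while conditioning on $c_j$ — equivalently on $K_{jj}$ — only weakly perturbs the law of $Y^j$, so the orthogonal piece still concentrates by the standard estimate, uniformly in $c_j$. With this concentration in hand, the sum collapses in the limit to $\int \lambda h(z)/(1+\lambda h(z))\,d\nu_K(\lambda) = h(z)k(z)$, giving $zf+1 = hk$ and hence (\ref{fhk}). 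Uniqueness of $(h,k)$ in the class (\ref{hcond}) — a consequence of the Nevanlinna property of $h$ together with a monotonicity argument on $(-\infty,0)$ extended by analyticity, as in \cite{Pa-Sh:11} — then identifies $f_{\mathcal{M}}$, and a.s.\ weak convergence of $\nu_{\mathcal{M}_n}$ to $\nu_{\mathcal{M}}$ follows.
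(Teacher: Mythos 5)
Your overall strategy mirrors the paper's: self-averaging via a martingale-difference argument, the rank-one decomposition along the rows of $X_n$ plus Sherman--Morrison, quadratic-form concentration, a Lindeberg-type CLT to identify $\nu_K$, and uniqueness of $(h,k)$ in the class (\ref{hcond}) to pin down the limit. The identity $zf_n(z)+1 = n^{-1}\sum_j K_{jj}\alpha_j/(1+K_{jj}\alpha_j)$ is correct, and your companion-matrix route to the $h$-equation is an acceptable variant of the paper's manipulation (which instead solves (\ref{cg30}) for $G$ and takes $n^{-1}\mathrm{Tr}(R\,\cdot)$).

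The problem is the step you call ``the main obstacle'': it is not an obstacle, and the decomposition of $X^j$ along $\hat x_n$ that you introduce to deal with it is both unnecessary and, as stated, contains a real gap. The quadratic form $\alpha_j(z)=n^{-1}(S_nG^{(j)}(z)S_nX_j,X_j)$ has kernel $n^{-1}S_nG^{(j)}S_n$, which is measurable with respect to $\sigma\{(X_k,b_k):k\ne j\}$ and hence $X_j$-independent. The standard fourth-moment variance bound for quadratic forms in i.i.d.\ entries (Proposition \ref{p:facts} (iv) of the paper) therefore gives $\mathbf{E}\{|\alpha_j-n^{-1}\mathrm{Tr}\,S_nG^{(j)}S_n|^2\}=O(1/n)$ without any reference to $K_{jj}$: the dependence of $K_{jj}$ on $X^j$ plays no role in the concentration of $\alpha_j$. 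The factor $K_{jj}$ enters only through $K_{jj}/(1+K_{jj}\alpha_j)$, and this is handled purely algebraically. After restricting (by analyticity and Vitali's theorem) to $z=-\xi<0$, one has $\alpha_j(-\xi)\ge 0$, $0\le K_{jj}\le\Phi_1^2$ and both denominators $\ge 1$, hence
\begin{equation*}
\Big|\frac{K_{jj}\alpha_j}{1+K_{jj}\alpha_j}-\frac{K_{jj}\overline h_n}{1+K_{jj}\overline h_n}\Big|
=\frac{K_{jj}\,|\alpha_j-\overline h_n|}{(1+K_{jj}\alpha_j)(1+K_{jj}\overline h_n)}\le\Phi_1^2\,|\alpha_j-\overline h_n|,
\end{equation*}
and since $\overline h_n$ is deterministic the remaining average $n^{-1}\sum_j\mathbf{E}\{K_{jj}\overline h_n/(1+K_{jj}\overline h_n)\}$ depends only on the marginal law of $K_{jn}$ (the same for all $j$), which is exactly what the CLT preliminary supplies. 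This is the content of (\ref{t11})--(\ref{toc1}) and Lemma \ref{l:dif}.

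By contrast, your decomposition $X^j=c_j\hat x_n+Y^j$ rests on the claim that ``conditioning on $c_j$ only weakly perturbs the law of $Y^j$.'' For general i.i.d.\ entries with only a finite fourth moment this is unjustified: $c_j$ and $Y^j$ are uncorrelated but typically not independent, and the quantitative closeness of the conditional law of $Y^j$ to its unconditional law is a rotational-invariance style statement that holds cleanly only in the Gaussian case. If that conditional concentration were actually needed, the argument would be incomplete precisely in the generality (non-Gaussian, finite fourth moment) the theorem aims for. It is not needed; dropping the decomposition and using the algebraic bound above yields a complete proof.
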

The proof of the theorem is given in the next section. Here are the remarks.

\begin{remark}
\label{r:rra} (i) To apply Theorem \ref{t:ind} to the proof of Theorem \ref%
{t:main} we need a version of the former in which its "parameters",
i.e., $R_{n}$, hence $S_{n}$, in (\ref{mncal}) -- (\ref{nur}) and (possibly)
$\{x_{\alpha n}\}_{\alpha =1}^{n}$ in (\ref{Dn}) are
random, defined for all $n$ on the same probability space
$\Omega _{R,x}$,
independent of $\Omega _{X,b}$ of (\ref{oxb}) and satisfy conditions (\ref{r2}%
) -- (\ref{nur}) and (\ref{qqn}) -- (\ref{xbou}) with probability 1 in $%
\Omega _{R,x}$, i.e., on a certain subspace (cf. (\ref{px0}))
\begin{equation}  \label{px}
\overline{\Omega }_{R,x}\subset \Omega _{R,x},\;\mathbf{P}(\overline{\Omega
_{R,x}})=1.
\end{equation}
In this case Theorem \ref{t:ind} is valid with probability 1 in $\Omega
_{X,b}\times \Omega _{R,x}$. The corresponding argument is standard in random
matrix theory, see, e.g. Section 2.3 of \cite{Pa-Sh:11} and Remark \ref%
{r:penn} (iii).
%for its version in the present context.
%Let $\overline{\Omega }_{Xb}(\omega _{Rx})\subset $ $%
%\Omega _{Xb},\;\mathbf{P}(\overline{\Omega }_{Xb}(\omega _{Rx}))=1$ be the
%subspace of $\Omega _{Xb}$ of (\ref{oxb}) on which the theorem holds for %a given realization $\omega _{Rx}\in \overline{\Omega }_{Rx}$ of the
%parameters. Then it follows from the Fubini theorem that Theorem \ref{t:ind}
%holds on a certain $\overline{\Omega }\subset \Omega _{Rx}\times \Omega
%_{Xb},\;\mathbf{P}(\overline{\Omega })=1$.
The obtained limiting NCM\ $\nu _{\mathcal{M}} $ is random in general due to
the (possible) randomness of $\nu _{R}$ and $q$ in (\ref{nur}) and (\ref{qqn}%
) which are defined on their "own" probability space
$\Omega _{R,x}$ distinct from $\Omega_{X,b}$. %We will
%use this remark in the proof of Theorem \ref{t:main}.
Note, however, that in the case of Theorem \ref{t:main} the corresponding
analogs of $\nu _{R}$ and $q$ are not random, thus the limiting measure $\nu
_{M^{L}}$ is  non-random as well.

(ii) Repeating almost literally the proof of the theorem, one can treat a
more general case where $S_{m}$ is $m\times m$ positive definite matrix
satisfying (\ref{r2}) -- (\ref{nur}), $K_{n}$ is the $n\times n$ diagonal
matrix given by (\ref{Dn}) -- (\ref{qqn}), $X_{n}$ is a $n\times m$ random
matrix satisfying (\ref{wga}) and (cf. (\ref{asf1}))
\begin{equation*}
\lim_{m\rightarrow \infty ,n\rightarrow \infty }m/n=c\in (0,\infty ).
\end{equation*}%
%
% The corresponding modifications of the theorem are given
%in Remark \ref{r:rra1}(ii).
In this case the Stieltjes transform $f_{\mathcal{M}}$ of the limiting NCM
is again uniquely determined by three formulas where the first
and the second are (\ref{fhk}) and (\ref{hk}) with $k(z)$  replaced by $%
k(z)c^{-1}$ and the third coincides with (\ref{kh}).

(iii) Theorem \ref{t:ind} is proved above for bounded $\varphi$ and $\varphi'$
(see (\ref{Dn}) and (\ref{Phi})) and for the entries of $X_n$  and  the components of $b$ having the finite fourth and the second moment
respectively (see (\ref{Xn}) -- (\ref{b}). However, assuming the finiteness of these moments of sufficiently large order, it is possible to extend the theorem to the case where $\varphi$ and $\varphi'$ are just polynomially bounded. It suffices to apply to the matrices $K_n^l$ of (\ref{Dn}) a truncation procedure similar to that used for the matrix $R_n$ of (\ref{r2}), see formula (\ref{rrho}) and the subsequent text.
Correspondingly, Theorem \ref{t:main} can also be extended similarly, however in this case the maximal order of  finite moments depends on $L$.

(iv) The theorem provides the justification of the statistical independence of the random argument $n^{-1/2} (X_n x_n)_j + b_j$ of  (\ref{Dn}) and the weight matrix $n^{-1/2}X_n$ in (\ref{mncal}) in the infinite width limit $n \to \infty$. The assumption has been  used in a number of works (see e.g.
\cite{Pe-Co:18,Ma-Co:16,Po-Co:16,Pe-Ba:17,Sc-Co:17}) and is known as the mean field approximation, since it has certain similarity to the mean field approximation in
statistical mechanics and related fields.
\end{remark}

Theorem \ref{t:ind} yields an explicit form of the binary operation
(\ref{opdia}) via equations (\ref{fhk}) -- (\ref{kh}). Following \cite{Pa:20}%
, it is convenient to write the equations in a compact form similar to that
of free probability theory \cite{Mi-Sp:17}. This, in particular,
makes explicit the symmetry and the transitivity of the operation.

\begin{corollary}
\label{c:conv} Let $\nu _{K},\;\nu _{R}$ and $\nu _{\mathcal{M}}$ be the
probability measures (non-negative measures of the total mass 1)
entering (\ref{fhk}) -- (\ref{kh}) and $m_{K},\;m_{R}$ and $m_{\mathcal{M}}$
be their moment generating functions (see (\ref{mgen}) -- (\ref{stmg})).
Then their functional inverses $z_{K},\;z_{R}$ and $z_{\mathcal{M}}$ are
related as follows
\begin{equation}
z_{\mathcal{M}}(m)=z_{K}(m)z_{R}(m)m^{-1},  \label{mconv}
\end{equation}%
or, writing
\begin{equation}
z_{A}(m)=m\sigma _{A}(m),\;A=K,R,\mathcal{M,} \label{siz}
\end{equation}%
we obtain the simple "algebraic" form
\begin{equation}
\sigma _{\mathcal{M}}(m)=\sigma _{K}(m)\sigma _{R}(m)  \label{sconv}
\end{equation}
of the operation $\diamond$ of (\ref{opdia}).
\end{corollary}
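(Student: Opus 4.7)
The plan is to extract from the system (\ref{fhk}) -- (\ref{kh}) three parallel scalar identities, each expressing $h(z)k(z)$ as the negative of the moment generating function of one of the measures $\nu_{\mathcal{M}}, \nu_R, \nu_K$ evaluated at an appropriate point, and then to invert them simultaneously. The constant bridge between the Stieltjes-transform data supplied by Theorem~\ref{t:ind} and the moment-generating-function data appearing in the corollary is the relation (\ref{stmg}) in the equivalent form $w f_A(w) = -(1 + m_A(1/w))$.

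First, I would rewrite (\ref{fhk}) as $h(z)k(z) = 1 + z f_{\mathcal{M}}(z)$ and apply (\ref{stmg}) at $w = z$ to obtain
$$h(z)k(z) = -m_{\mathcal{M}}(1/z).$$
Next, in (\ref{hk}) the elementary decomposition $\lambda/(k(z)\lambda - z) = k(z)^{-1} + (z/k(z)^2)/(\lambda - z/k(z))$ gives $h(z)k(z) = 1 + (z/k(z)) f_R(z/k(z))$, which collapses under (\ref{stmg}) (applied to $\nu_R$ at $w = z/k(z)$) to
$$h(z)k(z) = -m_R(k(z)/z).$$
An analogous partial-fraction manipulation in (\ref{kh}), using $\lambda/(h(z)\lambda + 1) = h(z)^{-1} - h(z)^{-2}/(\lambda + h(z)^{-1})$, produces $h(z)k(z) = 1 - h(z)^{-1} f_K(-1/h(z))$, which by (\ref{stmg}) applied to $\nu_K$ at $w = -1/h(z)$ becomes
$$h(z)k(z) = -m_K(-h(z)).$$

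Setting $m := -h(z)k(z)$, these three identities read $m = m_{\mathcal{M}}(1/z) = m_R(k(z)/z) = m_K(-h(z))$. Applying the functional inverses $z_A$ termwise gives $z_{\mathcal{M}}(m) = 1/z$, $z_R(m) = k(z)/z$ and $z_K(m) = -h(z)$; multiplying the last two then yields $z_K(m)\, z_R(m) = -h(z)k(z)/z = m \cdot z_{\mathcal{M}}(m)$, which is precisely (\ref{mconv}). The multiplicative form (\ref{sconv}) follows at once upon substituting the normalization $z_A(m) = m\, \sigma_A(m)$.

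The one point that requires care, and what I would regard as the main technical obstacle, is the simultaneous validity of the three functional inversions. Each $m_A$ is analytic in a neighborhood of $0$ with $m_A(0) = 0$ and $m_A'(0) = \int_0^\infty \lambda\, \nu_A(d\lambda) > 0$ (since none of the measures is the Dirac mass at $0$), so the analytic inverse function theorem provides an analytic inverse $z_A$ near the origin. To invert at a common point $m$, I would use (\ref{hcond}) together with the integral representations (\ref{hk}) -- (\ref{kh}) to show that as $z \to -\infty$ along the real axis, the three arguments $1/z$, $k(z)/z$ and $-h(z)$ simultaneously tend to $0$; the resulting identity between analytic functions then extends by uniqueness to the entire common domain of analyticity.
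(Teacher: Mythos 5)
Your proposal is correct and follows essentially the same route as the paper: derive the three identities $-h(z)k(z)=m_{\mathcal{M}}(1/z)=m_R(k(z)/z)=m_K(-h(z))$ from (\ref{fhk})--(\ref{kh}) via (\ref{stmg}), then invert termwise and multiply. You spell out the partial-fraction steps the paper leaves implicit and add a justification of the local invertibility of the $m_A$, but the core argument is identical to the paper's (\ref{rels}) and the subsequent two-line computation.
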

%%%%%%%%%%%%%%%%%%%%%%%%%%%%%%%%%%%%%%%%%%%%%%%%%%%%%%%%%%%%%%%%%%%
\begin{proof}
It follows from (\ref{hk}) -- (\ref{kh}) and (\ref{stmg}) that%
\begin{align}
& m_{K}(-h(z))=-h(z)k(z),\;m_{R}(k(z)z^{-1})=-h(z)k(z),  \notag \\
& \hspace{1cm}m_{\mathcal{M}}(z^{-1})=-h(z)k(z).  \label{rels}
\end{align}%
Now the first and the third relations (\ref{rels}) yield $%
m_{K}(-h(z^{-1}))=m_{\mathcal{M}}(z)$, hence $z_{K}(m)=-h(z_{\mathcal{M}%
}^{-1}(m))$, and then the second and the third relations yield $%
m_{R}(k(z^{-1})z)=m_{\mathcal{M}}(z)$, hence $z_{R}(\mu )=k(z_{\mathcal{M}%
}^{-1}(m))z_{\mathcal{M}}(m)$. Multiplying these two relations and using
once more the third relation in (\ref{rels}), we obtain%
\begin{equation*}
z_{K}(m)z_{R}(m)=-k(z_{\mathcal{M}}^{-1}(m))h(z_{\mathcal{M}}^{-1}(m))z_{%
\mathcal{M}}(m)=z_{\mathcal{M}}(m)m
\end{equation*}%
and (\ref{mconv}) -- (\ref{sconv}) follows.
\end{proof}
%%%%%%%%%%%%%%%%%%%%%%%%%%%%%%%%%%%%%%%%%%%%%%%%%%%%%%%%%%%%%%%%%%%%%%%
\begin{remark}
\label{r:diamond}  In the case of rectangular matrices $X_{n}$ in (\ref{mncal}%
), described in Remark \ref{r:rra} (ii), the analogs of (\ref{mconv}) and (%
\ref{sconv}) are
\begin{equation}
z_{\mathcal{M}}(m)=z_{K}(cm)z_{R}(cm)m^{-1},\;\sigma _{\mathcal{M}%
}(m)=c^{2}\sigma _{K}(cm)\sigma _{R}(cm).  \label{rect}
\end{equation}
%(ii) It is also worth noting, following again free probability theory \cite%
%{Ch-Co:18,Mi-Sp:17}, that formulas (\ref{hk}) -- (\ref{kh}) as well as (\ref%
%{mconv} -- (\ref{sconv}) can be viewed as those determining a binary operation %acting on
%measures with support in $[0,\infty )$ and producing a probability measure $%
%\nu _{12}$ out of measures $\nu _{1}$ and $\nu _{2}$ (not necessarily those
%%describing limiting eigenvalue distribution of certain large size matrices).
%Indeed, given measures $\nu _{1}$ and $\nu _{2}$, we find via \ (\ref{fh})
%-- (\ref{kh}) with $\nu _{K}=\nu _{1},\;\nu _{2}=\nu _{R}$ a unique measure $%
%\nu _{\mathcal{M}}=\nu _{12}$ whose Stieltjes transform is given by (\ref{fh}). %The operation is denoted
%\begin{equation}
%\nu _{12}=\nu _{1}\diamond \nu _{2}  \label{conv}
%\end{equation}%
%and is a version of the free multiplicative convolution %\cite{Ch-Co:18,Mi-Sp:17}.
%(ii) It follows from (\ref{sconv}) that in is the algebraic form
\end{remark}

\medskip
We will now formulate and prove our main result.

\begin{theorem}
\label{t:main} Let $M_{n}^{L}$ be the random matrix (\ref{JJM}) defined by (%
\ref{rec}) -- (\ref{D}) and (\ref{nequ}), where the weights $%
\{W^{l}\}_{l=1}=\{n^{-1/2}X^{l}\}_{l=1}$ and biases $\{b^{l}\}_{l=1}$ are
i.i.d. in $l$ with i.i.d. entries and components satisfying (\ref{bga}) -- (%
\ref{wga}) and the input vector $x^{0}$ (\ref{x0}) (deterministic or random)
is such that there exists a finite limit%
\begin{equation}
q^{1}:=\lim_{n\rightarrow \infty }q_{n}^{1}>\sigma
_{b}^{2}>0,\;\;q_{n}^{1}=n^{-1}\sum_{j_{0}=1}^{n}(x_{j_{0}}^{0})^{2}+\sigma
_{b}^{2}  \label{q0}
\end{equation}%
and
\begin{equation}
\lim_{n\rightarrow \infty }n^{-2}\sum_{j_{0}=1}^{n}(x_{j_{0}}^{0})^{4}=0.
\label{qLin}
\end{equation}%
Assume also that the activation function $\varphi $ in (\ref{rec}) is
continuously differentiable, $\varphi ^{\prime }$ is not
zero identically and%
\begin{equation}
\sup_{t\in \mathbb{R}}|\varphi (t)|=:\Phi _{0}<\infty ,\;\sup_{t\in \mathbb{R}%
}|\varphi ^{\prime }(t)|=:\Phi _{1}<\infty .\;\;  \label{phi1}
\end{equation}%
Then the Normalized Counting Measure (NCM) $\nu _{M_{n}^{L}}$ of $M_{n}^{L}$
(see (\ref{ncm})) converges weakly with probability 1 in the probability
space $\Omega _{L}$ of (\ref{oml}) to a non-random limit
\begin{equation}
\nu _{M^{L}}=\lim_{n \to \infty}   \nu _{M^{L}_n},  \label{numl}
\end{equation}
where
\begin{equation}
\nu _{M^{L}}=\nu _{K^{L}}\diamond \cdots \diamond \nu
_{K^{1}},  \label{nucon}
\end{equation}%
the operation "$\diamond $" is defined in (\ref{opdia}) (see also
%Lemma 3.5 in \cite{Pa:20} and
Remark \ref{r:penn} (ii) Corollary \ref{c:conv} below) and
\begin{equation}
\nu _{K^{l}}(\Delta )=\mathbf{P}\Big\{\big(\varphi ^{\prime }((q^{l}-\sigma
_{b}^{2})^{1/2}\gamma +b_{1})\big)^{2}\in \Delta \Big\},\;\Delta \in \mathbb{R}%
,\;l=1,...,L,  \label{nukal}
\end{equation}%
with the standard Gaussian random variable $\gamma $ and $q^{l}$ determined
by the recurrence
\begin{equation}
q^{l}=\int \varphi ^{2}\Big(\gamma (q^{l-1}-\sigma _{b}^{2})^{1/2}+b\Big)%
\Gamma (d\gamma )F(db),\;l\geq 2,  \label{qlql}
\end{equation}%
where $\Gamma (d\gamma )=(2\pi )^{1/2}e^{-\gamma ^{2}/2}d\gamma $ is the
standard Gaussian measure, $F$ is the probability law of $b_{1}^{l}$ in (\ref%
{bga}) and $q^{1}$ is given by (\ref{q0}).
\end{theorem}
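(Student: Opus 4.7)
The plan is to prove Theorem \ref{t:main} by induction on the layer index $l = 1, \dots, L$, invoking Theorem \ref{t:ind} (in its random-parameter form, Remark \ref{r:rra}(i)) at each step. The recursion underlying this induction is the identity
\begin{equation*}
M_n^{l+1} = n^{-1} D^{l+1} X^{l+1} M^l_n (X^{l+1})^T D^{l+1},
\end{equation*}
which follows from $J_n^{l+1} = D^{l+1} W^{l+1} J_n^l$ and $M^l_n = J^l_n (J^l_n)^T$. By the cyclic trace property, $M_n^{l+1}$ has the same nonzero eigenvalues as $n^{-1}(M^l_n)^{1/2}(X^{l+1})^T K^{l+1} X^{l+1}(M^l_n)^{1/2}$, which is exactly the form of $\mathcal{M}_n$ in (\ref{mncal}) with $S_n = (M^l_n)^{1/2}$, $R_n = M^l_n$, $K_n = K^{l+1}$, and with the input vector $x_n$ of (\ref{Dn}) played by the previous activation $x^l$.

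The inductive hypothesis at stage $l \ge 0$ (taking $M^0_n := I$) is that, with probability $1$ in $\Omega_l$ of (\ref{oml}): (a) $\nu_{M^l_n}$ converges weakly to a non-random measure $\nu_{M^l}$, not concentrated at zero, consistent with (\ref{nucon}); (b) $\sup_n n^{-1}\mathrm{Tr}(M^l_n)^2 < \infty$; and (c) $q^{l+1}_n \to q^{l+1} > \sigma_b^2$ while $n^{-2}\sum_j (x^l_j)^4 \to 0$, where $q^{l+1}$ is defined by the recurrence (\ref{qlql}) with base $q^1$ from (\ref{q0}). The base case $l = 0$ holds trivially: $\nu_{M^0} = \delta_1$, and (c) is precisely the assumption (\ref{q0})--(\ref{qLin}) on $x^0$.

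For the inductive step, condition on a realization from a subset of full probability in $\Omega_l$. The layer-$(l+1)$ ingredients $X^{l+1}$ and $b^{l+1}$ are defined on $\Omega^{l+1}$, independent of $\Omega_l$. Items (a)--(c) of the IH verify exactly the hypotheses (\ref{r2})--(\ref{xbou}) of Theorem \ref{t:ind} with $\nu_R = \nu_{M^l}$ and $q = q^{l+1}$, so Remark \ref{r:rra}(i) delivers weak almost-sure convergence of $\nu_{M^{l+1}_n}$ on $\Omega^{l+1}$, hence on $\Omega_{l+1}$, to the non-random measure $\nu_{M^{l+1}} = \nu_{K^{l+1}} \diamond \nu_{M^l}$, with $\nu_{K^{l+1}}$ given by (\ref{nuka}) evaluated at $q = q^{l+1}$---identical to (\ref{nukal}).

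The main technical obstacle is verifying the inductive propagation of (c). The fourth-moment decay $n^{-2}\sum_j (x^{l+1}_j)^4 \le \Phi_0^4/n \to 0$ is immediate from $|\varphi| \le \Phi_0$ in (\ref{phi1}). For the limit $q^{l+2}_n \to q^{l+2}$, write $x^{l+1}_j = \varphi(y^{l+1}_j)$ with $y^{l+1}_j = n^{-1/2}\sum_\alpha X^{l+1}_{j\alpha} x^l_\alpha + b^{l+1}_j$. Conditional on $(x^l, b^{l+1})$, the $y^{l+1}_j$ are independent across $j$, so a conditional law of large numbers (with $\varphi^2 \le \Phi_0^2$ supplying uniform variance bounds) gives
\begin{equation*}
n^{-1}\sum_j \varphi^2(y^{l+1}_j) - n^{-1}\sum_j \mathbf{E}_{X^{l+1}}\!\left[\varphi^2(y^{l+1}_j)\,\big|\,x^l, b^{l+1}_j\right] \to 0
\end{equation*}
almost surely. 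A conditional Lindeberg CLT, using $n^{-1}\sum_\alpha (x^l_\alpha)^2 \to q^{l+1} - \sigma_b^2$ and $n^{-2}\sum_\alpha (x^l_\alpha)^4 \to 0$ from the IH, then shows that $y^{l+1}_j$ converges in distribution (conditionally) to $(q^{l+1} - \sigma_b^2)^{1/2}\gamma + b^{l+1}_j$. Since $\varphi^2$ is continuous and bounded, $\mathbf{E}_{X^{l+1}}[\varphi^2(y^{l+1}_j)\,|\,x^l, b^{l+1}_j] \to \int \varphi^2\bigl((q^{l+1} - \sigma_b^2)^{1/2}\gamma + b^{l+1}_j\bigr)\Gamma(d\gamma)$, and a final LLN over the i.i.d.\ biases $b^{l+1}_j$ integrates out $b$ to give (\ref{qlql}). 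The trace bound $\sup_n n^{-1}\mathrm{Tr}(M^{l+1}_n)^2 < \infty$ follows from $\|K^{l+1}\| \le \Phi_1^2$, the IH bound on $M^l_n$, and the standard almost-sure estimate $\sup_n n^{-1}\mathrm{Tr}(n^{-1}X^{l+1}(X^{l+1})^T)^2 < \infty$ under the finite-fourth-moment assumption on the entries of $X^{l+1}$.
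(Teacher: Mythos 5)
Your proposal is correct and follows essentially the same route as the paper's own proof: induction on the layer index, passage by cyclicity from $M_n^{l+1}$ to $\mathcal{M}_n^{l+1}=n^{-1}(M_n^l)^{1/2}(X^{l+1})^T K^{l+1}X^{l+1}(M_n^l)^{1/2}$, application of Theorem \ref{t:ind} in its random-parameter form (Remark \ref{r:rra}(i)) plus Fubini, with the three hypotheses verified via the a.s.\ operator-norm bound on $X^{l+1}$, boundedness of $\varphi'$, and the conditional LLN/CLT argument that the paper packages as Lemma \ref{l:xlyl}. The only differences are organizational (you start from $M^0_n=I$ and fold the $q$-recurrence into the inductive hypothesis, whereas the paper treats it as a standalone lemma), not substantive.
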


\begin{remark}
\label{r:penn} (i) If
\begin{equation}
q_{1}=\cdots =q_{L}=:q^*,  \label{qeq}
\end{equation}%
then $\nu _{K}:=\nu _{K^{l}},\;l=1,\dots,L$ and (\ref{nucon}) becomes
\begin{equation}
\nu _{M^{L}}=\underset{L\;\mathrm{times}}{\underbrace{\nu _{K}\diamond \nu
_{K}\cdots \diamond \nu _{K}}}. \label{nucone}
\end{equation}%
%and the moment generating function (\ref{mgen}) of $\nu _{M^{L}}$ satisfies
%functional equation (\ref{penfo}) proposed in \cite{Pe-Co:18}.
Equalities (\ref{qeq}) are the case if $q^{\ast }$ is a fixed point of
(\ref{qlql}), see \cite{Ma-Co:16,Po-Co:16,Sc-Co:17} for a detailed analysis
of (\ref{qlql}) with Gaussian $F$ and its role in the functioning of the
deep neural networks.

(ii) Let us show  that Theorem \ref{t:main} implies the
results of   \cite{Pe-Co:18}, formula (\ref{penfo}) in particular. Indeed, it follows from the
theorem, (\ref{mlml1}), and Corollary \ref{c:conv} that the functional inverse $z_{M^{l+1}}$ of the
moment generating function $m_{M^{l+1}}$ (see (\ref{mgen}) -- (\ref{stmg}))
of the limiting NCM $\nu _{M^{l+1}}$ of matrix $M_{n}^{l+1}$ and that of $%
M_{n}^{l}$ are related as (cf. (\ref{mconv}))
\begin{equation}
z_{M^{l+1}}(m)=z_{K^{l+1}}(m)z_{M^{l}}(m)m^{-1}.  \label{zre}
\end{equation}%
Passing from the moment generating functions to the S-transforms of free
probability theory via the formula $S(m)=z(m)(1+m)m^{-1}$  \cite{Mi-Sp:17} and taking into
account that for the limiting NCM (\ref{nump}) of the Wishart matrix $%
n^{-1}X_{n}X_{n}^{T}$, we have $z_{MP}(m)=m(1+m)^{-2}$ and
$S_{MP}(m)=(1+m)^{-1}$. This and  (\ref{zre}) imply%
\begin{equation}
S_{M^{l+1}}(m)=S_{K^{l+1}}(m)S_{MP}(m)S_{M^{l}}(m),  \label{prec}
\end{equation}%
 another form of the operation (\ref{opdia}), \ cf. (\ref{siz}).
Next, iterating (\ref{prec})  $L$ times, we obtain
% $S_{M^{L}}=S_{K^{L}}\cdots (m)S_{MP}(m)S_{M^{l}}(m)$
% obtain in the case
%(\ref{qeq}) $L$ times,
 again (\ref{nucone}), and iterating (\ref{zre})  $L$ times under condition (\ref{qeq}), we obtain
\[z_{K}(m)=(z_{M^L}^L (m))^{1/L}(1+m)^{1/L}m^{(L-1)/L}
\]
implying  (\ref{penfo}) (formula (13) of
\cite{Pe-Co:18}).
%The functional equation (\ref{penfo}) arising in the case (%
%\ref{qeq}) of the $l$-independent parameters $q_{l}$ of (\ref{qlql}) is%
%derived from (\ref{prec}) in \cite{Pe-Co:18}.
%Formula (\ref{nucon}) is a version of formula (13) in \cite{Pe-Co:18}
%(see also \cite{Mu:02}), although our operation $\diamond$ of (\ref{opdia})
%is somewhat different from the free multiplicative convolution used in %\cite{Pe-Co:18}, see Remark \ref{r:pefr}.

(iii) If the input vectors (\ref{x0}) are random, then it is necessary to
assume that they are defined on the same probability space $\Omega _{x^{0}}$
for all $n_{0}$ and that (\ref{q0}) -- (\ref{qLin}) are valid with probability 1
in $\Omega _{x^{0}}$, i.e., there exists
\begin{equation}  \label{px0}
\overline{\Omega}_{x^{0}} \subset \Omega _{x^{0}}, \; \mathbf{P}(\overline{%
\Omega }_{x^{0}})=1
\end{equation}
where (\ref{q0}) and (\ref{qLin}) hold. It follows then from the Fubini
theorem that in this case the set $\overline{\Omega}_L \subset \Omega_L,
\mathbf{P}\{\overline{\Omega}_L\}=1$ where Theorem \ref{t:main} holds has to
be replaced by the set $\overline{\Omega}_{Lx^{0}} \subset \Omega_L \times
\Omega _{x^{0}}, \; \mathbf{P} \{\overline{\Omega}_L\}=1$. An example of
this situation is where $\{x_{j_{0}}^{0}\}_{j^{0}=1}^{n_{0}}$ are the first $%
n_{0}$ components of an ergodic sequence $\{x_{j_{0}}^{0}\}_{j^{0}=1}^{%
\infty }$ (e.g. a sequence of i.i.d. random variables) with finite fourth
moment. Here $q^{1}$ in (\ref{q0}) exists with probability 1 on the
corresponding $\Omega _{x^{0}}$ and even is non-random just by ergodic
theorem (the strong Law of Large Numbers in the case of i.i.d. sequence),
the r.h.s. of (\ref{qLin}) is $n^{-1}\mathbf{E}\{(x_{1}^{0})^{4}\}(1+o(1)),%
\;n\rightarrow \infty $ with probability 1 in $\Omega _{x^0}$ and the
theorem is valid with probability 1 in $\Omega _{L}\times \Omega _{x^{0}}$.

(iv) An analog of Theorem \ref{t:main} corresponding to the more general
case (\ref{asf1}) is also valid. It suffices to use Remark \ref{r:rra} (ii).
Likewise, conditions (\ref{phi1}) can also be replaced by those requiring
polynomial bounds for $\varphi$ and $\varphi'$ provided that the components of the
bias vectors (\ref{bga}) and the entries of the weight matrices (\ref{wga})
have finite moments of sufficiently high order which may depend on $L$,
see Remark \ref{r:rra} (iii).
\end{remark}
We present now the proof of Theorem \ref{t:main}. \medskip
\begin{proof}
We prove the theorem by induction in $L$. We have from (\ref{rec}) -- (\ref%
{JJM}) and (\ref{nequ}) with $L=1$ the following $n\times n$ random matrix%
\begin{equation}
M_{n}^{1}=J_{n}^{1}(J_{n}^{1})^{T}=n^{-1}D_{n}^{1}X_{n}^{1}(X_{n}^{1})^{T}D_{n}^{1}.
\label{m1}
\end{equation}%
It is convenient to pass from $M_{n}^{1}$ to the $n\times n$ matrix (see
Remark \ref{r:kgen})%
\begin{equation}
\mathcal{M}%
_{n}^{1}=(J_{n}^{1})^{T}J_{n}^{1}=n^{-1}(X_{n}^{1})^{T}K_{n}^{1}X_{n}^{1},%
\;K_{n}^{1}=(D_{n}^{1})^{2}  \label{cm1}
\end{equation}%
which has the same spectrum, hence the same Normalized Counting Measure as $%
M_{n}^{1}$. The matrix $\mathcal{M}_{n}^{1}$ is a particular case with $%
S_{n}=\mathbf{1}_{n}$ of matrix (\ref{mncal}) treated in Theorem \ref{t:ind}
above. Since the NCM of $\mathbf{1}_{n}$ is the Dirac measure $\delta _{1}$,
conditions (\ref{r2}) -- (\ref{nur}) of the theorem are evident. Conditions (%
\ref{qqn}) and (\ref{xbou}) of the theorem are just (\ref{q0}) and (\ref%
{qLin}). It follows then from Corollary \ref{c:conv} that the assertion of
our theorem, i.e., formula (\ref{nucon}) with $q^{1}$ of (\ref{q0}) is valid
for $L=1$.

Consider now the case  where $L=2$ of (\ref{rec}) -- (\ref{JJM}) and (\ref{nequ}):
\begin{equation}
M_{n}^{2}=n^{-1}D_{n}^{2}X_{n}^{2}M_{n}^{1}(X_{n}^{2})^{T}D_{n}^{2}.
\label{m2m1}
\end{equation}%
Since $M_{n}^{1}$ is positive definite, we write
\begin{equation}
M_{n}^{1}=(S_{n}^{1})^{2}  \label{m1s1}
\end{equation}%
with a positive definite $S_{n}^{1}$, hence,
\begin{equation}
M_{n}^{2}=n^{-1}D_{n}^{2}X_{n}^{2}(S_{n}^{1})^{2}(X_{n}^{2})^{T}D_{n}^{2}
\label{mn2}
\end{equation}%
and the corresponding $\mathcal{M}_{n}^{2}$ is
(see Remark \ref{r:kgen})
\begin{equation}
\mathcal{M}%
_{n}^{2}=n^{-1}S_{n}^{1}(X_{n}^{^{2}})^{T}K_{n}^{2}X_{n}^{2}S_{n}^{1},%
\;K_{n}^{2}=(D_{n}^{2})^{2}.  \label{cm1m2}
\end{equation}%
We observe that $\mathcal{M}_{n}^{2}$ is a particular case of matrix (\ref%
{mncal}) of Theorem \ref{t:ind} with $M_{n}^{1}$ of (\ref{m1s1}) as $%
R_{n}=(S_{n})^{2}$, $X_{n}^{2}$ as $X_{n}$, $K_{n}^{2}$ as $K_{n}$, $%
\{x_{j_{1}}^{1}\}_{j_{1}=1}^{n}$ as $\{x_{\alpha n}\}_{\alpha =1}^{n}$, $%
\Omega _{1}=\Omega ^{1}$ of (\ref{oml}) as $\Omega _{R,x}$ and $\Omega ^{2}$
of (\ref{oml}) as $\Omega _{X,b}$, i.e., the case of the random but\ $%
\{X_{n}^{2},b^{2}_{n}\}$ -independent $R_{n}$ and $\{x_{\alpha n}\}_{\alpha
=1}^{n}$ in (\ref{mncal}) as described in Remark \ref{r:rra} (i). Let us
check that conditions (\ref{r2}) -- (\ref{nur}) \ and (\ref{qqn}) -- (\ref%
{xbou}) of Theorem \ref{t:ind} are satisfied for $\mathcal{M}_{n}^{2}$ of (%
\ref{cm1m2}) with probability 1 in the probability space $\Omega
_{1}=\Omega^1$ generated by $\{X_{n}^{1},b_{n}^{1}\}$ for all $n$ and
independent of the space $\Omega ^{2}$ generated by $\{X_{n}^{2},b_{n}^{2}\}$
for all $n$.

To this end we use an important fact on the operator norm of $n\times n$
random matrices with i.i.d. entries satisfying (\ref{wga}). Namely, if $%
X_{n} $ is a such $n\times n$ matrix, then we have with probability 1
\begin{equation}
\lim_{n\rightarrow \infty }n^{-1/2}||X_{n}||=2,  \label{lin}
\end{equation}%
thus, with the same probability%
\begin{equation}
||X_{n}||\leq Cn^{1/2},\;2 \le C < \infty,\; n \ge n_0,  \label{nox}
\end{equation}%
if $n_0$ is large enough.

For the Gaussian matrices relation (\ref{lin}) has already been known in the
Wigner school of the early 1960th, see \cite{Pa-Sh:11}. It follows in this
case from the orthogonal polynomial representation of the density of the NCM
of $n^{-1}X_{n}X_{n}^{T}$ and the asymptotic formula for the corresponding
orthogonal polynomials. For the modern form of (\ref{lin}),
in particular its validity for random matrices with i.i.d entries of zero
mean and finite fourth moment, see \cite{Ba-Si:00,Ve:18} and references
therein.

We will also need the bound%
\begin{equation}
||K_{n}^{1}||\leq \Phi^{2} _{1},  \label{nok}
\end{equation}%
following from (\ref{D}), (\ref{kan}) and (\ref{phi1}) and valid everywhere
in $\Omega _{1}$ of (\ref{oml}).

Now, by using (\ref{cm1}), (\ref{nox}), (\ref{nok}) and the inequality
\begin{equation}
|\mathrm{Tr}AB|\leq ||A||\mathrm{Tr}B,\;  \label{tab}
\end{equation}%
valid for any matrix $A$ and a positive definite matrix $B$, we obtain with
probability 1 in $\Omega _{1}$ and for sufficiently large $n_{0}$ of (\ref{nok})
\begin{equation*}
n^{-1}\mathrm{Tr}(M_{n}^{1})^{2}=n^{-3}\mathrm{Tr}\mathbb{(}%
K_{n}^{1}X_{n}^{1}(X_{n}^{1})^{T})^{2}\leq (C\Phi _{1})^{4}.
\end{equation*}%
We conclude that $M_{n}^{1}$, which plays here the role of $R_{n}$ of
Theorem \ref{t:ind} and Remark \ref{r:rra} (i) according to (\ref{m1s1}),
satisfies condition (\ref{r2}) with $r_{2}=(C\Phi _{1})^{4}$ and with
probability 1 in our case, i.e., on a certain $\Omega _{11}\subset \Omega
_{1},\;\mathbf{P}(\Omega _{11})=1$.

Next, it follows from the above proof of the theorem for $L=1$, i.e., in
fact, from Theorem \ref{t:ind}, that there exists $\Omega _{12}\subset
\Omega _{1},\;\mathbf{P}(\Omega _{12})=1$ on which the NCM $\nu _{M_{n}^{1}}$
converges weakly to a non-random limit $\nu _{M^{1}}$, hence condition (\ref%
{nur}) is also satisfied with probability 1, i.e., on $\Omega _{12}$.

At last, according to Lemma \ref{l:xlyl} and (\ref{q0}), there exists $%
\Omega _{13}\subset \Omega _{1},\;\mathbf{P}(\Omega _{13})=1$ on which there
exists
\begin{equation*}
\lim_{n\rightarrow \infty
}n^{-1}\sum_{j_{0}=1}^{n}(x_{j_{0}}^{0})^{2}+\sigma _{b}^{2}=q^{1}>\sigma
_{b}^{2},
\end{equation*}%
and according to (\ref{rec}) and (\ref{phi1}) we have uniformly in $n$: $%
|x_{j_{1}}^{1}|\leq \Phi _{0},\;j_{1}=1,...,n$, i.e., conditions (\ref{qqn}) and (\ref{qLin})
are also satisfied.

Hence, we can apply Theorem \ref{t:ind}\ on the subspace $\overline{\Omega }%
_{1}=\Omega _{11}\cap \Omega _{12}\cap \Omega _{13}\subset \Omega _{1},\;%
\mathbf{P}(\overline{\Omega }_{1})=1$ where all the conditions of the
theorem are valid, i.e., $\overline{\Omega }_{1}$ plays the role of $\Omega
_{R,x} $ of Remark \ref{r:rra} (i). Then, the theorem implies that for every $%
\omega _{1}\in \overline{\Omega }_1$ there exists subspace $\overline{%
\Omega^2}(\omega _{1})$ of the space $\Omega ^{2}$  generated by $%
\{X_{n}^{2},b_{n}^{2}\}$ for all $n$ and such that $\mathbf{P}(\overline{%
\Omega^2}(\omega _{1}))=1$ and formulas (\ref{nucon}) -- (\ref{qlql}) are
valid for $L=2$. It follows then from the Fubini theorem that the same is
true on a certain $\overline{\Omega }_{2}\subset \Omega _{2},\;\mathbf{P}(%
\overline{\Omega }_{2})=1$ where $\Omega _{2}$ is defined by (\ref{oml})
with $L=2$.

This proves the theorem for $L=2$. The proof for $L=3,4,\dots$ is analogous,
since (cf. (\ref{mn2}))
\begin{equation}
M_{n}^{l+1}=n^{-1}D_{n}^{l+1}X_{n}^{l+1}M_{n}^{l}(X_{n}^{l+1})^{T}D_{n}^{l+1},\;l\geq 2.
\label{mlml1}
\end{equation}%
In particular, we have with probability 1 on $\Omega _{l}$ of (\ref{oml})
for $M_{n}^{l}$ playing the role of $R_{n}$ of Theorem \ref{t:ind} on the $l$%
th step of the inductive procedure (cf. (\ref{r2}))
\begin{equation*}
n^{-1}\mathrm{Tr}(M^{l})^{2}\leq (C\Phi _{1})^{4l},\;l\geq 2.
\end{equation*}
If $x_0$ is random, then it is necessary to follow the argument
given in Remark \ref{r:penn} (iii).
\end{proof}
%\smallskip
Note that the material of this section is quite close to that
of Section 2 of \cite{Pa:20}.
\medskip

We will comment now on our numerical results presented on Fig. 1 -- Fig. 4 below.
The figures, except Fig. 1a), show the arithmetic mean $\rho _{n}$ of the empirical
eigenvalue densities of a certain number $N$ of samples of $%
M_{n}^{L}$ with various layer widths $n$, network depths $L$ and
activation functions $\varphi $. The entries of the weight matrices
$W^{l}$ and the components of bias vectors $b^{l}$ of
(\ref{rec}) -- (\ref{bl}) are Gaussian satisfying
(\ref{bga}) \ -- (\ref{wga}) for Fig. 1 -- Fig. 3
and the Cauchy random variables with the density
\begin{equation}\label{cauch}
p(x)=\frac{\delta}{\pi (x^2 + \delta^2)}, \; \delta=1/n
\end{equation}
for Fig. 4.
The number $N$ is roughly the minimum
number of samples providing stable (reproducing) numerical
results for $\rho _{n}$ such that the theoretical curve obtained
numerically from (\ref{nucon}) -- (\ref{qlql}) and (\ref{fhk}) -- (\ref{nuka})
either coincides (within the accuracy of our numerical simulations) with $\rho _{n}$
or is its smooth version.
%for Fig. 1 -- Fig. 3.
We have used
\begin{equation}\label{Nn}
N=10^{7} \; \mathrm{for} \; n=10,\;30, \;\; N=10^{6} \; \mathrm{for} \; n=10^{2}, \;\; N=10^{4} \;
\mathrm{for} \; n=10^3.
\end{equation}

\begin{figure}[ht]
\center{\includegraphics[width=1.0\linewidth]{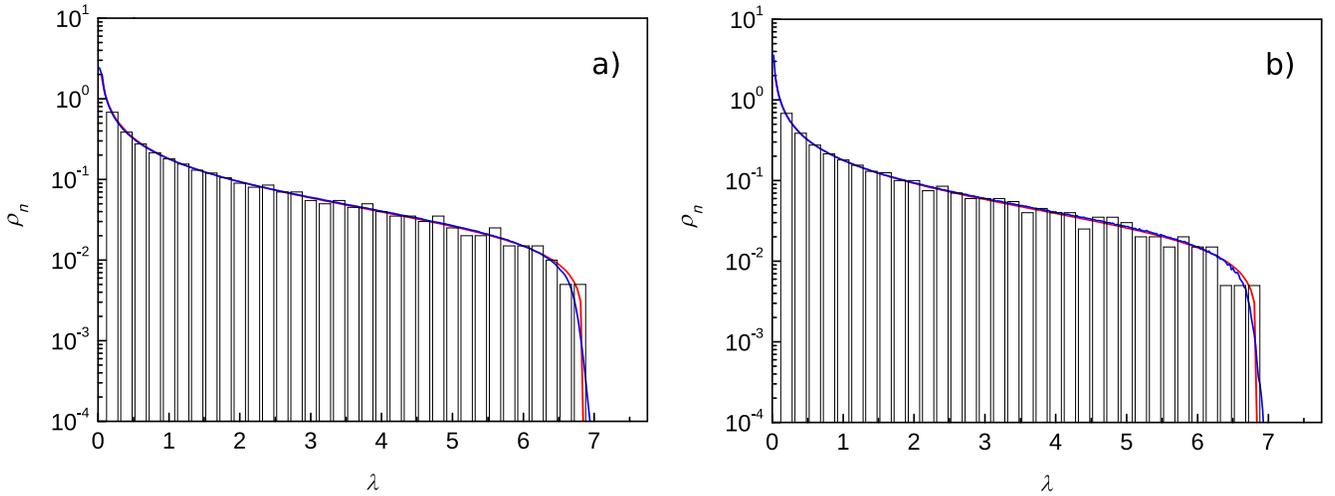} }
\caption{\small
%Singular values distribution of (\ref{JJM}). Network depth $L=2$, layer width %$n=1000$.
%Solid blue curves correspond to averaging over $N=10^3$ realization. Bar %``histogram'' plots correspond to distribution of (\ref{JJM}) singular values on %a realization.
%a) Linear activation function, b) ``Hard-Tanh'' activation function.
The eigenvalue density (in the semi-log scale) of the random matrix $M_{n}^{L}$ (\ref{JJM}%
) for the Gaussian weights and biases. The network depth $L=2$ and
the layer width $n=10^{3}$. The histograms correspond to
the density of a sample of $M_{10^{3}}^{2}$, the
solid blue curves to the arithmetic means $\rho _{n}$ of the
sample densities of $N=10^{3}$ samples of $M_{10^{3}}^{2}$ and the solid
red line to the numerical solution of equations (\ref{nucon}) -- (\ref{qlql}). a) The
linear activation function (conventional random matrix theory); b) the HardTanh activation function (\ref{hata}).
}
\label{fig1}
\end{figure}

Figure 1. The eigenvalue densities of the random matrix $M_{n}^{L}$ (\ref{JJM}%
) for $L=2$ and $n=10^{3}$.
%and for the Gaussian weight matrices and the bias vectors.
The histograms are obtained from the sample of
$M_{10^{3}}^{2}$, the solid blue curves is the plot of the arithmetic
means $\rho _{n}$ over $N=10^{3}$ samples of $M_{10^{3}}^{2}$ and the solid
red curves are the result of the numerical solutions of equation (\ref{nucon}) -- (\ref{qlql}), where the operation $diamond$ is given by (\ref{fhk}) -- (\ref{kh}).
a) Linear activation function, b) The HardTanh activation function, i.e.,
\begin{align}
%\notag
%& a) \; \; \varphi (x)=x,\\
%& b)  \; \; \varphi   \label{hata}
%=(x+1)_{+}-(x-1)_{+}-1
%=\left\{
%\begin{array}{cc}
%-1, & x\leq -1, \\
%x, & |x|\leq 1, \\
%+1, & x\geq 1.
%\end{array}%
%\right.
a) \; \varphi (x)=x, \; \; \;
b) \; \varphi   \label{hata}
%=(x+1)_{+}-(x-1)_{+}-1
=\left\{
\begin{array}{cc}
-1, & x\leq -1, \\
x, & |x|\leq 1, \\
+1, & x\geq 1.
\end{array}%
\right.
\end{align}%
%where $x_{+}=\max \{x,0\}$.
The figure demonstrates the quite good fitting of
the three descriptions of the eigenvalue density, thereby manifesting the
fast convergence of the numerically obtained results to
to a non-random limit given by Theorems \ref{t:ind} -- \ref{t:main}.
%(\ref{numr1}).

Figure 2.  a) displays the density $\nu _{K}^{\prime }$ of the measure $%
\nu _{K}$ of (\ref{nukal}) for the indicated activation functions $\varphi $.
%and Gaussian weights and biases, see (\ref{rec}) -- (\ref{bl}) and (\ref%
%{bga}) -- (\ref{wga}).
It is well seen that all (except $\varphi (x)=x$) activation functions
lead to quite similar $\nu
_{K}^{\prime }$ having two narrow peaks centered at $0$ and $1$
and being rather close to zero otherwise. This is natural (in fact,
exact) for the HardTanh (\ref{hata}), seems
likely for the smooth sigmoid $\varphi (x)=\tanh x$, less likely for $%
\varphi (x)=\sinh x$ and rather surprising for $\varphi
(x)=\sin x$. Nevertheless, according to Fig. 2b), the mean
eigenvalue densities $\rho _{n} $ obtained from $N=10^{4}$ samples
of $M_{10^{3}}^{2}$ for all $\varphi
$  including $\varphi
(x)=x$ are very close within the (semi-log) scale of the figure. %It is
%tempting to believe that the form of $\nu _{K}^{\prime }$, hence,
%$\rho _{n}$ (and $\nu _{M^{L}}^{\prime }$) is essentially
%determined by certain neighborhoods of minima and maxima of
%$(\varphi ^{\prime })^{2}.$

The weak dependence of $\rho _{n}$ on $\varphi $ can be viewed as an analog
of the macroscopic universality (the universality of the global
regime) in random matrix theory \cite{Pa-Sh:11,Ta-Vu:10}, where the
limiting eigenvalue distribution of the Wigner matrices and the
sample covariance matrices are completely determined just by the
second moment of the matrix entries. An analog of this type
universality is also proved in this paper (we believe that
condition on the fourth moment in (\ref{wga}) can be removed).

\begin{figure}[ht]
\center{\includegraphics[width=1.0\linewidth]{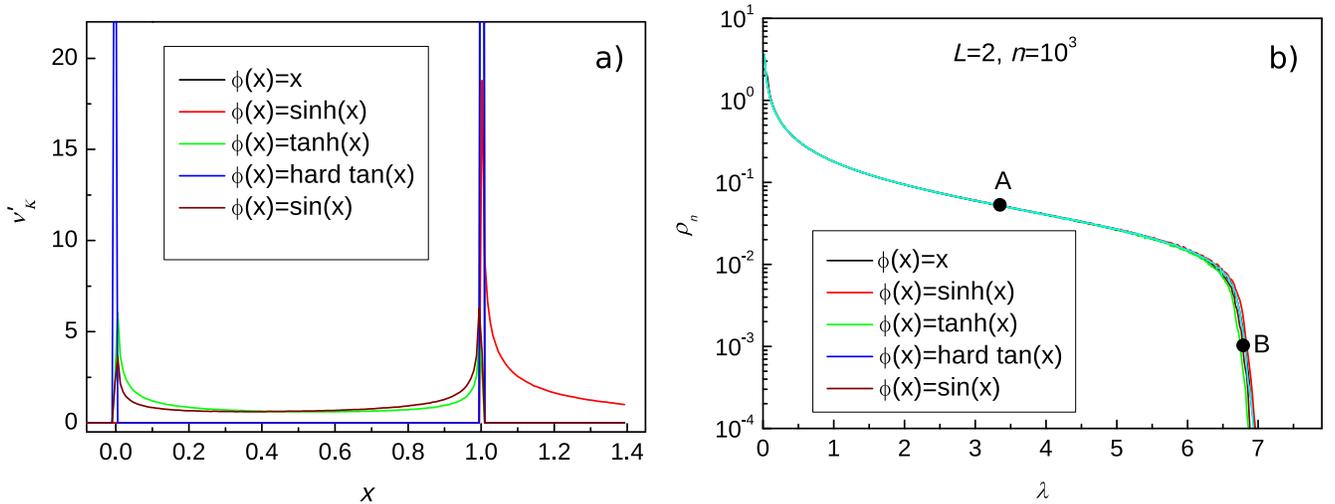} }
\caption{\small
%a) Distribution of $\phi'(x)^2$ for different activation functions. b) Singular %values distributions $P(s)$ of (\ref{JJM}) for different $\phi(x)$. Network %depth $L=2$, layer width $n=1000$.
%The maximal distance between curves in the vicinity
%of {\bf A} is about 1.3\%, %and in the vicinity of {\bf B}
%is about 0.7\%.
%
a) The density $\nu _{K}^{\prime }$ of the measure $\nu _{K}$ of (%
\ref{nukal}) for the indicated activation functions and the Gaussian
weights and biases. b) The arithmetic means $\rho _{n}$ (in the semi-log scale) of the
sample eigenvalue
densities of $M_{10^{3}}^{2}$ over $N=10^{4}$ samples for all indicated $%
\varphi $. %In a more refined scale the maximum distance between
%the curves corresponding to different $\varphi $'s is 2.1\% in the
%neighborhood of $A$ and 1.5\% in the neighborhood of $B$.
}
\label{fig2}
\end{figure}

Note, however, that the "universality" shown in Fig. 2
%(observed and discussed in \cite{Pe-Co:18})
is not exact. Indeed, by using a more refined scale for the
curves of Fig. 2 b), we found that curves differ by $2.1\%$ in a
neighborhood of $\ A$ and by $1.5\%$ in a neighborhood of $B$. In addition,
it follows from Theorem \ref{t:ind} that if $\nu _{K^{\prime }}\neq \nu
_{K^{\prime \prime }}$, then $\nu _{(M^{L})^{\prime }}$ $\neq \nu
_{(M^{L})^{\prime \prime }}$. However, this fact could be of interest for
applications, since it implies that up to a certain precision one can
confine oneself to a simple case of linear, i.e., the standard random
matrix, results and calculations. It is instructive in this context to
%Moreover, given a non-linear (sigmoid or
%not) bounded and continuous ,
consider the following family of activation functions:%
\begin{equation}
\varphi _{\varepsilon }(x)=\varepsilon ^{-1}\psi (\varepsilon x),
\label{phiep}
\end{equation}%
where $\psi:\mathbb{R}\rightarrow \mathbb{R},\;\psi
(x)=x+o(x),\;x\rightarrow 0$ (sigmoid or
not) is bounded and continuous. We have then for any $x$:
%\begin{equation*}
$\varphi _{\varepsilon }(x)=x(1+o(1)),\;\varphi _{\varepsilon }^{\prime
}(x)=1+o(1),\;\varepsilon \rightarrow 0$.
%\end{equation*}%
We conclude that the linear case $\varphi (x)=x$ can be viewed as an
asymptotic regime for the family (\ref{phiep}). It is
remarkable, however, that according to Fig. 2 the regime seems to be
applicable up to $\varepsilon \simeq 1$. An analogous property was found in \cite{Ba-Ke:20}
although in a different context.

\begin{figure}[ht]
\center{\includegraphics[width=1.0\linewidth]{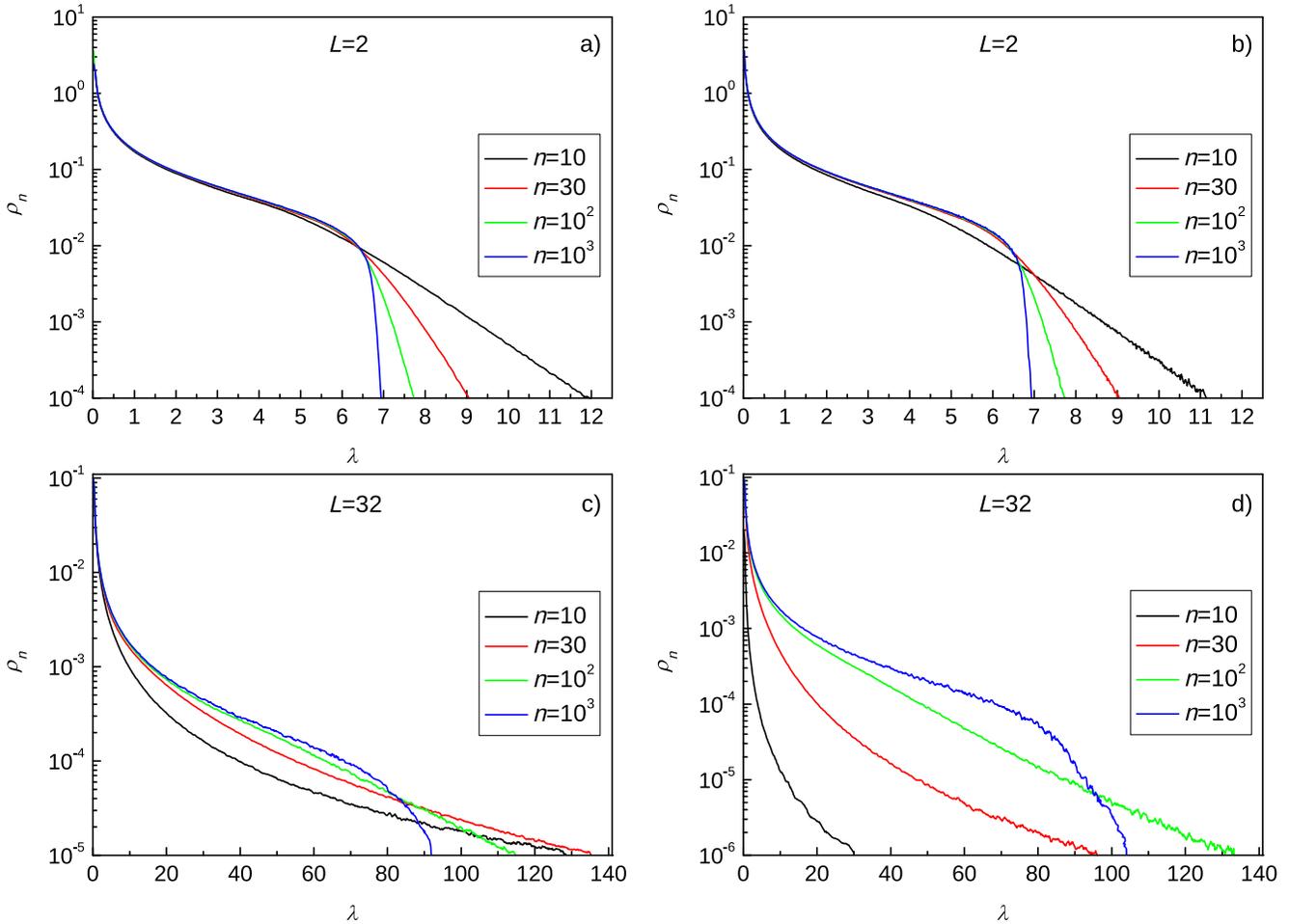} }
\caption{\small
%Singular values distribution $P(s)$ of (\ref{JJM}). Layer width $n = %10,30,100,1000$.
%The dependencies $P(s)$ were obtained by averaging over  $N=10^6$ realization %for $n=10$ and $n=30$, $N=10^5$ for $n=100$ and $N=10^4$ for $n=1000$.
%The matrix elements of weight matrices $W$ (\ref{wl}) and components of bias %vectors $b$ (\ref{bl}) have
%normal distribution with zero mean and $\sigma^2=1/n$.
%The figures a) and  b) correspond to linear activation function $\phi$, c) and %d) --- to ``Hard-Tanh'' activation function.
%In the figures a) and c) network depth $L=2$, in b) and d) network depth $L=32$.
The arithmetic means $\rho_{n}$ (in the semi-log scale) of the sample eigenvalue densities
of $M_{n}^{L}$ for various $L$, $n$ and $\varphi $
obtained by averaging over $N=10^{7}$ samples for $n=10,\;30$, $N=10^{6}$
samples for $n=10^{2}$ and $N=10^{4}$ samples for $n=10^{3}$.
Figures a) and c) correspond to linear activation function $\varphi $,
figures b) and d) correspond to the Hard-Tanh activation function
(see (\ref{hata})).
}
\label{fig3}
\end{figure}

Figure 3 shows (in the semi-log scale) the arithmetic means $\rho _{n}$ of
the sample eigenvalue densities  of $M_{n}^{L}$ (\ref{JJM}) with Gaussian weights and biases for
various $L$, $n$ and $\varphi $ obtained from $N$ samples chosen according to (\ref{Nn}). The "rows" of the figure, i.e., Fig. 3a) -- Fig. 3b) and Fig. 3c) -- 3d), describe the variation of $\rho _{n}$ in $n$ and $\varphi $ for a fixed $%
L=2,32$, while the "columns" of the figure, i.e., Fig. 3a) -- Fig. 3c) and Fig. 3b) -- 3d), describe
the variation of $\rho _{n}$ in $n\,$\ and $L$ for a fixed $\varphi
$, the linear or the HardTanh (\ref{hata}). We observe the
mentioned above similarity ("universality")  of curves
corresponding to different $\varphi ^{\prime }$, the stronger dependence of curves on $n$
and stronger fluctuations in $L$, especially near the upper edge $a_{L}$ of the support
and for the (non-smooth) HardTanh $\varphi$. It is also well seen the growth of $a_{L}$ in $L$.
 It is instructive to compare these properties of $\rho_n$ and those of the simplest case
 of $M_{n}^{L}$ with $%
R_{n}=D_{n}=\mathbf{1}_{n}$ (see (\ref{JJM})), where we have for the
infinite width limit of the Stieltjes transform $f_{L}$ and the eigenvalue
density $\rho_L $ (see \cite{Pa-Sh:11}, Problem 7.6.4):
%\begin{equation*}
  $z^{L}(-f_{L}(z))^{L+1}+zf_{L}(z)+1=0$
%\end{equation*}%
and
\begin{eqnarray}
\rho_L (\lambda ) &=&const.\ \lambda ^{-\alpha _{L}}(1+o(1)),\;\lambda
\downarrow 0,\;\;\alpha _{L}=L/(L+1), \notag \\
\rho_L (\lambda ) &=&const. \ (a_{L}-\lambda )^{1/2}(1+o(1)),\;\lambda \uparrow
a_{L},\; \notag \\
a_{L} &=&L(1+L^{-1})^{L+1}=Le(1+o(1)),\;L\rightarrow \infty. \label{rhos}
\end{eqnarray}%
It follows from the above that the support of $\rho_L $ grows in $L $
as well as its singularity at zero.
Moreover, it is easy to see that $\lim_{L\rightarrow \infty }f_L(z)=-z^{-1}$,
hence, the limiting $\rho_\infty $ is the Dirac delta at zero. The last property is
valid in general case of $M_{n}^{L}$ of (\ref{JJM}) and can be obtained either by the
free probability argument \cite%
{Pe-Co:18,Mu:02} or by using Theorems \ref{t:main} and \ref{t:ind}.
Note that the
subsequent limits $n \to \infty$ and then $L \to \infty$ can be viewed as an
implementation of the heuristic inequality $1 \ll L \ll n$. For another implementation
where $L \to \infty, \; n \to \infty, \; L=o(n)$ (the double scaling limit in the
terminology of statistical mechanics)  see \cite{Pe-Co:18}.

\begin{figure}[ht]
\center{\includegraphics[width=1.0\linewidth]{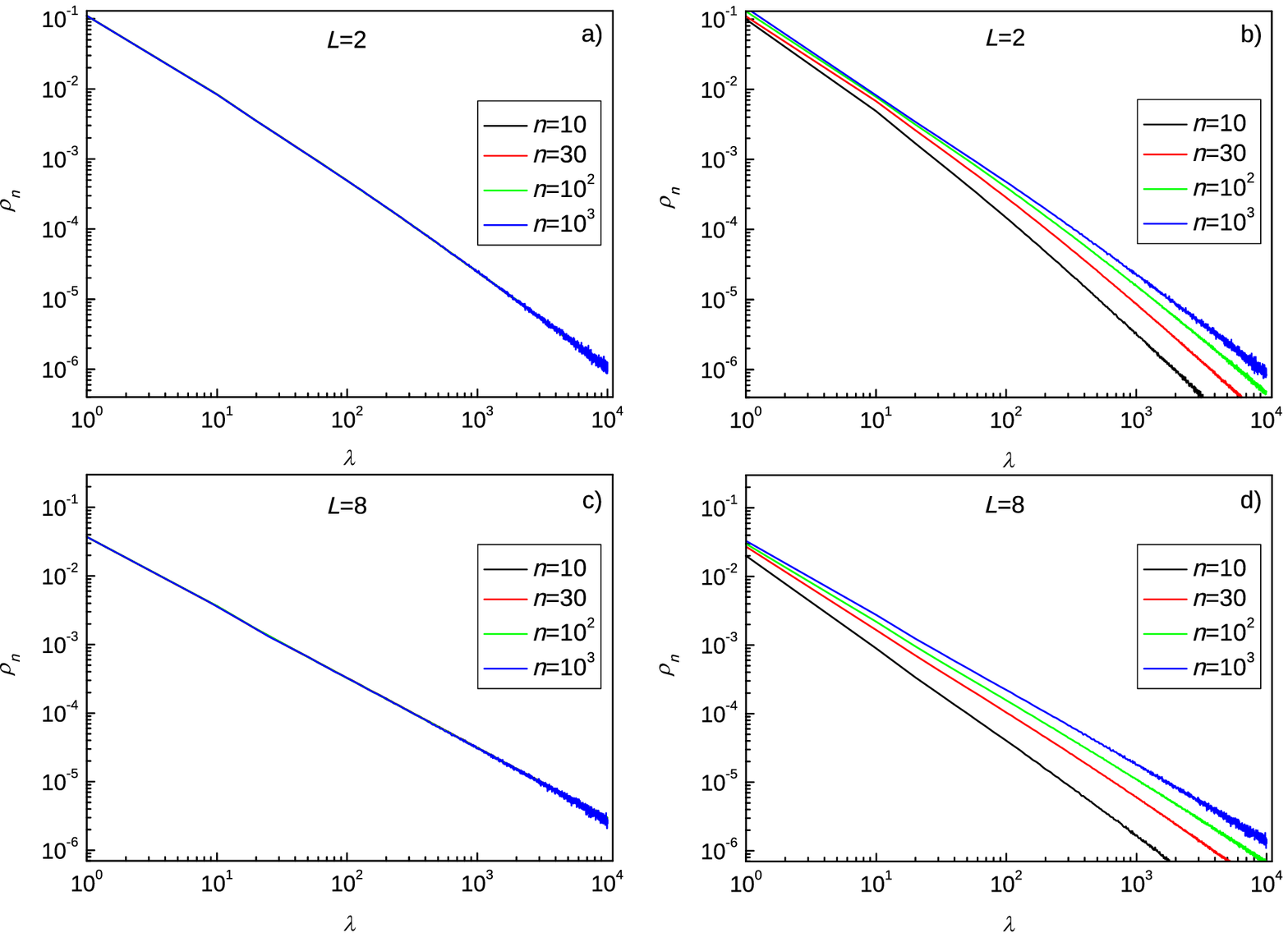} }
\caption{\small
The arithmetic means $\rho _{n}$ (in the double-log scale) of the sample eigenvalue
densities of $M_{n}^{L}$ with the Cauchy distributed weights and
biases (see (\ref{cauch})) for various $L$, $n$ and $\varphi $
obtained by averaging over $N=10^{7}$ samples for $n=10,\;30$,
$N=10^{6}$ samples for $n=10^{2}$ and $N=10^{4}$ samples for
$n=10^{3}$. Figures a) and c) correspond to linear activation
function $\varphi $, figures d) and d) correspond to the Hard-Tanh
activation function (see (\ref{hata})).
%The dependencies $P(s)$ were obtained by averaging over  $N=10^6$ realization %for $n=10$ and $n=30$, $N=10^5$ for $n=100$ and $N=10^4$ for $n=1000$.
%The matrix elements of weight matrices $W$ (\ref{wl}) and components of bias %vectors $b$ (\ref{bl}) have
%Cauchy distribution with location parameter $x_0=0$ and scale parameters $\gamma %= 1/n$.
%The figures a) and  b) correspond to linear activation function $\phi$, c) and %d) --- to ``Hard-Tanh'' activation function.
%In the figures a) and c) network depth $L=2$, in b) and d) network depth $L=8$.
%Upper panel corresponds to linear activation function, lower panel --- to %``Hard-Tanh'' activation function.
%Left figures correspond to network depth $L=2$, right figures --- to $L=8$.
%As seen in the pictures, for linear activation function $P(s)$ is well described %by power dependence $P(s)\sim s^{-\alpha}$ in wide range of $s$.
%The exponent $\alpha \approx 1.8$ for network depth $L=2$ and $\alpha \approx %1.2$  for $L=8$.
%More detailed analysis shows, that in the area of ``small'' $s$ ($0\leq %s %\lesssim 10$) the curve $P(s)$ bends and in the limit $s\to 0$ exponent
%$\alpha \approx 0.7$ for network depth $L=2$ and $\alpha \approx 0.9$ for %$L=8$.
}
\label{fig4}
\end{figure}

Figure 4 shows (in the double log-scale) the arithmetic means $\rho _{n}$ of
the sample eigenvalue densities  of $M_{n}^{L}$ (\ref{JJM}) with the Cauchy distributed (\ref{cauch}) weights and biases for
various $L$, $n$ and $\varphi $ obtained from $N$ samples chosen according to (\ref{Nn}). The figure is organized similarly to Figure 3, i.e., its "rows", Fig. 4a) -- Fig. 4b) and Fig. 4c) -- 4d), describe the variation of $\rho _{n}$ in $n$ and $\varphi $ for a fixed $%
L=2,8$, while the "columns", Fig. 4a) -- Fig. 4c)  and Fig. 4b) -- Fig. 4d), describe
the variation of $\rho _{n}$ $n\,$\ and $L$ for a fixed $\varphi
$, the linear or the HardTanh, see (\ref{hata}).

As seen from the pictures, for linear $\varphi$ the density $\rho_n$ is well described by the power law dependence $\rho_n(\lambda) \sim \lambda^{-\alpha}$ in a sufficient wide range of its argument with $\alpha \approx 1.25$ for network depth $L=2$ and $\alpha \approx 1.05$  for $L=8$.
More detailed analysis shows that for "small" $\lambda$ ($0\leq \lambda \lesssim 10$) the curve $\rho_n$ deviates from the straight line and can be described  for of sufficiently small $\lambda$'s by the power law with the exponent  $\alpha \approx 0.7$ for network depth $L=2$ and $\alpha \approx 0.9$ for $L=8$.

We observe also a certain similarity of Figure 4 and Figure 3, e.g., the stronger dependence of curves on $n$ and stronger fluctuations in $L$, especially for the case of non-smooth HardTanh $\varphi$ (not covered by Theorem \ref{t:main}).

Note that our analytic results do not apply to this case, since the Cauchy distribution does not satisfy conditions (\ref{bga}) - (\ref{wga}).
We present here these numerical results, firstly in order to demonstrate an example of a rather different behavior of the eigenvalue distribution density and, secondly, because of existing indications in the literature on the possibility of using random matrices with the "heavy-tailed" distributed entries  in the deep neural networks studies, see the review \cite{Ma-Ma:17} and references therein.

For more pictures of the eigenvalue distribution of $M_n^L$ and the related characteristics of the scheme (\ref{x0}) -- (\ref{bl}) see \cite{Pe-Co:18,Li-Qi:00,Ya:20} and references therein.

\section{Proof of Theorem 2.1.}

%Theorem \ref{t:main} of previous section is proved by induction in the
%layer number $l$, see formulas (\ref{m2m1}), (\ref{cm1m2}) and (\ref{mlml1}%
%). To pass from the $l$th to the $(l+1)$th layer we need a formula relating
%the limiting NCM $\nu _{\mathcal{M}^{l+1}}$ of the matrix $\mathcal{M}%
%_{n}^{l+1}$ and that of $\mathcal{M}_{n}^{l}$ in the infinite width limit $%
%n_{l}\rightarrow \infty $. The corresponding result, Theorem \ref{t:ind},
%which could be of independent interest, as well as certain auxiliary results
%are proved in this section. In particular, functional equations relating the
%Stieltjes transform of $\nu _{\mathcal{M}_{n}^{l+1}}$ and $\nu _{\mathcal{M}%
%_{n}^{l}}$ in the limit $n\rightarrow \infty $ are obtained (see (\ref{fhk})
%-- (\ref{kh})).

We begin with the list of facts of linear algebra and
probability theory that are used in the proof of Theorem \ref{t:ind}.
%and either are well known or can be easily proved.

\begin{proposition}
\label{p:facts} Let $A$ and $B$ be $n\times n$ real symmetric matrices and $L_{Y}$
be the rank one real symmetric matrix corresponding to the vector $Y=\{Y_{\alpha
}\}_{\alpha =1}^{n}\in \mathbb{C}^{n}$, i.e.,%
\begin{equation}
L_{Y}=\{L_{\alpha \beta }\}_{\alpha,\beta=1}^n, \;L_{\alpha \beta
}=Y_{\alpha }Y_{\beta }.  \label{L}
\end{equation}%
We have:

(i)
\begin{equation}
\mathrm{Tr}AL_{Y}=(AY,Y)  \label{tral}
\end{equation}

(ii) if%
\begin{equation}
G_{A}(z)=(A-z)^{-1},\;G_{B}(z)=(B-z)^{-1},\;\Im z\neq 0,  \label{ress}
\end{equation}%
are the resolvents of $A$ and $B$, then the resolvent identity%
\begin{equation}
G_{A}(z)=G_{B}(z)-G_{A}(z)(A-B)G_{B}(z),\;\Im z\neq 0  \label{resi}
\end{equation}%
is valid;

(iii) if $K$ is a real number and
\begin{equation}
A=B+KL_{Y},  \label{r11}
\end{equation}%
where $L_{Y}$ is given by (\ref{L}), then the rank one perturbation formula%
\begin{equation}
G_{A}(z)=G_{B}(z)-\frac{K}{1+K(G_{B}(z)Y,Y)}G_{B}(z)L_{Y}G_{B}(z),\;\Im
z\neq 0  \label{r12}
\end{equation}%
is valid, if $C$ is one more $n\times n$ matrix (not necessarily hermitian), then%
\begin{equation}
n^{-1}\mathrm{Tr}G_{A}(z)C-n^{  -1}\mathrm{Tr}G_{B}(z)C=-\frac{1}{n}\cdot \frac{%
K(G_{B}(z)CG_{B}(z)Y,Y)}{1+K(G_{B}(z)Y,Y)}  \label{tr12}
\end{equation}%
and if $B$ is positive definite and $K\geq 0$, then
\begin{equation}
|n^{-1}\mathrm{Tr}G_{A}(-\xi )C-n^{-1}\mathrm{Tr}G_{B}(-\xi )C|\leq
||C||/n\xi, \;\xi >0;  \label{troc}
\end{equation}%
(iv) if $X=\{X_{\alpha }\}_{\alpha =1}^{n}\in \mathbb{R}^{n}$ is a random
vector with jointly independent and identically distributed real components and (cf. (\ref%
{wga}) and (\ref{wm4}))%
\begin{equation}
\mathbf{E}\{X_{\alpha }\}=0,\;\mathbf{E}\{X_{\alpha }^{2}\}=1,\;\mathbf{E}%
\{X_{\alpha }^{4}\}=m_{4}<\infty ,  \label{m4}
\end{equation}%
where $\mathbf{E}\{\dots\}$ denotes the corresponding expectation, and
\begin{equation}
Y=n^{-1/2}SX,  \label{ysx}
\end{equation}%
with a $X$-independent $n\times n$ real symmetric matrix $S$, then
\begin{equation}
\mathbf{E}\{L_{Y}\}=n^{-1}R,\;R=S^{2}  \label{eys}
\end{equation}%
and if
\begin{equation}
a=(CY,Y),  \label{a}
\end{equation}%
is a quadratic form with a $X$-independent and not necessarily
hermitian $C$, then
\begin{equation}
\mathbf{E}\{a\}=n^{-1}\mathrm{Tr}\,CR=n^{-1}\mathrm{Tr}\,C_{S},\;C_{S}=SCS
\label{ea}
\end{equation}%
and%
%\begin{equation*}
%\mathbf{E}\{|a|^{2}\}=n^{-2}|\mathrm{Tr}\,C_{S}|^{2}+2n^{-2}\mathrm{Tr}\,%
%C_{S}C_{S}^{\ast }+(m_{4}-3)n^{-2}\sum_{\alpha =1}^{n}|(A_{S})_{\alpha
%\alpha }|^{2},
%\end{equation*}%
%, hence,%
\begin{equation}
\mathbf{Var}\{a\}:=\mathbf{E}\{|a|^{2}\}-|\mathbf{E}\{a\}|^{2}\leq \mu \;%
\mathrm{Tr}\,C_{S}C_{S}^{\ast }/n^{2},\;\mu =m_{4}+1,  \label{exa}
\end{equation}%
where $C^{\ast }$ is the hermitian conjugate of $C$ (recall that $ m_4 \in [1, \infty)$ in view of (\ref{m4})).
%Note than $m_{4}\geq 1$ in view of (\ref{m4}) (cf. (\ref{wga})).
\end{proposition}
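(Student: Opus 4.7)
The proposition collects four essentially standard lemmas, and my plan is to treat them sequentially, since each admits a short direct argument.

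For part (i), I would just expand $\mathrm{Tr}\,AL_{Y}=\sum_{\alpha,\beta}A_{\alpha\beta}Y_{\beta}Y_{\alpha}=(AY,Y)$. For part (ii), starting from $(A-z)G_{A}=I=(B-z)G_{B}$, I would multiply the second equality by $G_{A}$ on the left and the first by $G_{B}$ on the right, and subtract, obtaining (\ref{resi}) directly. Neither step requires more than rewriting.

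For part (iii), I would first derive (\ref{r12}) by the Sherman--Morrison formula, or equivalently by iterating (\ref{resi}) with $A-B=KL_{Y}$, using the rank-one identity $L_{Y}CL_{Y}=(CY,Y)L_{Y}$ to sum the resulting geometric series. Taking $\mathrm{Tr}(C\,\cdot)$ of (\ref{r12}) and invoking (i) with $A$ replaced by $G_{B}CG_{B}$ then yields (\ref{tr12}). The bound (\ref{troc}) I would obtain as follows: at $z=-\xi$ and for positive definite $B$, $K\ge 0$, the difference $G_{A}(-\xi)-G_{B}(-\xi)$ is, by (\ref{r12}), of the form $\alpha vv^{T}$ with $v=G_{B}(-\xi)Y$ and $\alpha\le 0$. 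For any matrix $C$ and any rank-one symmetric $M=\alpha vv^{T}$, one has $|\mathrm{Tr}(MC)|\le\|C\|\,|\alpha|\,\|v\|^{2}=\|C\|\,|\mathrm{Tr}\,M|$. Finally, $|\mathrm{Tr}(G_{A}(-\xi)-G_{B}(-\xi))|$ is controlled by the Weyl interlacing of eigenvalues under a positive rank-one perturbation: the eigenvalues $\lambda_{i}(A)$ and $\mu_{i}(B)$ satisfy $\mu_{i}\le\lambda_{i}\le\mu_{i+1}$, so $\bigl|\sum_{i}\bigl[(\lambda_{i}+\xi)^{-1}-(\mu_{i}+\xi)^{-1}\bigr]\bigr|\le (\mu_{1}+\xi)^{-1}\le\xi^{-1}$, giving (\ref{troc}).

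For part (iv), I would write $Y_{\alpha}=n^{-1/2}\sum_{\mu}S_{\alpha\mu}X_{\mu}$, so $\mathbf{E}\{Y_{\alpha}Y_{\beta}\}=n^{-1}\sum_{\mu}S_{\alpha\mu}S_{\beta\mu}=n^{-1}R_{\alpha\beta}$; this yields (\ref{eys}) and, by linearity, $\mathbf{E}\{a\}=n^{-1}\mathrm{Tr}\,CR=n^{-1}\mathrm{Tr}\,C_{S}$. For the variance, I would expand
\[
\mathbf{E}\{Y_{\alpha}Y_{\beta}Y_{\gamma}Y_{\delta}\}
=n^{-2}\!\!\sum_{\mu\nu\rho\sigma}\!S_{\alpha\mu}S_{\beta\nu}S_{\gamma\rho}S_{\delta\sigma}\,\mathbf{E}\{X_{\mu}X_{\nu}X_{\rho}X_{\sigma}\},
\]
and use the i.i.d.\ zero-mean assumption together with (\ref{m4}) to decompose $\mathbf{E}\{X_{\mu}X_{\nu}X_{\rho}X_{\sigma}\}$ into the three Gaussian pairings $\delta_{\mu\nu}\delta_{\rho\sigma}+\delta_{\mu\rho}\delta_{\nu\sigma}+\delta_{\mu\sigma}\delta_{\nu\rho}$ plus the correction $(m_{4}-3)\delta_{\mu\nu\rho\sigma}$. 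Substituting into $\mathbf{E}\{a\bar a\}$, the first pairing reproduces exactly $|\mathbf{E}\{a\}|^{2}$ and cancels; the second pairing contracts to $n^{-2}\mathrm{Tr}(C_{S}C_{S}^{\ast})$; the third to $n^{-2}\mathrm{Tr}(C_{S}\bar C_{S})$, which is bounded by $n^{-2}\mathrm{Tr}(C_{S}C_{S}^{\ast})$ via Cauchy--Schwarz on the Frobenius inner product; and the diagonal correction is $(m_{4}-3)n^{-2}\sum_{\mu}|(C_{S})_{\mu\mu}|^{2}$, which is also bounded by $|m_{4}-3|\,n^{-2}\mathrm{Tr}(C_{S}C_{S}^{\ast})$. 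Summing yields (\ref{exa}) with a constant dominated by $m_{4}+1$.

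The main obstacle, such as it is, lies entirely in the combinatorics of part (iv): one must be careful to match each Wick pairing with the right trace expression $\mathrm{Tr}(C_{S}C_{S}^{\ast})$ or $\mathrm{Tr}(C_{S}\bar C_{S})$, since $C$ is not assumed hermitian, and then arrange the crude but uniform Cauchy--Schwarz bounds so that the final constant is independent of $n$ and $S$. The remaining parts are routine manipulations.
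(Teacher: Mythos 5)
Your treatment of parts (i), (ii), and the formulas (\ref{r12})--(\ref{tr12}) in (iii) is correct and follows essentially the paper's route. For (\ref{troc}), however, you take a genuinely different path: you reduce the bound to $|\mathrm{Tr}(G_A(-\xi)-G_B(-\xi))|\le 1/\xi$ and prove that by Weyl interlacing and a telescoping sum, whereas the paper bounds the right side of (\ref{tr12}) directly, dropping the $1$ in the denominator $1+K(G_B(-\xi)Y,Y)$ (using $K\ge0$ and $(G_B(-\xi)Y,Y)\ge0$) and then applying the resolvent inequality $(G_B(-\xi)Y,Y)=((B+\xi)G_B^2(-\xi)Y,Y)\ge\xi(G_B^2(-\xi)Y,Y)$. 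Both arguments are valid; yours substitutes a spectral-theoretic fact for a short resolvent computation. Incidentally, the interlacing idea \emph{is} used in the paper, but for the NCM fluctuation bound in Lemma~\ref{l:mart}(i), not here.

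In part (iv) there is a genuine gap in the final bookkeeping. You bound the three surviving terms of $\mathbf{Var}\{a\}$ separately: the pairing $\delta_{\mu\rho}\delta_{\nu\sigma}$ gives $n^{-2}\mathrm{Tr}\,C_SC_S^*$, the pairing $\delta_{\mu\sigma}\delta_{\nu\rho}$ is dominated by the same via Cauchy--Schwarz, and you bound the diagonal correction by $|m_4-3|\,n^{-2}\mathrm{Tr}\,C_SC_S^*$. Summing gives the constant $2+|m_4-3|$, which for $m_4\in[1,2)$ is \emph{strictly larger} than $m_4+1$ (e.g.\ $4$ rather than $2$ in the Rademacher case $m_4=1$), so your argument as written does not establish (\ref{exa}) with $\mu=m_4+1$. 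The paper avoids the absolute value by first splitting each of the two nontrivial pairings into off-diagonal ($\mu\neq\nu$) and diagonal ($\mu=\nu$) contributions and merging the two diagonal pieces with the $(m_4-3)$ term; the net coefficient of $\sum_\mu|(C_S)_{\mu\mu}|^2$ then becomes $m_4-1\ge0$ (using $m_4\ge1$, which follows from Jensen applied to $\mathbf{E}\{X_\alpha^2\}=1$), and each off-diagonal sum is $\le\mathrm{Tr}\,C_SC_S^*$, giving exactly $1+1+(m_4-1)=m_4+1$. The fix is small, but without it the stated constant is not obtained.
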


\begin{proof}
Assertions (i) and (ii) are elementary.

(iii). To obtain (\ref{r12}) we use (\ref{resi}) with $A$ and $B$ of (\ref%
{r11}) to write the formula%
\begin{equation}
G_{A}(z)=G_{B}(z)-KG_{A}(z)LG_{B}(z).  \label{resi1}
\end{equation}%
Multiplying it by $L$ of (\ref{L}) from the right and using $%
LG_{B}(z)L=(G_{B}(z)Y,Y)L$, we get
\begin{equation}
G_{A}(z)L=(1+K(G_{B}(z)Y,Y))^{-1}G_{B}(z)L.  \label{resi2}
\end{equation}%
Plugging this into the r.h.s. of (\ref{resi1}), we obtain (\ref{r12}) and
then (\ref{tr12}).

Note that the r.h.s. of (\ref{r12}) and (\ref{tr12}) are well defined for $%
\Im z\neq 0$. Indeed, it follows from (\ref{resi}) with $A-z$ as $A$ and \ $%
A-z^{\ast }$as $B$\ that
\begin{equation*}
\Im G(z)=(2i)^{-1}(G-G^{\ast })=\Im z\ G^{\ast }(z)G(z),
\end{equation*}%
thus
\begin{equation*}
|1+K(G_{B}(z)Y,Y)|\geq |K\Im (G_{B}(z)Y,Y)|=|K|\ |\Im z|\
||G_{B}(z)Y||^{2}>0,\ \Im z\neq 0.
\end{equation*}%
To get (\ref{troc}) we take into account that $K\geq 0$ and $(G_{B}(-\xi
)Y,Y)\geq 0$, since $B$, hence, $G_{B}(-\xi )$ is positive definite. This yields
the following bound for the r.h.s. of (\ref{tr12})
\begin{equation*}
||C||\,||G_{B}(-\xi )Y)||^{2}/(G_{B}(-\xi )Y,Y)=||C||(G_{B}^{2}(-\xi
)Y,Y)/(G_{B}(-\xi )Y,Y)
\end{equation*}%
implying
\begin{align*}
& (G_{B}(-\xi )Y,Y)=(G_{B}^{-1}(-\xi )G_{B}^{2}(-\xi
)Y,Y)=((B+\xi )G_{B}^{2}(-\xi )Y,Y) \\
& \hspace{1.5cm}=(G_{B}(-\xi )BG_{B}(-\xi )Y,Y)+\xi (G_{B}^{2}(-\xi
)Y,Y)\geq \xi (G_{B}^{2}(-\xi )Y,Y).
\end{align*}%

(iv) We use the formulas (see (\ref{m4}))%
\begin{align*}
\mathbf{E}\{X_{\alpha _{1}}X_{\alpha _{2}}\} =\delta _{\alpha_{1}\alpha_{2}}, \\
\mathbf{E}\{X_{\alpha _{1}}X_{\alpha _{2}}X_{\alpha _{3}}X_{\alpha _{4}}\}
&=\delta _{\alpha_{1}\alpha_{2}}\delta _{\alpha_{3}\alpha_{4}}+\delta _{\alpha_{1}\alpha_{3}}\delta
_{\alpha_{2}\alpha_{4}}
\\&\hspace{0.3cm}+\delta _{\alpha_{1}\alpha_{4}}\delta _{\alpha_{2}\alpha_{3}}+(m_{4}-3)\delta
_{\alpha_{1}\alpha_{2}}\delta _{\alpha_{1}\alpha_{3}}\delta _{\alpha_{1}\alpha_{4}}.
\end{align*}%
The first line above, (\ref{L}) and (\ref{ysx}) yields (\ref{eys})\ and
(\ref{ea}), while the second line implies%
\begin{eqnarray*}
\mathbf{E}\{|a|^{2}\} &=&n^{-2}|\mathrm{Tr\,}C_{S}|^{2}+n^{-2}\sum_{\alpha
\neq \beta =1}^{n}|(C_{S})_{\alpha \beta }|^{2} \\
&&+n^{-2}\sum_{\alpha \neq \beta =1}^{n}(C_{S})_{\alpha \beta
}(C_{S})_{\beta \alpha }^{\ast }+(m_{4}-1)n^{-2}\sum_{\alpha
=1}^{n}|(C_{S})_{\alpha \alpha }|^{2}.
\end{eqnarray*}%
The sums of the r.h.s. are bounded by%
\begin{equation*}
n^{-2}\sum_{\alpha \neq \beta =1}^{n}|(C_{S})_{\alpha \beta }|^{2}=n^{-2}%
\mathrm{Tr\,}C_{S}C_{S}^{\ast }
\end{equation*}%
and $m_{4}\geq 1$ in view of (\ref{m4}). This leads (\ref{exa}).
\end{proof}
%add introd!!!
%%%%%%%%%%%%%%%%%%%%%%%%%%%%%%
%We will prove now the main result of this section.
%%%%%%%%%%%%%%%%%%%%%%%%proof t:indga

We will prove now Theorem \ref{t:ind}.

\begin{proof}
We begin with using Lemma \ref{l:mart} (i) below implying that the fluctuations (\ref{mnu4}) of $\nu _{%
\mathcal{M}_{n}}$ vanish sufficiently fast as $n\rightarrow \infty $. This
and the Borel-Cantelli lemma imply that
\[
\lim_{n \to \infty} |\nu_{\mathcal{M}_{n}}(\Delta)-\mathbf{E}\{\nu _{\mathcal{M}_{n}}(\Delta)\}|=0
\]
with probability 1, hence reduce the proof of the theorem to
the proof of the weak convergence of the expectation
\begin{equation}
\overline{\nu }_{\mathcal{M}_{n}}:=\mathbf{E}\{\nu _{\mathcal{M}_{n}}\}
\label{numc}
\end{equation}%
of $\nu _{\mathcal{M}_{n}}$ to the limit $\nu _{\mathcal{M}}$ whose
Stieltjes transform is given by (\ref{fhk}) -- (\ref{hcond}). Since $\mathcal{M}%
_{n}$ is positive definite, hence, its spectrum belongs to the closed positive
semiaxis $\mathbb{R}_{+}$ for all $n$, it suffices to prove the tightness of the
sequence of measures $\{\overline{\nu }_{\mathcal{M}_{n}}\}_{n}$ and the
pointwise convergence on a set of positive Lebesgue measure in $\mathbb{C}%
\setminus \mathbb{R}_{+}$ of their Stieltjes transforms (cf. (\ref{stm}))
\begin{equation}
f_{\mathcal{M}_{n}}(z):=\int_{0}^{\infty }\frac{\overline{\nu }_{\mathcal{M}%
_{n}}(d\lambda )}{\lambda -z}, \;  \mathbb{C}%
\setminus \mathbb{R}_{+} \label{fmnc}
\end{equation}%
to the limit satisfying (\ref{fhk}) -- (\ref{hcond}), see, e.g. \cite%
{Pa-Sh:11}, Proposition 2.1.2.

The tightness is guaranteed by the uniform in $n$ bound for%
\begin{equation}
\mu _{n}^{(1)}:=\int_{0}^{\infty }\lambda \overline{\nu }_{\mathcal{M}%
_{n}}(d\lambda ),  \label{mom}
\end{equation}%
since for any $T>0$ we have for the tail of $ \overline{\nu }_{\mathcal{M}%
_{n}}$
\[
\int_T^\infty \overline{\nu }_{\mathcal{M}%
_{n}}(d\lambda ) \le T^{-1}\int_{T}^{\infty }\lambda \overline{\nu }_{\mathcal{M}%
_{n}}(d\lambda )\le \mu _{n}^{(1)}/T.
\]
According to the definition of the NCM (see, e.g. (\ref{ncm})), spectral
theorem and (\ref{mncal}), we have
\begin{equation*}
\mu _{n}^{(1)}=\mathbf{E}\{n^{-1}\mathrm{Tr}\,\mathcal{M}_{n}\}=\mathbf{E}%
\{n^{-2}\mathrm{Tr}\,X_{n}R_{n}X_{n}^{T}K_{n}\}
\end{equation*}%
and then (\ref{tab}), (\ref{r2}) -- (\ref{Xn}) and (\ref{Dn}) -- (\ref{Phi})
yield%
\begin{equation}
\mu _{n}^{(1)}\leq n^{-2}\Phi _{1}^{2}\mathbf{E}\{\mathrm{Tr}\,%
X_{n}R_{n}X_{n}^{T}\}=\Phi _{1}^{2}n^{-1}\mathrm{Tr}\,R_{n}\leq
r_{2}^{1/2}\Phi _{1}^{2},  \label{tim}
\end{equation}%
where we used the inequality $n^{-1}\mathrm{Tr}\,R_{n} \le (n^{-1}\mathrm{Tr}\,R_{n}^2)^{1/2}$ to obtain the r.h.s. bound. This implies the tightness of $\{\overline{\nu }_{\mathcal{M}_{n}}\}_{n}$
and reduces the proof of the theorem to the proof of pointwise in $\mathbb{C}%
\setminus \mathbb{R}_{+}$ convergence of (\ref{fmnc}) to the limit
determined by (\ref{fhk}) -- (\ref{kh}).

The above argument, reducing the analysis of the large size behavior of the
eigenvalue distribution of random matrices to that of the expectation of the Stieltjes transform of the distribution, is widely used in random matrix
theory (see \cite{Pa-Sh:11}, Chapters 3, 7, 18 and 19), in particular, while
dealing with the sample covariance matrices. However, the matrix $\mathcal{M}%
_{n}$ of (\ref{mncal}) differs essentially from the sample covariance
matrices, since the "central" matrix $K_{n}$ of (\ref{Dn}) is random and
dependent on $X_{n}$ (the data matrix according to statistics), while in the
sample covariance matrix the analog of $K_{n}$ is either deterministic or
random but independent of $X_{n}$. Nevertheless, we show that in our case of
the $X_{n}$-dependent $K_{n}$ of (\ref{Dn}) it suffices to follow
essentially the proof for $X_{n}$-independent analogs of $K_{n}$, which
dates back to \cite{Ma-Pa:67,Pa:72} and has been largely extended and
used afterwards, see, e.g. \cite{Pa-Sh:11,Ak-Co:11,Ba-Si:00,Gi:12}.

We outline first the scheme of the proof of the theorem. Write (\ref{mncal}) as
\begin{equation}
\mathcal{M}_{n}=\sum_{j=1}^{n}K_{jn}L_{jn},  \label{mnr1}
\end{equation}%
where $K_{jn}$ are given by (\ref{Dn}) and
\begin{equation}
L_{jn}=L_{Y_{j}}  \label{lljn}
\end{equation}%
is the rank-one matrix (\ref{L}) corresponding to the random vector (cf. (%
\ref{ysx}))
\begin{equation}
Y_{j}=\{n^{-1/2}(S_{n}X_{n}^{T})_{j\alpha }\}_{\alpha
=1}^{n}=n^{-1/2}S_{n}X_{j},\;X_{j}=\{X_{j\alpha }\}_{\alpha =1}^{n},
\label{Yj}
\end{equation}%
i.e., $X_{j}$ is the $j$th row of the random matrix (\ref{Xn}), thus the
collection $\{X_{j}\}_{j=1}^{n}$ consists of i.i.d. random vectors
satisfying (\ref{m4}).

It follows from the definition of the Normalized Counting Measure (see (\ref%
{ncm})) and the spectral theorem for the resolvent%
\begin{equation}
G_{\mathcal{M}_{n}}(z)=(\mathcal{M}_{n}-z)^{-1}  \label{gmnc}
\end{equation}%
of $\mathcal{M}_{n}$ that
\begin{equation}
f_{\mathcal{M}_{n}}(z)=\mathbf{E}\{n^{-1}\mathrm{Tr}\,G_{\mathcal{M}_{n}}(z)\}.
\label{fntr}
\end{equation}%
Hence, we have to deal with $G_{\mathcal{M}_{n}}(z)$.

By using the resolvent
identity (\ref{resi}) for $A=\mathcal{M}_{n}$ and $B=0$, we obtain for (\ref%
{gmnc}) in view of (\ref{mnr1})%
\begin{equation}
G(z)=-z^{-1}+z^{-1}\sum_{j=1}^{n}K_{j}G(z)L_{j},  \label{cg1}
\end{equation}%
and we omit here and below the subindex $\mathcal{M}_{n}$ in the resolvent (%
\ref{gmnc}) as well as the subindex $n$ %and the argument $z$
in many instances below where this does no lead to confusion.

Next, we choose in (\ref{r12})%
\begin{equation}
A=\mathcal{M}_{n},\;B=\mathcal{M}_{n}^{(j)}:=\mathcal{M}_{n}-K_{j}L_{j}
\label{mmj}
\end{equation}%
and use (\ref{resi2})
%multiply it by $L_j$ from the right
to obtain
\begin{equation}
G(z)L_j=G_{j}(z)L_j(1+K_{j}a_{j}(z))^{-1},\;a_{j}(z):=(G_{j}(z)Y_{j},Y_{j}),
\label{ggj}
\end{equation}%
where
\begin{equation}
G_{j}(z):=(\mathcal{M}_{n}^{(j)}-z)^{-1}  \label{cgj}
\end{equation}%
and $Y_{j}$ is defined in (\ref{Yj}).
Plugging (\ref{ggj}) into (\ref{cg1}), we obtain our basic starting formula%
\begin{equation}
G(z)=-z^{-1}+z^{-1}\sum_{j=1}^{n}\frac{K_{j}}{(1+K_{j}a_{j}(z))}%
G_{j}(z)L_{j}  \label{cg20}
\end{equation}
which we are going to convert into the "prelimit" version of the
system (\ref{fhk}) -- (\ref{kh}) %, i.e., a form expressed via $%
%G(z)$ but not $G_{j}(z)$ and $L_{j}$ (note that $a_{j}(z)=\mathrm{Tr}\,%
%G_{j}(z)L_{j}$ by (\ref{tral}))
plus error terms vanishing as $n\rightarrow \infty $.

It follows from (\ref{fntr}) that we are allowed to make any modification of
(\ref{cg20}) provided that the corresponding error term $\mathcal{E}_n$
satisfies
\begin{equation}
\mathbf{E}\{n^{-1}\mathrm{Tr}\,\mathcal{E}_{n}\}=o(1),\;n\rightarrow \infty .
\label{err}
\end{equation}%
Denote $\mathbf{E}_{j}\{...\}$ the operation of  expectation conditioned on $\{X_k\}_{k \neq j}$
%only with respect to $X_{j}$
and use:

(i) (\ref{m4}) -- (\ref{exa}) and (\ref{Yj}) to
replace the random quadratic form  $(C Y_{j},Y_{j})$
%(\ref{ggj})
with a $Y_{j}$%
-independent matrix $C_{}$ by
\begin{equation}
\mathbf{E}_{j}\{(CY_{j},Y_{j})\}=n^{-1}\mathbf{E}_{j}\{(C_{S}X_{j},X_{j})%
\}=n^{-1}\mathrm{Tr}\,CR;  \label{exfj}
\end{equation}

(ii) (\ref{tr12}) -- (\ref{troc}) to replace
$n^{-1}\mathrm{Tr}G_{j}(z)C$ with a $Y_{j}$%
-independent $C$ by $n^{-1}\mathrm{Tr}\,G(z)C$;
%with the
%note the factor $1/n$ in the r.h.s. of the formula;

(iii) Lemma \ref{l:mart} to replace the random variable $n^{-1}\mathrm{Tr}\,%
G(z)C$ with a $Y_{j}$%
-independent matrix $C$ by the expectation $\mathbf{E}\{n^{-1}%
\mathrm{Tr}\,G(z)C\}$.

We will  apply then:  (i) with $C=G_{j}$ to replace $a_{j}(z)$ of (\ref{ggj})
by
\begin{equation}
h_{jn}(z)=n^{-1}\mathrm{Tr}\,RG_{j}(z),  \label{hj}
\end{equation}%
 (ii) with $C=R$ to replace $h_{jn}(z)$ by
$h_{n}(z)$ and then
(iii) with $C=R$ to replace $h_{n}(z)$ by $\overline{h_{n}}(z)$, where%
\begin{equation}
h_{n}(z)=n^{-1}\mathrm{Tr}\,RG(z)=n^{-1}\mathrm{Tr}\,SG(z)S,\;\; \overline{h_{n}}%
(z)=\mathbf{E}\{h_{n}(z)\}.  \label{hns}
\end{equation}%
As a result, we can replace $(1+K_{j}a_{j}(z))$ by $(1+K_{j}\overline{h_{n}}%
(z))$ in the r.h.s. of (\ref{cg20}), see Lemma \ref{l:dif} for details.

Likewise, we can replace $L_{j}$ in (\ref%
{cg20}) by its expectation $n^{-1}R$ by using (i)
and then replace $G_{j}(z)$ by $G(z)$ by using (ii) to convert (\ref%
{cg20}) into%
\begin{equation}
G(z)=-z^{-1}+z^{-1}\overline{k}_{n}(z)G(z)R+T_{1}(z),  \label{cg30}
\end{equation}%
where%
\begin{equation}
k_{n}(z)=\frac{1}{n}\sum_{j=1}^{n}\frac{K_{j}}{1+K_{j}\overline{h}_{n}(z)},\;%
\overline{k}_{n}(z)=\mathbf{E}\{k_{n}(z)\}  \label{kns}
\end{equation}%
and%
\begin{equation}
T_{1}(z)=z^{-1}\sum_{j=1}^{n}\frac{K_{j}}{1+K_{j}a_{j}(z)}%
G_{j}(z)L_{j}-z^{-1}\overline{k}_{n}(z)G(z)R.  \label{ct10}
\end{equation}
is the error term.

Applying to (\ref{cg30}) the operation $\mathbf{E}\{n^{-1}\mathrm{Tr\;}\dots\}$
and taking into account (\ref{fntr}) and (\ref{hns}), we get%
\begin{equation}
f_{\mathcal{M}_{n}}(z)=-z^{-1}+z^{-1}\overline{k}_{n}(z)\overline{h}%
_{n}(z)+t_{1n},  \label{fhkn}
\end{equation}%
where%
\begin{equation}
t_{1n}(z)=\mathbf{E}\{n^{-1}\mathrm{Tr}\, T_{1}(z)\},  \label{t10}
\end{equation}%
i.e., a "prelimit" version of (\ref{fhk}) with the error term $t_{1n}$, cf. (%
\ref{err}).

Next, we have from (\ref{cg30})%
\begin{equation}
G(z)=\mathcal{G}(z)+zT_{1}\mathcal{G}(z),\;\mathcal{G}(z)=(\overline{k}%
_{n}(z)R-z)^{-1}.  \label{cg40}
\end{equation}%
Multiplying the formula by $R$, applying to the result the operation $%
\mathbf{E}\{n^{-1}\mathrm{Tr\;}...\}$ and using the fact that $R$ of (\ref%
{r2}), hence, $\mathcal{G}$ are independent of $X_n$ of (\ref{Xn}), we obtain in view of (\ref{hns})%
\begin{equation}
\overline{h}_{n}(z)=\int_{0}^{\infty }\frac{\lambda \nu _{R_{n}}(d\lambda )}{%
\lambda \overline{k}_{n}(z)-z}+t_{2n},  \label{hkn}
\end{equation}%
where $\nu _{R_{n}}$ is the Normalized Counting Measure of $R_{n}$ defined
in (\ref{nur}). This is a "prelimit" version of (\ref{hk}) with the error
term (cf. (\ref{err}) and (\ref{t10}))
\begin{equation}
t_{2n}(z)=-z\mathbf{E}\{n^{-1}\mathrm{Tr}\, T_{1}\mathcal{G}(z)R\}.  \label{t20}
\end{equation}%
At last, observing that according to the conditions of the theorem (see (\ref%
{Xn}), (\ref{bl}) and (\ref{Dn})) $\{K_{j}\}_{j=1}^{n}$ are independent
identically distributed (and, possibly, $n$-dependent) random variables, we
obtain from (\ref{kns}) the "prelimit version"%
\begin{equation}
\overline{k}_{n}(z)=\int_{0}^{\infty }\frac{\lambda \overline{\nu }%
_{K_{n}}(d\lambda )}{\lambda \overline{h}_{n}(z)+1}  \label{khn}
\end{equation}%
of (\ref{kh}) in which $\overline{\nu }_{K_{n}}$ is the probability law of
(see (\ref{Dn}))
\begin{equation}
K_{jn}=(\varphi ^{\prime }(\eta _{jn}+b_{j}))^{2},\;\eta
_{jn}=n^{-1/2}\sum_{\alpha =1}^{n}X_{j\alpha }x_{\alpha n}.  \label{nukn0}
\end{equation}%
Having obtained semi-heuristically relations (\ref{fhkn}), (\ref{hkn}) and (%
\ref{khn}), we pass now to their rigorous derivation, i.e., to the proof
that the remainders $t_{1n}$ of (\ref{t10}) and $t_{2n}$ of (\ref{t20})
vanish in the limit $n\rightarrow \infty $ and that $f_{\mathcal{M}_{n}}$, $%
\overline{h}_{n}$ and $\overline{k}_{n}$ converge to a solution of (\ref{fhk}%
) -- (\ref{kh}).

We will deal first with the $n \to \infty$ limit in (\ref{fhkn}), (\ref{khn})
and (\ref{hkn}) assuming that $t_{1n}$ and $t_{2n}$ vanish as $n\rightarrow
\infty $. In fact, the limit is a version of that widely used in random
matrix theory, see, e.g. \cite{Pa-Sh:11}. Thus, we just outline the
procedure.

According to (\ref{fmnc}) $f_{\mathcal{M}_{n}}$ is analytic in $\mathbb{C}%
\setminus \mathbb{R}_{+}$ for every $n$. Thus, by Vitali's theorem on the convergence of analytic functions, it suffices to study the limiting properties of the sequence $\{f_{\mathcal{M}_{n}}\}_{n}$ for $z$
varying in a closed interval of the open negative semiaxis%
\begin{equation}
I_{-}=\{z\in \mathbb{C}:z=-\xi ,\;0<\xi _{-}\leq \xi \leq \xi _{+}<\infty \},
\label{imi}
\end{equation}%
where $\xi_\pm$ do not depend on $n$.

Furthermore, since $\mathcal{M}_{n}$ of (\ref{mncal}) and $\mathcal{M}%
_{n}^{(j)}$ of (\ref{mmj}) are positive definite, their resolvents $G(z)$
and $G_{j}(z )$ for $z=-\xi \in I_{-}$ are also positive definite, thus
\begin{equation}
||G(-\xi )||\leq 1/\xi ,\;||G_{j}(-\xi )||\leq 1/\xi,  \; \;\xi >0,  \label{nors}
\end{equation}
and we have for $a_{j}$ of (\ref{ggj})
\begin{equation}
a_{j}(-\xi )\geq 0,\;\xi >0.  \label{aj}
\end{equation}%
Besides, we have from (\ref{Dn}) and (\ref{Phi})%
\begin{equation}
0\leq K_{j}\leq \Phi _{1}^{2}.  \label{Kj}
\end{equation}%
Thus, $1+K_{j}a_{j}(-\xi )\geq 1$ and (\ref{ggj}) is well defined for $\xi
>0$.

It follows from  (\ref{hns}),  spectral theorem for $\mathcal{M}_{n}$ and (\ref{r2}) that
if $\{\lambda_\alpha\}_\alpha$ and $\{\psi_\alpha\}_\alpha$ are the eigenvalues and the eigenvectors of $\mathcal{M}_n$, then
\begin{align}
&\hspace{0cm} h_n(z)=\int_0^\infty \frac{\mu_n (d \lambda)}{\lambda-z}, \; z \in \mathbb{C} \setminus \mathbb{R}_+,
\notag
\\&\hspace{0.5cm}  \mu_n =n^{-1} \sum_{\alpha} \delta_{\lambda_\alpha} (R\psi_\alpha,\psi_\alpha), \;  \notag
\\&\hspace{-1cm} 0<\mu_n (\mathbb{R}_+)=n^{-1} \sum_{\alpha}(R\psi_\alpha,\psi_\alpha)=n^{-1}\mathrm{Tr}\, R \le r_2^{1/2},\label{hrep}
\end{align}
where we took into account that $R$ is positive defined and used Schwarz inequality for traces. This implies that $h_n$ is analytic in $\mathbb{C} \setminus \mathbb{R}_+$ and
\begin{equation}\label{hnev}
\Im h_n(z)  \Im z =\int_0^\infty \frac{\mu_n (d \lambda)}{|\lambda-z|^2}>0, \; \Im z \neq 0.
\end{equation}
In particular, the function $\overline{k}_n$ of (\ref{kns}) is also analytic in $\mathbb{C} \setminus \mathbb{R}_+$.
%Since $\mathcal{M}_{n}$, $\mathcal{M}_{n}^{(j)}$
%since $S_{n}$ and $R_{n}=S_{n}^{2}$ of (\ref{r2}) are also positive
%definite, $SG(-\xi )S$ is positive definite as well for $\xi \in \lbrack \xi
%_{-},\xi _{+}]$ %by the spectral theorem, we have%
%\begin{equation}
%||G(-\xi )||\leq 1/\xi ,\;||G_{j}(-\xi )||\leq 1/\xi ,  \label{nors}%
%\end{equation}%

It follows also from the above and (\ref{fmnc}) that
%, (\ref{hns}) and (\ref{kns})
%for $z=-\xi \in I_{-}$ of (\ref{imi})
\begin{equation}
0<f_{\mathcal{M}_{n}}(-\xi )\leq 1/\xi ,\;\;0<\overline{h}_{n}(-\xi )\leq
r_{2}^{1/2}/\xi ,\;\; 0<\overline{k}_{n}(-\xi )\leq 1,\;\;\xi >0.
\label{pos}
\end{equation}%
Moreover, since the sequences $\{f_{\mathcal{M}_{n}}\}_n, \; \{\overline{h}_{n}\}_n$
and $\{\overline{k}_{n}\}_n$ are real analytic on $I_-$ of (\ref{imi}), there exists
a subsequence $n_{j}\rightarrow \infty $ such that $\{f_{\mathcal{M}%
_{n_{j}}}\},\{\overline{h}_{n_{j}}\}$ and $\{\overline{k}_{n_{j}}\}$
converge uniformly on (\ref{imi}) to certain limits $f_{\mathcal{M}},h$ and $%
k$ analytic in $\mathbb{C}\setminus \mathbb{R}_{+}$. This allows us to carry
out the limit along $n_{j}\rightarrow \infty $ in the second term in the
r.h.s. of (\ref{fhkn}) and to obtain (\ref{fhk}) provided that $t_{1n_{j}}$
vanishes as $n_{j}\rightarrow \infty $.

Next, write the first term in the r.h.s. of (\ref{hkn}) for $z=-\xi \in
I_{-} $ as%
\begin{equation*}
\int_{0}^{\infty }\frac{\lambda \nu _{R_{n_{j}}}(d\lambda )}{\lambda k(-\xi
)+\xi }+(\overline{k}_{n_{j}}(-\xi )-k(-\xi ))\int_{0}^{\infty }\frac{%
\lambda ^{2}\nu _{R_{n_{j}}}(d\lambda )}{(\lambda k(-\xi )+\xi )(\lambda
\overline{k}_{n_{j}}(-\xi )+\xi )}.
\end{equation*}%
It follows then from (\ref{r2}), (\ref{nur}) and (\ref{pos}) that the first
term tends to the r.h.s. of (\ref{hk}) as $n_{j}\rightarrow \infty $. The
integral in the second term is bounded by $r_{2}/\xi ^{2}$ in view of (\ref%
{r2}) and (\ref{pos}), hence, the second term vanishes as $n_{j}\rightarrow
\infty $. Thus, we obtain (\ref{hk}) provided that $t_{2n_{j}}$ vanishes as $%
n_{j}\rightarrow \infty $.

An analogous argument applies to (\ref{khn}). However, to obtain (\ref{kh})
and (\ref{nuka}), we have to find the limiting probability law of the random variable
$\eta _{jn}$ of (\ref{nukn0}). It follows from the standard facts on the
Central Limit Theorem (see, e.g. \cite{Sh:96}, Section III.4), (\ref{m4})
and (\ref{qqn}) that the law is Gaussian of zero mean and variance $q-\sigma
_{b}^{2}$, see Lemma \ref{l:clt} for details. This proves (\ref{kh}).

Thus, we have proved the validity of (\ref{fhk}) -- (\ref{nuka}) for $z\in
I_{-}$ of (\ref{imi}) provided that $t_{1n}$ and $t_{2n}$ of (\ref{t10}) and
(\ref{t20}) vanish uniformly in $z\in I_{-}$. It is shown in Lemma \ref%
{l:uniq} that the system (\ref{hk}) -- (\ref{kh}) is well defined and
uniquely solvable everywhere in $\mathbb{C}\setminus \mathbb{R}_{+}$. This
implies that the whole sequences $\{f_{\mathcal{M}_{n}}\},\{\overline{h}%
_{n}\}$ and $\{\overline{k}_{n}\}$ converge uniformly on any compact set of $%
\mathbb{C}\setminus \mathbb{R}_{+}$, that their limits $f_{\mathcal{M}},h$
and $k$ are not identically zero and can be found from relations (\ref{fhk})
-- (\ref{nuka}) which are valid everywhere in $\mathbb{C}\setminus \mathbb{R}%
_{+}$. Indeed, if $h$ is identically zero, then it follows from (\ref{kh})
with $z=-\xi <0$ that $k(-\xi )=k_{1}>0\,$, where $k_{1}$ is the first
moment of measure $\nu _{K}$ of (\ref{nuka}). Then (\ref{hk}) implies that $%
\nu _{R}$ is concentrated at zero. This contradicts condition (a) of the
theorem. Analogously, assuming that $k$ is identically zero, we conclude
that $\nu _{K}$ of (\ref{nuka}) is concentrated at zero and this is
impossible if $\varphi $ is not identically constant.
%Let us show that (\ref{hk}) and (\ref%
%{kh}) are valid everywhere in $\mathbb{C}\setminus \mathbb{R}_{+}$. It was
%proved above that the relations are valid for $z=-\xi <0$. \ ???

Besides, it follows from  (\ref{pos})
%(\ref{hrep}), (\ref{hk}) -- (\ref{kh}) for $z=-\xi <0$
that $k(-\xi )$ and $\xi h(-\xi )$ are nonnegative and bounded. %
Thus,
the limit $\xi \rightarrow \infty $ in (\ref{fhk}) yields (\ref{numr1}). Conditions (\ref{hcond}) follow from the $n \to \infty$ versions of
(\ref{hrep}) -- (\ref{pos}).

\smallskip
We pass now to the most technical part of the proof in which we establish
that the  error terms (\ref{t10}) and (\ref{t20}) vanish as $n \to \infty$ uniformly on $%
z\in I_{- }$ of (\ref{imi}).

Note first that it suffices to assume that the sequence $\{R_{n}\}$ of (\ref%
{mncal}) -- (\ref{nur}) is uniformly bounded, i.e.,
\begin{equation}
\sup_{n}||R_{n}||\leq \rho <\infty ,  \label{rrho}
\end{equation}%
instead of (\ref{r2}). This is also a standard and technically convenient
trick of random matrix theory where it is shown that once the limiting
Normalized Counting Measure is found under condition (\ref{rrho}), it can be
also found under condition (\ref{r2}), see e.g. \cite{Pa-Sh:11}, Section 19,
in particular, Theorem 19.1.8 for the case where $K_{n}$ is independent of $%
X_{n}$ of (\ref{Xn}). In our case of (\ref{Dn}) -- (\ref{Phi}) the proof of this fact is
given in \cite{Pa:20}.

By using (\ref{ct10}) and (\ref{t10}), we have from (\ref{tral}), (\ref{hns}%
) and (\ref{kns})%
\begin{align}
t_{1n}(-\xi )& =\frac{1}{zn}\sum_{j=1}^{n}\mathbf{E}%
\{a_{j}K_{j}(1+K_{j}a_{j})^{-1}-\overline{h}_{n}K_{j}(1+K_{j}\overline{h}%
_{n})^{-1}\}|_{z=-\xi }  \notag \\
& =-\frac{1}{zn}\sum_{j=1}^{n}\mathbf{E}\{(a_{j}-\overline{h}%
_{n})K_{j}((1+K_{j}a_{j})(1+K_{j}\overline{h}_{n}))^{-1}\}|_{z=-\xi }.
\label{t11}
\end{align}%
It follows then from (\ref{aj}) -- (\ref{pos}) that
\begin{equation}
(1+K_{j}a_{j}(-\xi ))\geq 1,\;(1+K_{j}\overline{h}_{n}(-\xi ))\geq 1.
\label{den}
\end{equation}%
These bounds, (\ref{pos}) and (\ref{t11}) imply%
\begin{equation}
|t_{1n}(-\xi )|\leq \Phi
_{1}^{2}d_{1}\xi^{-1},\;d_{1}=n^{-1}\sum_{j=1}^{n}d_{1j},\;d_{1j}=\mathbf{E}\{|a_{j}-%
\overline{h}_{n}|\}.  \label{t12}
\end{equation}%
According to Lemma \ref{l:dif}, $d_{1j}\leq C^{\prime }/n^{1/2}$ if $n$ is
large enough and we obtain%\
\begin{equation}
|t_{1n}(-\xi )|\leq C_{1}/n^{1/2},\;C_{1}=\Phi _{1}^{4}C^{\prime } \xi^{-1}.
\label{toc1}
\end{equation}%
This and (\ref{fhkn}) justify (\ref{fhk}).

Consider now $t_{2n}$ of (\ref{t20}). Using an argument similar to that
leading to (\ref{t11}), we obtain
\begin{align}
t_{2n}(-\xi )& =n^{-1}\sum_{j=1}^{n}\mathbf{E}%
\{b_{j}K_{j}(1+K_{j}a_{j})^{-1}-\overline{c}_{n}K_{j}(1+K_{j}\overline{h}%
_{n})^{-1}\}|_{z=-\xi }  \notag \\
& \hspace{2cm}=t_{2n}'(-\xi )+t_{2n}''(-\xi ),  \label{t21}
\end{align}%
where (cf. (\ref{ggj}) and (\ref{hns}))
\begin{equation}
b_{j}=(R\mathcal{G}G_{j}Y_{j},Y_{j}),\;c_{n}=n^{-1}\mathrm{Tr}\,R^{2}\mathcal{G%
}G,\;\overline{c}_{n}=\mathbf{E}\{c_{n}\}  \label{bp}
\end{equation}%
and (cf. \ref{t11})
\begin{align}
t_{2n}'(-\xi )& =n^{-1}\sum_{j=1}^{n}\mathbf{E}\{(\overline{h}%
_{n}-a_{j})K_{j}^{2}\overline{c}_{n}((1+K_{j}a_{j})(1+K_{j}\overline{h}%
_{n}))^{-1}\}|_{z=-\xi },  \notag \\
t_{2n}''(-\xi )& =n^{-1}\sum_{j=1}^{n}\mathbf{E}\{(b_{j}-\overline{c}%
_{n})(K_{j}+K_{j}^{2}\overline{h}_{n})((1+K_{j}a_{j})(1+K_{j}\overline{h}%
_{n}))^{-1}\}|_{z=-\xi }.
\label{t2122}
\end{align}%
Since $R$ is positive definite, (\ref{pos}) implies for $\mathcal{G}$ of (%
\ref{cg40}) (cf. (\ref{nors}))%
\begin{equation}
||\mathcal{G}(-\xi )||\leq \xi ^{-1}.  \label{norH}
\end{equation}%
It follows then from (\ref{nors}), (\ref{rrho}), (\ref{bp}) and (\ref{norH})
that\begin{equation}
|\overline{c}_{n}|\leq \rho ^{2}\xi ^{-2}.  \label{cbar}
\end{equation}%
This, (\ref{Kj}) and (\ref{den}) yield%
\begin{equation}
|t_{2n}'(-\xi )|\leq \Phi _{1}^{4}\rho ^{2}\xi ^{-2}d_{1}  \label{t21x}
\end{equation}%
with $d_{1}$ of (\ref{t12}). Thus, Lemma \ref{l:dif} implies%
\begin{equation}
|t_{2n}'(-\xi )|\leq C_{21}/n^{1/2}  \label{toc21}
\end{equation}%
for a certain $n$-independent $C_{21}$.

Likewise, by using (\ref{Kj}), (\ref%
{den}) and (\ref{pos}), we obtain%
\begin{equation}
|t_{2n}''(-\xi )|\leq (\Phi _{1}^{2}+\Phi _{1}^{4}r^{1/2}\xi
^{-1})\ d_{2},\; \ d_{2}=n^{-1}\sum_{j=1}^{n}d_{2j},\;d_{2j}=\mathbf{E}\{|b_{j}-%
\overline{c}_{n}|\},  \label{t211}
\end{equation}%
and then Lemma \ref{l:dif} implies $|t_{2n}''(-\xi )|\leq C_{22}/n^{1/2}$ for
a certain $n$-independent $C_{22}$.

Combining this bound, (\ref{t21}) and (%
\ref{toc21}), we get (cf. (\ref{toc1}))
\begin{equation}
|t_{2n}(-\xi )|\leq C_{2}/n^{1/2}.  \label{toc2}
\end{equation}%
This and (\ref{hkn}) justifies (\ref{hk}).
\end{proof}
%%%%%%%%%%%%%%%%%%%%%%%%%%%%%%%%%%%%%%%%%%%%%%%%%%%%%%%
\begin{remark}\label{r:kgen}
It is noted at the beginning of Section 2 that despite the
fact that the matrices $D^{l}$ of (\ref{D}), hence $K_{n}^{l}$ of (\ref{kan}%
), are random and depend on $X^{l}$ of (\ref{wga}), the limiting
eigenvalue distribution of $M_{n}^{L}$ of (\ref{JJM}) corresponds to the
case where $D^{l}$ of (\ref{D}) and $K_{n}^{l}$ are random but \emph{%
independent} of $X^{l}$, see (\ref{nukal}) and (\ref{nuka}). The emergence
of this remarkable property of $M_{n}^{L}$ is well seen in the above proof,
in particular, in formulas (\ref{cg30}) -- (\ref{nukn0}) and (\ref{toc1}), (%
\ref{toc2}). Moreover, it follows from the above proof that a quite general
dependence of $D^{l}$ on $X^{l}$ is possible provided that probability law
of the entries $\{K_{jn}\}_{j=1}^n$ of $K_{n}$ in (\ref{Dn}) are independent and their probability law admits a limiting form as $n\rightarrow \infty $. For
instance, we can replace $\{X_{ja}\}_{j=1}^{n}$ in $\eta _{jn}$ of (\ref%
{nukn0}) by, say,  $\{X_{ja}^{p}\}_{j=1}^{n}$ with a certain $p$.

It is also noteworthy that formulas (\ref{mnr1}) -- (\ref{Yj}) present the matrix $\mathcal{M}_n$ as the sum of \emph{jointly independent} rank 1 matrices. This, basic for the proof of the theorem (see also Lemma \ref{l:mart}), representation is the reason to pass from matrices $M_n^l$ of (\ref{JJM}) (see also (\ref{m1}) and (\ref{m2m1}))
to matrices $\mathcal{M}_n^l$, see(\ref{cm1}) (\ref{cm1m2}) and (\ref{mncal}).
The representation dates back to works \cite{Ma-Pa:67,Pa:72} and has being widely using since then in random matrix theory.
\end{remark}
%%%%%%%%%%%%%%%%%%%%%%%%%%%%%%%%%%%%%%%%%%%%%%%%%%%%%%%%%%%%%%%%%%
\begin{lemma}
\label{l:dif} Let $d_{1j}(-\xi)$ and $d_{2j}(-\xi)$ be defined in (\ref{t12}) and (\ref{t211})
respectively and $\xi \in I_-$ of (\ref{imi}). Then we have, if $n$ is large
enough%
\begin{equation}
d_{1j}(-\xi)\leq C^{\prime }n^{-1/2},\;d_{2j}(-\xi)\leq C^{\prime \prime }n^{-1/2}, \;\xi \in I_-,
\label{d1b}
\end{equation}%
where $C^{\prime }$ and $C^{\prime \prime }$ do not depend on $n$ and $j$.
\end{lemma}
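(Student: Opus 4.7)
The plan is to decompose each of $a_j(-\xi)-\overline{h}_n(-\xi)$ and $b_j(-\xi)-\overline{c}_n(-\xi)$ into three pieces, each of which is $O(n^{-1/2})$ in $L^1(\Omega_{X,b})$. Throughout, let $\mathbf{E}_j\{\cdot\}$ denote expectation conditioned on $\{X_k\}_{k\neq j}$ and on $\{b_k\}_{k=1}^n$, so that $Y_j$ is (conditionally) a vector of the form (\ref{ysx}) with $X$-independent $S$, while $G_j$, $R$, $\mathcal{G}$ are $\mathbf{E}_j$-measurable. Recall also that the truncation (\ref{rrho}) is in force, giving $\|R\|\le\rho$, and that $\|G_j(-\xi)\|\le\xi^{-1}$, $\|\mathcal{G}(-\xi)\|\le\xi^{-1}$ from (\ref{nors}), (\ref{norH}).

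For $d_{1j}$, write
\[
a_j-\overline{h}_n = \bigl(a_j-\mathbf{E}_j\{a_j\}\bigr) + \bigl(\mathbf{E}_j\{a_j\}-h_n\bigr) + \bigl(h_n-\overline{h}_n\bigr) =: I_j+II_j+III.
\]
By Proposition \ref{p:facts}(iv) applied to $C=G_j(-\xi)$, one has $\mathbf{E}_j\{a_j\}=n^{-1}\mathrm{Tr}\,RG_j(-\xi)$, so $II_j = n^{-1}\mathrm{Tr}\,R(G_j-G)(-\xi)$ and (\ref{troc}) with $C=R$ gives $|II_j|\le\rho/(n\xi)$. For $I_j$, the variance bound (\ref{exa}) with $C=G_j$ yields
\[
\mathbf{E}_j\{|I_j|^2\}\le \mu n^{-2}\mathrm{Tr}\,(SG_jS)(SG_j^*S)\le \mu\rho^2 n^{-1}\xi^{-2},
\]
so $\mathbf{E}\{|I_j|\}\le (\mu\rho^2)^{1/2}\xi^{-1}n^{-1/2}$ by Jensen. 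For $III$, we invoke Lemma \ref{l:mart}, which (via the standard martingale-difference argument based on the decomposition (\ref{mnr1}) into independent rank-one terms and the filtration generated by $X_1,\dots,X_k$) gives $\mathbf{E}\{|h_n-\overline{h}_n|^2\}=O(n^{-1})$, hence $\mathbf{E}\{|III|\}=O(n^{-1/2})$. Summing the three bounds proves the first inequality in (\ref{d1b}).

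The argument for $d_{2j}$ is structurally identical. Since $\mathcal{G}$ is deterministic (it is defined in (\ref{cg40}) via the expectation $\overline{k}_n$ and the $X$-independent matrix $R$), Proposition \ref{p:facts}(iv) applied with $C=R\mathcal{G}G_j$ gives $\mathbf{E}_j\{b_j\}=n^{-1}\mathrm{Tr}\,R^2\mathcal{G}G_j$. Hence
\[
b_j-\overline{c}_n = \bigl(b_j-\mathbf{E}_j\{b_j\}\bigr) + n^{-1}\mathrm{Tr}\,R^2\mathcal{G}(G_j-G) + \bigl(c_n-\overline{c}_n\bigr),
\]
and each summand is treated as above: the conditional-variance bound uses $\|SR\mathcal{G}G_jS\|\le \rho^2\xi^{-2}$ (which still gives $O(n^{-1/2})$ in $L^1$ after taking expectations), the second summand is $O(n^{-1})$ by (\ref{troc}) with $C=R^2\mathcal{G}(-\xi)$, and the third is $O(n^{-1/2})$ again by Lemma \ref{l:mart} applied to the trace $n^{-1}\mathrm{Tr}\,R^2\mathcal{G}G$. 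All constants depend only on $\rho$, $\xi_-$, $\Phi_1$, and $m_4$, which are $n$- and $j$-independent on $I_-$.

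The only step that is not purely algebraic resolvent manipulation is the control of the self-averaging pieces $h_n-\overline{h}_n$ and $c_n-\overline{c}_n$, i.e., term $III$ and its analogue for $d_{2j}$. This is the main obstacle, but it is precisely the content of Lemma \ref{l:mart}(i): because $\mathcal{M}_n$ is a sum of jointly independent rank-one matrices $K_jL_j$ (see (\ref{mnr1})--(\ref{Yj}) and Remark \ref{r:kgen}), the quantities $n^{-1}\mathrm{Tr}\,CG(-\xi)$ for bounded deterministic (or $X$-independent) $C$ form martingales in $n$ with differences of size $O(n^{-1})$ (by the rank-one perturbation bound (\ref{troc})), which yields $\mathrm{Var}\{n^{-1}\mathrm{Tr}\,CG\}=O(n^{-1})$ in a standard way. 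Once this variance bound is cited, the lemma follows.
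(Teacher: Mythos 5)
Your proof is correct and follows essentially the same route as the paper: the decomposition of $a_j-\overline{h}_n$ (and $b_j-\overline{c}_n$) into the conditional-fluctuation piece controlled by the variance estimate (\ref{exa}), the rank-one-perturbation piece controlled by (\ref{troc}), and the self-averaging piece controlled by Lemma \ref{l:mart}(ii) is precisely what the paper does. The only (cosmetic) difference is that the paper first applies Schwarz to pass to $L^2$ and then uses $(a_1+a_2+a_3)^2\le 3(a_1^2+a_2^2+a_3^2)$, whereas you estimate each summand directly in $L^1$ via the triangle inequality and Jensen; both give the same $O(n^{-1/2})$ bound with the same $n$- and $j$-independent constants.
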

%%%%%%%%%%%%%%%%%%%%%%%%%%%%%%%%%%%%%%%%%%%%%%%%%%%%%%%%%%%%%%%%%
\begin{proof}
We have by Schwarz inequality,
\begin{equation}
\hspace{-2cm}d_{1j}:=\mathbf{E}\{|a_{j}-\overline{h}_{n}|\}\leq \mathbf{E}%
^{1/2}\{|a_{j}-\overline{h}_{n}|^{2}\},  \label{d1d2}
\end{equation}%
%we consider $d_{2j}$ of (\ref{t211}). Use
and then the inequality %\begin{equation*}
$\left( a_{1}+a_{2}+a_{3}\right) ^{2}\leq 3(a_{1}^{2}+a_{2}^{2}+a_{3}^{2})$
%\end{equation*}%
yields
\begin{align}
& \hspace{-1.5cm}\mathbf{E}\{|a_{j}-\overline{h}_{n}|^{2}\}  \notag \\
& \hspace{-0.5cm} \leq 3\mathbf{E}\{|a_{j}-h_{jn}|^{2}\}+3\mathbf{E}\{|h_{jn}-h_{n}|^{2}\}+3%
\mathbf{E}\{|h_{n}-\overline{h}_{n}|^{2}\},  \label{3d}
\end{align}%
where $h_{jn}$ is defined in (\ref{hj}). It follows from
(\ref{Yj}) and (\ref{ggj}) that
\begin{equation}
a_{j}:=(G_{j}Y_{j},Y_{j})=n^{-1}(SG_{j}SX_{j},X_{j}).  \label{ajsx}
\end{equation}%
Denote by $\mathbf{E}_{j}\{\dots\}$ the (conditional) expectation with respect to
$X_{j}$ and $\mathbf{Var}_{j}\{\dots\}$ the corresponding variance (recall
that according to (\ref{Xn}) $\{X_{j}\}_{j=1}^{n}$ are the $n$-component i.i.d. vectors with i.i.d. components). Since $G_{j}$ is independent of $X_{j}$ by (\ref{cgj}),
we have from the above and (\ref{ea}) $\ \mathbf{E}_{j}\{a_{j}\}=h_{jn}$
(see (\ref{exfj})), thus the first term on the r.h.s. of (\ref{3d}) is%
\begin{equation}
\mathbf{E}\{|a_{j}-\mathbf{E}_{j}\{a_{j}\}|^{2}\}=\mathbf{E}\{\mathbf{E}%
_{j}\{|a_{j}-\mathbf{E}_{j}\{a_{j}\}|^{2}\}\}=:\mathbf{E}\{\mathbf{Var}%
_{j}\{a_{j}\}\}.  \label{varaj}
\end{equation}%
Next, (\ref{exa}), (\ref{nors}) and (\ref{rrho}) imply%
\begin{equation}
\mathbf{Var}_{j}\{a_{j}\}\leq \mu \ n^{-2}\mathrm{Tr}\,G_{j}R^{2}G_{j}\leq \mu
\rho ^{2}/n\xi ^{2},  \label{vaj}
\end{equation}%
since%
\begin{equation}
|\mathrm{Tr}\,A|\leq n||A||,  \label{trno}
\end{equation}%
and we obtain for the first term of (\ref{3d})%
\begin{equation}
\mathbf{E}\{|a_{j}-h_{jn}|^{2}\}\leq \mu \rho ^{2}/n\xi ^{2}.  \label{dj1}
\end{equation}%
Consider the second term of the r.h.s. of (\ref{3d}). Since $G(-\xi )$ and $%
G_{j}(-\xi )$ in the definitions (\ref{hns}) of $h_{n}$ and (\ref{hj}) of $%
h_{jn}$ are the resolvents of positive definite $\mathcal{M}_{n}$ and $%
\mathcal{M}_{n}-K_{j}L_{j}$, we use (\ref{troc}) with $A=\mathcal{M}_{n}$
and $C=R$ and (\ref{rrho}) to obtain %(writing $G $ and $G_{j} $ instead of
%and  $G_{j}(-\xi )$ here and below)
that $|h_{jn}-h_{n}|\leq \rho /n\xi $. Hence, we have for the second term of
the r.h.s. of (\ref{3d}) %
\begin{equation}
\mathbf{E}\{|h_{jn}-h_{n}|^{2}\}\leq \rho ^{2}/n^{2}\xi ^{2}.  \label{dj2}
\end{equation}%
As for the third term in the r.h.s. of (\ref{3d}), its bound follows from
Lemma \ref{l:mart} (ii) with $A=R$, yielding in view of (\ref{rrho})
\begin{equation*}
\mathbf{E}\{|h_{n}-\overline{h}_{n}|^{2}\}=\mathbf{Var}\{h_{n}\}\leq
C^{(2)}\rho ^{2}/n\xi ^{2}.
\end{equation*}%
Combining this bound with (\ref{dj1}) and (\ref{dj2}) and using then (\ref%
{d1d2}), we get the first bound in (\ref{d1b}).

To prove the second bound in (\ref{d1b}) we apply an analogous argument to
the r.h.s. of
\begin{align}
& \hspace{-1cm}d_{2j}=\mathbf{E}\{|b_{j}-\overline{c}_{n}|\}  \notag \\
& \leq \mathbf{E}\{|b_{j}-c_{jn}|\}+\mathbf{E}\{|c_{jn}-c_{n}|\}+\mathbf{E%
}\{|c_{n}-\overline{c}_{n}|\},  \label{d3j}
\end{align}%
where $c_{jn}=n^{-1}\mathrm{Tr}\,R^{2}\mathcal{G}G_{j}$ (cf. (\ref{hj})).

It follows from (\ref{ea}) and (\ref{bp}) that $\mathbf{E}%
_{j}\{b_{j}\}=c_{jn}$, hence, (cf. (\ref{varaj}))%
\begin{eqnarray*}
\mathbf{E}\{|b_{j}-c_{jn}|\} &=&\mathbf{E}\{|b_{j}-\mathbf{E}_{j}\{b_{j}\}|\}
\\
&=&\mathbf{E}\{\mathbf{E}_{j}\{|b_{j}-\mathbf{E}_{j}\{b_{j}\}|\}\}\leq
\mathbf{E}\{\mathbf{Var}_{j}^{1/2}\{b_{j}\}\}.
\end{eqnarray*}%
Using (\ref{exa}) with $C=R\mathcal{G}G_{j}$ and taking into account that $%
S^{2}=R$ and that $R$ and $\mathcal{G}$ commute (see (\ref{cg40})), we have
by (\ref{nors}), (\ref{norH}) and (\ref{trno})
\begin{equation*}
\mathbf{Var}_{j}\{b_{j}\}\leq \mu n^{-2}\mathrm{Tr\ }\mathcal{G}%
^{2}R^{3}G_{j}RG_{j}\leq \mu \rho ^{4}/n\xi ^{4},
\end{equation*}%
hence, the bound for the first term of (\ref{d3j})%
\begin{equation*}
\mathbf{E}\{|b_{j}-c_{jn}|\}\leq \mu ^{1/2}\rho ^{2}/n^{1/2}\xi ^{2}.
\end{equation*}%
Next, we have
\begin{equation*}
\mathbf{E}\{|c_{jn}-c_{n}|\}\leq \rho ^{2}/n\xi ^{2}
\end{equation*}%
(cf. (\ref{dj2})) for the second term of (\ref{d3j}) and
\begin{equation*}
\mathbf{E}\{|c_{n}-\overline{c}_{n}|\}\leq (C^{(2)})^{1/2}\rho
^{2}/n^{1/2}\xi ^{2}.
\end{equation*}%
by Lemma \ref{l:mart} with $A=R^{2}\mathcal{G}$ for the third term of (\ref%
{d3j}). Plugging the above three bound into (\ref{d3j}), we obtain the
second bound in (\ref{d1b}).
\end{proof}
The next lemma is a version of assertions given in Section 18.2 of
\cite{Pa-Sh:11}.
\begin{lemma}\label{l:mart}
Let $\mathcal{M}_{n}$ be given by (\ref{mncal}) in
which the entries of $X_{n}=\{X_{j\alpha }\}_{j,\alpha =1}^{n}$ of (\ref{Xn}) and the components of $%
b_{n}=\{b_{j }\}_{j=1}^{n}$ of (\ref{b}) are i.i.d. random variables.
Denote $\nu _{\mathcal{M}_{n}}$ the Normalized Counting Measure of $\mathcal{%
M}_{n}$ (see, e.g. (\ref{ncm})) and%
\begin{equation}
s_{n}(z)=n^{-1}\mathrm{Tr}\,AG(z),  \label{sn}
\end{equation}%
where $G(z)=(\mathcal{M}_{n}-z)^{-1}$ is the resolvent of $\mathcal{M}_{n}$
and $A$ is an $n\times n$ and $X_{n}$-independent matrix. We have:

\medskip
(i) for any $n$-independent interval $\Delta $ of spectral axis
\begin{equation}  \label{mnu4}
\mathbf{E}\{|\nu _{\mathcal{M}_{n}}(\Delta )-\mathbf{E}\{\nu _{\mathcal{M}%
_{n}}(\Delta )\}|^{4}\}\leq C^{(1)}/n^{2},
\end{equation}%
where $C^{(1)}$ is an absolute constant;

(ii) for any $n$-independent $\xi >0$%
\begin{equation*}
\mathbf{Var}\{s_{n}(-\xi )\}:=\mathbf{E}\{|s_{n}(-\xi )-\mathbf{E}%
\{s_{n}(-\xi )\}|^{2}\}\leq C^{(2)}||A||^{2}/n\xi ^{2},
\end{equation*}%
where $C^{(2)}$ is an absolute constant.
\end{lemma}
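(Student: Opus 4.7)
The plan is to establish both parts by a martingale-difference decomposition that is standard in random matrix theory (see, e.g., \cite{Pa-Sh:11}, Section 18.2). Introduce the filtration $\{\mathcal{F}_j\}_{j=0}^{n}$ where $\mathcal{F}_j$ is generated by the first $j$ rows of $X_n$ together with the first $j$ components of $b_n$; in particular $\mathcal{F}_0$ is trivial and $\mathcal{M}_n$ is $\mathcal{F}_n$-measurable. Writing $\mathbf{E}_j\{\cdot\}=\mathbf{E}\{\cdot\,|\,\mathcal{F}_j\}$, any $\mathcal{F}_n$-measurable integrable random variable $\xi_n$ admits the telescoping decomposition
\[
\xi_n - \mathbf{E}\{\xi_n\} = \sum_{j=1}^n \gamma_j, \qquad \gamma_j := \mathbf{E}_j\{\xi_n\} - \mathbf{E}_{j-1}\{\xi_n\},
\]
in which $\{\gamma_j\}$ is a martingale difference sequence. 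I take $\xi_n=\nu_{\mathcal{M}_n}(\Delta)$ for (i) and $\xi_n=s_n(-\xi)$ for (ii).

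The heart of the argument is a uniform bound on $|\gamma_j|$ obtained by the leave-one-out device already used in the main proof. Setting $\mathcal{M}_n^{(j)}=\mathcal{M}_n-K_j L_j$ as in (\ref{mmj}), note that $\mathcal{M}_n^{(j)}$ does not involve $(X_j,b_j)$, so if $\xi_n^{(j)}$ denotes the same functional evaluated on $\mathcal{M}_n^{(j)}$, then $\mathbf{E}_j\{\xi_n^{(j)}\}=\mathbf{E}_{j-1}\{\xi_n^{(j)}\}$ and
\[
\gamma_j = \mathbf{E}_j\{\xi_n - \xi_n^{(j)}\} - \mathbf{E}_{j-1}\{\xi_n - \xi_n^{(j)}\}.
\]
Since $\mathcal{M}_n = \mathcal{M}_n^{(j)} + K_j L_j$ is a rank-one non-negative perturbation, Weyl interlacing yields $|\nu_{\mathcal{M}_n}(\Delta) - \nu_{\mathcal{M}_n^{(j)}}(\Delta)| \le 2/n$, while the rank-one resolvent bound (\ref{troc}), applicable because $K_j\ge 0$ makes both matrices positive semidefinite, gives $|n^{-1}\mathrm{Tr}\,AG(-\xi) - n^{-1}\mathrm{Tr}\,AG_j(-\xi)| \le \|A\|/(n\xi)$. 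Consequently $|\gamma_j|\le 4/n$ in case (i) and $|\gamma_j|\le 2\|A\|/(n\xi)$ in case (ii), both bounds being pointwise on the underlying probability space.

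For part (ii), orthogonality of martingale differences in $L^2$ immediately gives
\[
\mathbf{Var}\{s_n(-\xi)\} = \sum_{j=1}^n \mathbf{E}\{|\gamma_j|^2\} \le n\cdot\frac{4\|A\|^2}{n^2\xi^2} = \frac{4\|A\|^2}{n\xi^2}.
\]
For part (i), Burkholder's inequality at exponent four applied to $\{\gamma_j\}$ yields
\[
\mathbf{E}\bigl\{\bigl|\nu_{\mathcal{M}_n}(\Delta)-\mathbf{E}\{\nu_{\mathcal{M}_n}(\Delta)\}\bigr|^4\bigr\} \le C\,\mathbf{E}\Bigl\{\Bigl(\sum_{j=1}^n|\gamma_j|^2\Bigr)^{2}\Bigr\} \le C\,\Bigl(n\cdot\tfrac{16}{n^2}\Bigr)^{2} = \frac{256\,C}{n^2},
\]
which is the claimed estimate with an absolute $C^{(1)}$. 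No step presents a serious obstacle; the only point requiring care is to notice that the rank-one perturbation estimates transfer unchanged to conditional expectations (automatic since both estimates are deterministic), so the proof reduces to choosing the correct filtration and invoking two classical tools, namely eigenvalue interlacing for distribution functions and Burkholder's inequality for martingales.
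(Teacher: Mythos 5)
Your proposal is essentially the paper's own argument: both proofs condition on the rows $\{(X_j,b_j)\}$, exploit the leave-one-out decomposition $\mathcal{M}_n=\mathcal{M}_n^{(j)}+K_jL_j$ to get pointwise $O(1/n)$ bounds on the martingale differences via eigenvalue interlacing for (i) and the rank-one resolvent estimate (\ref{troc}) for (ii), and then convert these into moment bounds. The paper invokes the single pre-packaged martingale inequality of Proposition 18.1.1 in \cite{Pa-Sh:11} at $p=1$ and $p=2$, whereas you use $L^2$-orthogonality and Burkholder separately, which amounts to the same thing.
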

%%%%%%%%%%%%%%%%%%%%%%%%%%%%%%%%%%%%%%%%%%%%%%%%%%%%%%%%%%%%%%%
\begin{proof}
It follows from a general martingale difference argument (see \cite{Pa-Sh:11},
Proposition 18.1.1) that if $\psi :\mathbb{R}^{n^{2}}\rightarrow \mathbb{C}$,
%is a Borelian function, $
$\{X_{j}\}_{j=1}^{n}$ are i.i.d. random vectors, $%
\Psi =\psi (X_{1},\dots,X_{n})$, \ $\mathbf{E}_{j}\{\dots\}$ is the
expectation conditioned on $\{X_k\}_{k \neq j}$
%with respect to $X_{j}$
and%
\begin{equation*}
\Psi _{j}=\mathbf{E}_{j+1}\dots \mathbf{E}_{n}\{\Psi \},
\end{equation*}%
then
\begin{equation}
\mathbf{E\{|}\Psi -\mathbf{E\{}\Psi \mathbf{\}|}^{2p}\mathbf{\}\leq }%
C_{p}n^{p-1}\sum_{j=1}^{n}\mathbf{E}\{|\Psi _{j}-\mathbf{E}_{j}\{\Psi
_{j}\}|^{2p}\},  \label{mart}
\end{equation}%
where $C_{p}$ depends only on $p$.

Choose $\Psi =\nu _{\mathcal{M}_{n}}(\Delta )$ and the rows $\{X_{j\alpha
}\}_{\alpha =1}^{n}$ of $X_{n}$ of (\ref{Xn}) as $X_{j}$ and write $\nu _{%
\mathcal{M}_{n}}(\Delta )=\nu _{\mathcal{M}_{n}^{(j)}}(\Delta )+\mu
_{jn}(\Delta )$, where $\mathcal{M}_{n}^{(j)}$ is defined in (\ref{mmj}).
Since $\mathcal{M}_{n}-\mathcal{M}_{n}^{(j)}=K_{j}L_{j}$ is a rank-one
matrix, we can use the interlacing property of eigenvalues of a hermitian
matrix and its rank-one perturbation (see \cite{Ho-Jo:13}, Section 4.3
and formula (\ref{r12}) of this paper) to show
that $|\mu _{jn}(\Delta )|\leq 1/n$ for any $\Delta \in \mathbb{R}_{+}$
and any realization of random parameters. Hence, taking into account that
$%
\mathcal{M}_{n}^{(j)}$ does not depend on $X_{j}$, we obtain%
\begin{equation}
|\Psi _{j}-\mathbf{E}_{j}\{\Psi _{j}\}|=|\mu _{jn}(\Delta )-\mathbf{E}%
_{j}\{\mu _{jn}(\Delta )\}|\leq 2/n.  \label{cpj}
\end{equation}%
This and (\ref{mart}) with $p=2$ imply assertion (i) of the lemma with $%
C^{(1)}=2^{4}C_{2}$.

To prove assertion (ii) we choose $p=1$ in (\ref{mart})  ,
\begin{equation*}
\Psi =s_{n}(-\xi )=n^{-1}\mathrm{Tr}\,AG(-\xi )
\end{equation*}%
and the same $X_{j}$'s. If $G_{j}$ is given by (\ref{cgj}), then we have by (%
\ref{troc}) with $A$ and $B$ as in (\ref{mmj}) and $C=A:$
\begin{equation*}
n^{-1}\mathrm{Tr}\,AG=n^{-1}\mathrm{Tr}\,AG_{j}-l_{jn},\;\;|l_{jn}|\leq
n^{-1}\xi ^{-1}||A||.
\end{equation*}%
Hence, in this case (cf. (\ref{cpj}))
\begin{eqnarray*}
\mathbf{E}\{|\Psi _{j}-\mathbf{E}_{j}\{\Psi _{j}\}|^{2}\} &=&\mathbf{E}\{%
\mathbf{E}_{j}\{|\Psi _{j}-\mathbf{E}_{j}\{\Psi _{j}\}|^{2}\}\} \\
&=&\mathbf{E}\{\mathbf{E}_{j}\{|l_{jn}-\mathbf{E}_{j}\{l_{jn}\}|^{2}\}\}\leq
\mathbf{E}\{\mathbf{E}_{j}\{|l_{jn}|^{2}\}\}.
\end{eqnarray*}%
%We have from (\ref{ljn}), (\ref{aj}) and (\ref{Kj})
%\begin{equation*}
%|l_{jn}|=n^{-1}(G_{j}Y_{j},Y_{j})^{-1}(G_{j}AG_{j}Y_{j},Y_{j})\leq
%n^{-1}||A||(G_{j}Y_{j},Y_{j})^{-1}(G_{j}^{2}Y_{j},Y_{j}),
%\end{equation*}%
%and then (\ref{GG2}) implies that
%$|l_{jn}|\leq n^{-1}\xi ^{-1}||A||$. We
and we obtain the bound%
\begin{equation*}
\mathbf{E}\{|\Psi _{j}-\mathbf{E}_{j}\{\Psi _{j}\}|^{2}\}\leq n^{-2}\xi
^{-2}||A||^{2}
\end{equation*}%
implying assertion (ii) of the lemma.
\end{proof}
%\begin{remark}
%\label{r:mxind} (i) The independence of random vectors $\mathcal{Y}_j$ in (%
%\ref{cmyy}) is the main reason to pass from the matrices $M_n^l$ given by (%
%\ref{m1}) and (\ref{m2m1}) to the matrices $\mathcal{M}_n$ given by (\ref%
%{cm1}), (\ref{cm1m2}) and (\ref{mncal}).
%\end{remark}
%%%%%%%%%%%%%%%%%%%%%%%%%%%%%%%%%l:xlyl%%%%%%%%%%%%%%%%%%%%%%%%%%%%
\begin{lemma}
\label{l:clt} Let $\{X_{\alpha }\}_{\alpha =1}^{n}$ be i.i.d. random
variables satisfying (cf. (\ref{wga}) and (\ref{Xn}))
\begin{equation}
\mathbf{E}\{X_{\alpha }\}=0,\ \mathbf{E}\{X_{\alpha }^{2}\}=1,\ \mathbf{E}%
\{X_{\alpha }^{4}\}=m_{4}<\infty   \label{xcond}
\end{equation}%
and $\{x_{\alpha n}\}_{\alpha =1}^{n}$ be collection of real numbers
satisfying (\ref{qqn}) and (\ref{xbou}).

Then the random variable (cf. (\ref{nukn0}))
\begin{equation*}
\eta _{n}=n^{-1/2}\sum_{\alpha =1}X_{\alpha }x_{\alpha n}
\end{equation*}%
converges in distribution to
\begin{equation*}
(q-\sigma _{b}^{2})^{1/2}\ \gamma,
\end{equation*}%
where $q$ and $\sigma _{b}^{2}$ are given by (\ref{qqn}) and (\ref{b}) and $%
\gamma $ is the standard Gaussian random variable.
\end{lemma}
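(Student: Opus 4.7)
The plan is to apply the Lyapunov version of the Central Limit Theorem for triangular arrays to the array of row-wise independent random variables
\[
Y_{\alpha}^{(n)} := n^{-1/2}\, X_{\alpha}\, x_{\alpha n}, \qquad \alpha = 1,\ldots,n,
\]
whose sum is precisely $\eta_n$. The i.i.d.\ assumption (\ref{xcond}) makes the $Y_\alpha^{(n)}$ independent within each row, with $\mathbf{E}\{Y_\alpha^{(n)}\}=0$ and $\mathbf{Var}\{Y_\alpha^{(n)}\} = n^{-1} x_{\alpha n}^2$. The two hypotheses (\ref{qqn}) and (\ref{xbou}) are tailored to supply exactly the two ingredients required by the Lyapunov CLT.

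First I would compute the total variance:
\[
s_n^2 := \sum_{\alpha=1}^n \mathbf{Var}\{Y_\alpha^{(n)}\} = n^{-1} \sum_{\alpha=1}^n x_{\alpha n}^2 = q_n - \sigma_b^2 \;\longrightarrow\; q - \sigma_b^2 > 0,
\]
by (\ref{qqn}), establishing the asymptotic variance. Second, I would verify Lyapunov's condition with exponent $\delta = 2$, using the finite fourth moment $m_4 < \infty$ from (\ref{xcond}):
\[
\sum_{\alpha=1}^n \mathbf{E}\bigl\{|Y_\alpha^{(n)}|^{4}\bigr\} = m_4 \, n^{-2} \sum_{\alpha=1}^n (x_{\alpha n})^4 \;\longrightarrow\; 0,
\]
which is exactly the content of (\ref{xbou}). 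Dividing by $s_n^4 \to (q-\sigma_b^2)^2 > 0$ keeps the limit zero, so the Lyapunov condition is fulfilled.

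The Lyapunov CLT (see, e.g., \cite{Sh:96}, Section III.4) then yields $\eta_n \xrightarrow{d} \mathcal{N}(0,\,q-\sigma_b^2)$, which is the distribution of $(q-\sigma_b^2)^{1/2}\gamma$, as claimed. The main (and only) obstacle is verifying the Lyapunov condition, and this has been essentially built into hypothesis (\ref{xbou}); thus the proof is a direct application of a standard limit theorem, with no further technical work required.
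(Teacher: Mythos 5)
Your proof is correct. The only difference from the paper's argument is that you invoke the Lyapunov form of the CLT (with exponent $\delta = 2$, i.e., fourth-moment Lyapunov condition), whereas the paper verifies the Lindeberg condition directly: with $\xi_{\alpha n} = X_\alpha x_{\alpha n}$, it bounds the truncated second moment via the Chebyshev-type estimate $\mathbf{E}\{X_\alpha^2\,I(|X_\alpha| > \tau\sqrt{n}/|x_{\alpha n}|)\} \le m_4\, x_{\alpha n}^2/(\tau^2 n)$ and reduces the Lindeberg sum to $(n\tau)^{-2}\sum_\alpha x_{\alpha n}^4$, which vanishes by (\ref{xbou}). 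These two routes are of course equivalent in substance: Lyapunov with $\delta=2$ is the standard sufficient condition for Lindeberg, and the same fourth-moment bound $m_4<\infty$ together with (\ref{xbou}) drives both. Your version is a bit cleaner, since it avoids manipulating indicator functions and reads (\ref{xbou}) off directly as the Lyapunov ratio; the paper's Lindeberg computation is marginally more general in that it would also work if one only assumed a finite $(2+\delta)$-th moment for some $\delta<2$, but given (\ref{xcond}) assumes $m_4 < \infty$ anyway, nothing is lost. Both proofs conclude by noting $s_n^2 = q_n - \sigma_b^2 \to q - \sigma_b^2 > 0$, so $\eta_n \xrightarrow{d} \mathcal{N}(0, q-\sigma_b^2)$.
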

%%%%%%%%%%%%%%%%%%%%%%%%%%%%%%%%%%%%%%%%%%%%%%%%%%%%%%%%%%%%%%%
\begin{proof}
We will use the Central Limit Theorem for independent and not necessarily
identically distributed random variables $\{\xi _{\alpha n}\}_{\alpha =1}^{n}
$ with
\begin{align*}
& \mathbf{E}\{\xi _{\alpha n}\}=0,\;\mathbf{E}\{\xi^2 _{\alpha n}\}=\sigma
_{\alpha n}^{2},\;\Xi _{n}=\sum_{\alpha =1}^{n}\xi _{\alpha n}, \\
& \hspace{0.7cm}\Sigma _{n}^{2}:=\mathbf{Var}\{\Xi _{n}\}=\sum_{\alpha
=1}^{n}\sigma _{\alpha n}^{2}.
\end{align*}%
In this case $\Sigma _{n}^{-1}\Xi _{n}$ converges in distribution to \ the
standard Gaussian variable $\gamma $ if for any $\tau >0$%
\begin{equation}
\lim_{n\rightarrow \infty }\Sigma _{n}^{-2}\sum_{\alpha =1}^{n}\mathbf{E}%
\{\xi _{\alpha n}^{2}I(|\xi _{\alpha n}|-\tau \Sigma _{n})\}=0,  \label{Lin}
\end{equation}%
where $I$ is the indicator of $\mathbb{R}_{+}$ (see \cite{Sh:96}, Section
III.4). Choosing $X_{\alpha }x_{\alpha n}$ as $\xi _{\alpha n}$, it is easy
to find from (\ref{qqn}) that (\ref{Lin}) is equivalent to
\begin{equation}
\lim_{n\rightarrow \infty }n^{-1}\sum_{\alpha =1}^{{}}x_{\alpha n}^{2}%
\mathbf{E}\{X_{\alpha }^{2}I(|X_{\alpha }|-\tau \sqrt{n}/x_{\alpha n})\}=0.
\label{Lin1}
\end{equation}%
It follows from (\ref{xcond}) that $\mathbf{E}\{X_{\alpha }^{2}I(|X_{\alpha
}|-\tau \sqrt{n}/x_{\alpha n})\}\leq m_{4}x_{\alpha n}^{2}/\tau ^{2}n$ and
then the l.h.s. of (\ref{Lin1}) is bounded by%
\begin{equation*}
\lim_{n\rightarrow \infty }(n\tau )^{-2}\sum_{\alpha =1}^{{}}x_{\alpha
n}^{4},\ \tau >0,
\end{equation*}%
which is zero in view of (\ref{xbou}).

Likewise, $\Sigma _{n}^{-1}\Xi _{n}$
is equivalent to $(q_{n}-\sigma _{b}^{2})^{-1/2}\eta _{n}$, hence,
converges in distribution to $\gamma $.
\end{proof}

\smallskip
The next lemma deals with asymptotic properties of the vectors of
activations $x^{l}$ in the $l$th layer, see (\ref{rec}). It is an extended
version (treating the convergence with probability 1) of assertions proved
in \cite{Ma-Co:16,Po-Co:16,Sc-Co:17}.
%%%%%%%%%%%%%%%%%%%%%%%%%%%%%%%%%%%%%%%%%%%%%%%%%%%%%%%%%%%%%%%
\begin{lemma}
\label{l:xlyl} Let $y^{l}=\{y_{j}^{l}\}_{j=1}^{n},\;l=1,2,\dots$ be
post-affine random vectors defined in (\ref{rec}) -- (\ref{wga}) with $x^{0}$
satisfying (\ref{q0}), $\chi :\mathbb{R}\rightarrow \mathbb{R}$ be a bounded
continuous function and $\Omega _{l}$ be defined in (\ref{oml}). Set
\begin{equation}
\chi _{n}^{l}=n^{-1}\sum_{j_{l}=1}^{n}\chi (y_{j_{l}}^{l}),\;l\geq 1.
\label{chiln}
\end{equation}%
Then there exists $\overline{\Omega }_{l}\subset \Omega _{l},\;\mathbf{P}(%
\overline{\Omega }_{l})=1$ such that for every $\omega _{l}\in \overline{%
\Omega }_{l}$ (i.e., with probability 1) the limits%
\begin{equation}
\chi ^{l}:=\lim_{n\rightarrow \infty }\chi _{n}^{l},\;l=1,2, \dots  \label{lql}
\end{equation}%
exist, are not random (do not depend on the realizations with probability 1) and given by the formula
\begin{equation}
\chi ^{l}=\int_{-\infty }^{\infty }\chi \big(\gamma (q^{l}-\sigma _{b}^{2})^{1/2}%
+b\big)\Gamma (d\gamma )F(db),\;l=1,2, \dots  \label{lqg}
\end{equation}%
valid on $\overline{\Omega }_{l}$ with\ $\Gamma (d\gamma )=(2\pi
)^{-1/2}e^{-\gamma ^{2}/2}d\gamma $ being the standard Gaussian probability
distribution, $F$ is the common probability law of $\{b_{j_l}^{l}\}_{j_l=1}^n$ in (\ref{bga}) and $%
q^{l}$ defined recursively by the formula
\begin{equation}
q^{l}=\int_{-\infty }^{\infty }\varphi ^{2}\big(\gamma (q^{l-1}-\sigma_b^2)^{1/2}+b\big)\Gamma
(d\gamma )F(db)+\sigma _{b}^{2},\;l=2,3, \dots  \label{ql}
\end{equation}%
with $q^{1}$ given in (\ref{q0}).

In particular, we have with probability 1 formula (\ref{nukal}) for the weak
limit $\nu _{K^{l}}$ of the Normalized Counting Measure $\nu _{K_{n}^{l}}$
of diagonal random matrix $K_{n}^{l}$ of (\ref{kan}).
\end{lemma}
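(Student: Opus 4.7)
The plan is to proceed by induction on $l$, exploiting at every step that, conditional on the preceding layer, the components $\{y_{j_l}^l\}_{j_l=1}^n$ are jointly i.i.d.

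\textbf{Base case $l=1$.} The rows of $X^1$ and the components of $b^1$ are i.i.d., so the variables $\{\chi(y_j^1)\}_{j=1}^n$ are i.i.d.\ and bounded by $\|\chi\|_\infty$. I would write $\chi_n^1 = \mathbf{E}\{\chi_n^1\}+(\chi_n^1-\mathbf{E}\{\chi_n^1\})$. Hypotheses (\ref{q0})--(\ref{qLin}) are precisely the assumptions of Lemma \ref{l:clt}, which identifies the weak limit of $y_1^1$ as $(q^1-\sigma_b^2)^{1/2}\gamma+b_1$; the portmanteau theorem then gives $\mathbf{E}\{\chi_n^1\}=\mathbf{E}\{\chi(y_1^1)\}\to\int \chi((q^1-\sigma_b^2)^{1/2}\gamma+b)\,\Gamma(d\gamma)F(db)$. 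For the fluctuation, independence and boundedness yield $\mathbf{E}\{|\chi_n^1-\mathbf{E}\{\chi_n^1\}|^4\}=O(n^{-2})$ by the standard expansion of the fourth moment of a sum of centered bounded i.i.d.\ variables, and Borel--Cantelli delivers almost-sure convergence to zero on a full-measure set $\overline{\Omega}_1\subset\Omega_1$.

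\textbf{Inductive step.} Assume the result for $l$ and fix $\omega_l\in\overline{\Omega}_l$. Applied with $\chi=\varphi^2$, the induction hypothesis gives $n^{-1}\sum_\alpha(x_\alpha^l)^2 = n^{-1}\sum_\alpha \varphi^2(y_\alpha^l)\to q^{l+1}-\sigma_b^2$ on $\overline{\Omega}_l$, so condition (\ref{qqn}) of Lemma \ref{l:clt} holds for the now-deterministic sequence $\{x_\alpha^l\}$; condition (\ref{xbou}) is automatic from $|x_\alpha^l|\le \Phi_0$. Since $\Omega^{l+1}$ is independent of $\Omega_l$, conditional on $\omega_l$ the variables $y_j^{l+1}=n^{-1/2}\sum_\alpha X^{l+1}_{j\alpha}x_\alpha^l+b_j^{l+1}$ are i.i.d.\ in $j$, so running the base-case argument conditionally (Lemma \ref{l:clt} plus the same fourth-moment fluctuation bound) produces a set $\overline{\Omega}^{l+1}(\omega_l)\subset\Omega^{l+1}$ of full measure on which (\ref{lql})--(\ref{lqg}) hold for $l+1$. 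A Fubini argument on $\Omega_{l+1}=\Omega^{l+1}\times\Omega_l$ yields the required full-measure set $\overline{\Omega}_{l+1}$.

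\textbf{Main obstacle.} The delicate point is to keep the two independent sources of randomness cleanly separated: the ``old'' randomness in $\Omega_l$, which determines the limiting variance $q^{l+1}-\sigma_b^2$ through the empirical second moment of $x^l$, and the ``fresh'' randomness in $\Omega^{l+1}$ that drives the CLT for $y^{l+1}$. The conditioning-then-Fubini device handles this, provided one checks measurability of the good event in $\Omega^{l+1}$ as a function of $\omega_l$. Finally, for the $\nu_{K^l}$ assertion, weak convergence $\nu_{K_n^l}\to\nu_{K^l}$ is equivalent to $\int g\,d\nu_{K_n^l}=n^{-1}\sum_{j_l}g((\varphi'(y_{j_l}^l))^2)\to\int g\,d\nu_{K^l}$ for every bounded continuous $g$; applying the result just established to $\chi(y):=g((\varphi'(y))^2)$, which is bounded and continuous because $\varphi'$ is continuous by hypothesis, yields exactly (\ref{nukal}).
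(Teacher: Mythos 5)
Your proof is correct and follows essentially the same route as the paper: induction on $l$, with the base case combining Lemma \ref{l:clt} for the limit of $\mathbf{E}\{\chi(y_1^1)\}$ with an almost-sure law of large numbers for the triangular array $\{\chi(y_{j_1}^1)\}_{j_1=1}^n$, then conditioning on $\omega_l$ plus Fubini for the inductive step, and finally $\chi = g\circ(\varphi')^2$ for the $\nu_{K^l}$ claim. The one small point in your favor is that you make the triangular-array nature of the LLN explicit via the fourth-moment estimate and Borel--Cantelli, whereas the paper simply invokes the "strong Law of Large Numbers"; the two amount to the same thing here since the summands are bounded i.i.d.\ for each fixed $n$.
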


%%%%%%%%%%%%%%%%%%%%%%%%%%%%%%%%%%%%%%%%%%%%%%%%%%%%%%%%%%%%%%%
\begin{proof}
Set $l=1$ in (\ref{chiln}) Since $\{b_{j_{1}}^{1}\}_{j_{1}=1}^{n}$ and $%
\{X_{j_{1},j_{0}}^{1}\}_{j_{1},j_{0}=1}^{n}$ are i.i.d. random variables satisfying (\ref{bga}) -- (\ref{wga}), it follows from (\ref{rec}) that the components $%
\{y_{j_{1}}^{1}\}_{j_{1}=1}^{n}$ of $y^{1}$ are also i.i.d. random variables
of zero mean and variance $q_{n}^{1}$ of (\ref{q0}). Since $\chi $ is
bounded, the collection $\{\chi (y_{j_{1}}^{1})\}_{j_{1}=1}^{n}$ consists of
bounded i.i.d random variables defined for all $n$ on the same probability
space $\Omega _{1}$ generated by (\ref{xinf}) and (\ref{binf}) with $l=1$.
This allows us to apply to $\{\chi (y_{j_{1}}^{1})\}_{j_{1}}^{n}$ the strong
Law of Large Numbers implying (\ref{lql}) for $l=1$ together with the formula%
\begin{equation}
\chi ^{1}=\lim_{n\rightarrow \infty }\mathbf{E}\{\chi (y_{1}^{1})\}
\label{chi0}
\end{equation}%
valid on a certain $\overline{\Omega }_{1}\subset \Omega _{1}=\Omega ^{1}$, $%
\mathbf{P}(\overline{\Omega} _{1})=1$, see (\ref{oml}).

To get (\ref{lqg}) for $l=1$ recall that according to (\ref{rec}) and (\ref%
{bga}) -- (\ref{wga})%
\begin{equation*}
y_{1}^{1}=\eta _{1}^{1}+b_{1}^{1},\;\eta
_{1}^{1}=n^{-1/2}\sum_{j_{0}=1}^{n}X_{1j_{0}}^{1}x_{j_{0}}^{0}
\end{equation*}%
and $\eta _{1}^{1}$ and $b_{1}^{1}$ are independent. Hence,%
\begin{equation*}
\mathbf{E}\{\chi (y_{1}^{1})\}=\int \chi (\eta +b)\Gamma_{n}^{}(d\eta )F(db),
\end{equation*}%
where $\Gamma_{n}$ is the probability law of $\eta _{1}^{1}$. Passing here to
the limit $n\rightarrow \infty $ and using Lemma \ref{l:clt}, and (\ref{q0})
-- (\ref{qLin}), we obtain (\ref{lqg}) for $l=1$.

Consider now the case $l=2$. Since $\{X^{1},b^{1}\}$ and $\{X^{2},b^{2}\}$
are independent, we can fix $\omega _{1}\in \overline{\Omega }_{1},\ \mathbf{%
P}(\overline{\Omega }_{1})=1$ (a realization of $\{X^{1},b^{1}\}$) and apply
to $\chi _{n}^{2}$ of (\ref{chiln}) the same argument as that for the case $%
l=1$ above to prove that for every $\omega _{1}\in \overline{\Omega }_{1}$
there exists $\overline{\Omega }^{2}(\omega ^{1})\subset \Omega ^{2},\;%
\mathbf{P}(\overline{\Omega }^{2}(\omega ^{1}))=1\,$\ on which we have the
analog of (\ref{chi0})
\begin{equation}
\chi ^{2}(\omega ^{1},\omega ^{2})=\lim_{n\rightarrow \infty }\mathbf{E}%
_{\{X^{2},b^{2}\}}\{\chi (y_{1}^{2})\}  \label{q2oo}
\end{equation}%
%valid for every pair $(\omega ^{1},\omega ^{2}),\;\omega ^{1}\in %\overline{%
%\Omega _{1}}$, $\omega _{2}\in \overline{\Omega _{2}}(\omega %_{1})$ with%
where $\mathbf{E}_{\{X^{2},b^{2}\}}\{...\}$ denotes the expectation with
respect to $\{X^{2},b^{2}\}$ only. Now, by using again Lemma \ref{l:clt} and
the Fubini theorem we obtain that there exists $\overline{\Omega }%
_{2}\subset \Omega _{2}=\Omega ^{1}\otimes \Omega ^{2},\;$ $\mathbf{P}(%
\overline{\Omega }_{2})=1$ on which we have (\ref{lql}) for $l=2$ with
\begin{equation*}
q^{2}=\lim_{n\rightarrow \infty
}n^{-1}\sum_{j_{1}=1}^{n}(x_{j_{1}}^{1})^{2}+\sigma _{b}=\lim_{n\rightarrow
\infty }n^{-1}\sum_{j_{1}=1}^{n}(\varphi (y_{j_{1}}^{1}))^{2}+\sigma _{b}.
\end{equation*}%
The limit in the r.h.s. above exists with probability 1 on $\Omega ^{1}$ and
equals the r.h.s. of (\ref{ql}) for $l=2$ just because it is a particular
case of (\ref{lqg}) for $\chi =\varphi ^{2}$ and $l=2$.

This proves the validity (\ref{lql}) -- (\ref{ql}) for $l=2$ with
probability 1. Analogous argument applies for $l=3,4,\dots$.

To prove (\ref{nukal}) it suffices to prove the validity with probability 1
of
\begin{equation*}
\lim_{n\rightarrow \infty }\int_{-\infty }^{\infty }\psi (\lambda )\nu
_{K_{n}^{l}}(d\lambda )=\int_{-\infty }^{\infty }\psi (\lambda )\nu
_{K^{l}}(d\lambda )
\end{equation*}%
for any bounded and continuous $\psi :\mathbb{R\rightarrow R}$.

In view of (\ref{rec}), (\ref{D}) and (\ref{kan}) the relation can be
written as%
\begin{equation*}
\lim_{n\rightarrow \infty }n^{-1}\sum_{j_{l}=1}^{n}\psi \Big(\big(\varphi ^{\prime
}(y_{j_{l}}^{l})\big)^{2}\Big)=\int_{-\infty }^{\infty }\psi \Big(\big(\varphi ^{^{\prime
}}(\gamma (q^{l}-\sigma_b^2)^{1/2}+b)\big)^{2}\Big)\Gamma (d\gamma )F(db),\;l\geq 1.
\end{equation*}%
The l.h.s. here is a particular case of (\ref{chiln}) -- (\ref{lql}) for $%
\chi =\psi \circ \varphi ^{\prime 2}$, thus, it equals the r.h.s. of (\ref%
{lqg}) for this $\chi $.
\end{proof}
%%%%%%%%%%%%%%%%%%%%%%%%%%%%%%%%%%%%%%%%%%%%%%%%%%%%%%%%%%%%%%

\medskip
The next lemma provides the unique solvability of the system (\ref{hk}) -- (%
\ref{kh}). The lemma is a streamlined version of Lemma 3.12 in \cite{Pa:20}.
Note that in the course of proving Theorem \ref{t:ind} it was
found that the system has at least one solution $(h,k)$ analytic in $\mathbb{C} \setminus \mathbb{R}_+$ and such that $h$ satisfies the $n \to \infty$ versions of (\ref{hrep}) -- (\ref{pos}). This is used below to determine the class of functions in which the unique solvability holds.
%%%%%%%%%%%%%%%%%%%%%%%%%%%%%%%%%%%%%%%%%%%%%%%%%%%%%%%%%%%%%

\begin{lemma}
\bigskip \label{l:uniq} The system (\ref{hk}) -- (\ref{kh}) with $\nu _{R}$
and $\ \nu _{K}$ satisfying
\begin{equation}
\nu _{K}(\mathbb{R}_{+})=1,\;\nu _{R}(\mathbb{R}_{+})=1  \label{mkr}
\end{equation}%
$\;$ and (cf. ((\ref{r2}))%
\begin{equation}
\int_{0}^{\infty }\lambda ^{2}\nu _{K}(d\lambda )=\kappa _{2}<\infty
,\;\int_{0}^{\infty }\lambda ^{2}\nu _{R}(d\lambda )=\rho _{2}<\infty
\label{nukr}
\end{equation}%
has a unique solution in the class of pairs $(h,k)$ of functions  defined in $%
\mathbb{C}\setminus \mathbb{R}_{+}$ and such that $h$ is analytic in $%
\mathbb{C}\setminus \mathbb{R}_{+}$, continuous and positive on the open
negative semi-axis and satisfies (\ref{hcond}).
%with $r_{2}$ replaced by $ \rho _{2}$ of (\ref{nukr}).

In addition

(i) the function $k$ is analytic in $\mathbb{C}\setminus \mathbb{R}_{+}$,
continuous and positive on the open negative semi-axis and (cf. (\ref{hcond}%
))%
\begin{equation}
\Im k(z)\Im z<0\;\mathrm{for\;}\Im z\neq 0,\;0<k(-\xi )\leq \kappa
_{2}^{1/2}\;\mathrm{for\;}\xi >0  \label{imk}
\end{equation}%
with $\kappa _{2}$ of (\ref{nukr});

(ii) if the measures $\nu _{K^{(p)}}$ and $\nu _{R^{(p)}}, \; p=1,2, \dots$ have uniformly
in $p$ bounded second moments (see (\ref{nukr})) and converge weakly to $%
\nu _{K}$ and $\nu _{R}$ also satisfying (\ref{nukr}), then the sequences of the
corresponding solutions $\{h^{(p)},k^{(p)}\}_{p}$ of the system (\ref{hk})
-- (\ref{kh}) converges pointwise in $\ \mathbb{C}\setminus \mathbb{R}_{+}$
to the solution $(h,k)$ of the system corresponding to the limiting measures
$(\nu _{K},\nu _{R})$.
\end{lemma}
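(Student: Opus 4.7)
The lemma bundles three separate claims: uniqueness of $(h,k)$ in the prescribed class, the analytic and sign properties (i) of $k$, and the stability statement (ii) under weak convergence. My plan is first to read off (i) directly from equation (\ref{kh}) using the sign and positivity hypotheses on $h$; then to prove uniqueness by a contraction argument on the deep negative real axis followed by analytic continuation; and finally to deduce (ii) from a normal-families argument in $\mathbb{C}\setminus\mathbb{R}_+$ combined with the uniqueness already established.

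\textbf{Item (i).} Analyticity of $k$ in $\mathbb{C}\setminus\mathbb{R}_+$ follows by dominated convergence in (\ref{kh}), since $\Im(h(z)\lambda+1)=\lambda\,\Im h(z)$ has definite sign for $\Im z\neq 0$ by (\ref{hcond}), so the integrand is bounded and holomorphic. A direct differentiation under the integral gives $\Im k(z)=-\Im h(z)\int_0^\infty\lambda^2|h(z)\lambda+1|^{-2}\nu_K(d\lambda)$, with the integral strictly positive because $\nu_K$ is not concentrated at $0$, yielding $\Im k\,\Im z<0$. On the negative axis, $h(-\xi)>0$ gives $h(-\xi)\lambda+1>1$ and hence $k(-\xi)>0$; Cauchy-Schwarz applied to $\nu_K$ together with (\ref{mkr})--(\ref{nukr}) supplies $k(-\xi)\le\int\lambda\,\nu_K(d\lambda)\le\kappa_2^{1/2}$.

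\textbf{Uniqueness.} Let $(h_1,k_1)$ and $(h_2,k_2)$ be two solutions in the allowed class and specialize to $z=-\xi$ with $\xi>0$; by the previous paragraph, $h_i(-\xi)$ and $k_i(-\xi)$ are all positive. Subtracting (\ref{hk}) for the two solutions and clearing common denominators yields
\begin{equation*}
h_1(-\xi)-h_2(-\xi)=\bigl(k_2(-\xi)-k_1(-\xi)\bigr)\int_0^\infty\frac{\lambda^2\,\nu_R(d\lambda)}{(k_1(-\xi)\lambda+\xi)(k_2(-\xi)\lambda+\xi)},
\end{equation*}
together with the analogous identity from (\ref{kh}) with $\nu_R$ replaced by $\nu_K$ and denominators $(h_i(-\xi)\lambda+1)$. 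Positivity bounds the first kernel by $\rho_2/\xi^2$ and the second by $\kappa_2$; composing the two identities gives $|h_1(-\xi)-h_2(-\xi)|\le(\rho_2\kappa_2/\xi^2)\,|h_1(-\xi)-h_2(-\xi)|$, and choosing $\xi$ large forces $h_1=h_2$ on a half-line. Analyticity in $\mathbb{C}\setminus\mathbb{R}_+$ then propagates $h_1\equiv h_2$ over the whole slit plane, and $k_1\equiv k_2$ follows from (\ref{kh}).

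\textbf{Item (ii) and expected obstacle.} For stability, my plan is to extract uniform bounds on $\{h^{(p)}\}$ and $\{k^{(p)}\}$ over compact subsets of $\mathbb{C}\setminus\mathbb{R}_+$ from a Herglotz--Nevanlinna representation of $h^{(p)}$ analogous to (\ref{hrep}), apply Vitali's theorem to get a locally uniformly convergent subsequence, pass to the limit inside the integrals in (\ref{hk})--(\ref{kh}) using weak convergence plus uniform integrability of $\lambda$ against $\nu_{K^{(p)}},\nu_{R^{(p)}}$ (supplied by the uniform second-moment bound), and conclude that any subsequential limit solves the limiting system; uniqueness then promotes subsequential to full convergence. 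The main technical point is this locally uniform bound: the hypothesis (\ref{hcond}) furnishes only decay of $h^{(p)}(-\xi)$ as $\xi\to\infty$, and converting this into a bound on $|h^{(p)}(z)|$ away from $\mathbb{R}_+$ requires combining the sign condition $\Im h^{(p)}\,\Im z>0$ with the asymptotic bound to control the total mass of the spectral measure in the Nevanlinna representation uniformly in $p$.
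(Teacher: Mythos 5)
Your treatment of item (i) matches the paper's: you take the imaginary part of (\ref{kh}) to get
$\Im k(z)=-\Im h(z)\int_0^\infty\lambda^2|h(z)\lambda+1|^{-2}\nu_K(d\lambda)$ (exactly the paper's (\ref{imkh}); calling it ``differentiation'' is a slip), and the bound $k(-\xi)\le\kappa_2^{1/2}$ is Schwarz, as in the paper. One tacit hypothesis worth naming: strictness of $\Im k\,\Im z<0$ needs $\kappa_2>0$, i.e.\ $\nu_K\neq\delta_0$; the lemma statement leaves this implicit, and both you and the paper assume it.

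Your uniqueness step is where you diverge from the paper, and your route is correct and arguably cleaner. The paper assumes two solutions differ at a point $z_0$ with $\Im z_0\neq0$, subtracts (\ref{hk}) and (\ref{kh}) to get a $2\times2$ linear system for $(\delta h,\delta k)$, concludes $I_k(z_0)I_h(z_0)=1$, and then uses Schwarz together with the imaginary parts of (\ref{hk})--(\ref{kh}) to derive the strict inequality $|I_k I_h|<1$ off the real axis, a contradiction. You instead work on $z=-\xi$ with $\xi$ large, where the positivity of $h,k$ bounds the two subtraction kernels by $\rho_2/\xi^2$ and $\kappa_2$; composing yields the contraction $|h_1(-\xi)-h_2(-\xi)|\le(\rho_2\kappa_2/\xi^2)|h_1(-\xi)-h_2(-\xi)|$, so $h_1=h_2$ on $(-\infty,-\sqrt{\rho_2\kappa_2})$, hence everywhere by analyticity, hence $k_1=k_2$ by (\ref{kh}). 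Your argument requires item (i) first (to have $k_i(-\xi)>0$) and only works on the deep negative axis, but in exchange is elementary and avoids the off-axis imaginary-part bookkeeping. Both proofs are valid; the paper's is somewhat more self-contained in that it works pointwise off the real axis without first routing through the real-axis positivity.

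For item (ii), your plan is the right one and coincides with the paper's in outline (normal families, pass to the limit in the integral equations on the negative real axis, then invoke uniqueness). The obstacle you flag — a uniform-in-$p$ local bound on $h^{(p)}$ away from $\mathbb{R}_+$ — is real but closes easily, and you should note the mechanism: from (\ref{hk}) with $z=-\xi$ and $k^{(p)}(-\xi)>0$,
\begin{equation*}
h^{(p)}(-\xi)=\int_0^\infty\frac{\lambda\,\nu_{R^{(p)}}(d\lambda)}{k^{(p)}(-\xi)\lambda+\xi}\le\frac{1}{\xi}\int_0^\infty\lambda\,\nu_{R^{(p)}}(d\lambda)\le\frac{\rho_2^{1/2}}{\xi},
\end{equation*}
uniformly in $p$ by the assumed uniform second-moment bound. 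Since $h^{(p)}$ is Herglotz on $\mathbb{C}\setminus\mathbb{R}_+$ and vanishes at $-\infty$, its Nevanlinna representing measure therefore has total mass $\le\rho_2^{1/2}$ uniformly in $p$, which yields $|h^{(p)}(z)|\le\rho_2^{1/2}/\mathrm{dist}(z,\mathbb{R}_+)$ — exactly the locally uniform bound Vitali needs. The paper simply asserts uniform boundedness, extracts a pointwise-convergent subsequence, and then verifies the limiting system only for real negative $z$ before invoking uniqueness; that is the same bypass in slightly different clothing. So your proposal is a complete and correct plan once this one bound is written out.
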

\begin{proof}
Note that in the course of proving Theorem \ref{t:ind} it was
proved that the system has at least one solution satisfying the conditions of the lemma.

Let us prove assertion (i) of the lemma. It follows from (\ref{kh}), (%
\ref{nukr}) and the analyticity of $h$ in $\mathbb{C}\setminus \mathbb{R}%
_{+} $ that $k$ is also analytic in $\mathbb{C}\setminus \mathbb{R}_{+}$.
Next, for any solution of (\ref{hk}) -- (\ref{kh}) we have \ from (\ref{kh})
with $\Im z\neq 0$%
\begin{equation}
\Im k(z)=-\Im h(z)\int_{0}^{\infty }\frac{\lambda ^{2}\nu _{K}(d\lambda )}{%
|h(z)\lambda +1|^{2}}  \label{imkh}
\end{equation}%
and then (\ref{hcond}) yields (\ref{imk}) for $\Im z\neq 0$, while (\ref{kh})
with $z=-\xi <0$, i.e.,
\begin{equation*}
k(-\xi )=\int_{0}^{\infty }\frac{\lambda \nu _{K}(d\lambda )}{h(-\xi
)\lambda +1},
\end{equation*}%
the positivity of $h(-\xi )$ (see (\ref{hcond})),  (\ref{mkr}) and Schwarz
inequality yield (\ref{imk}) for $z=-\xi $.

Let us prove now that the system (\ref{hk}) -- (\ref{kh}) is uniquely
solvable in the class of pairs of functions $(h,k)$ analytic in $\mathbb{C}%
\setminus \mathbb{R}_{+}$ and satisfying (\ref{hcond}) and (\ref{imk}).
The argument below is a version of that used in \cite{Pa-Sh:11}, Lemma 2.2.6
for the deformed Wigner ensemble.

Assume that there exist two different solutions $(h_{1},k_{1})$ and $%
(h_{2},k_{2})$ of (\ref{hk}) -- (\ref{kh}), i.e., there exists $z_{0} \in \mathbb{C}\setminus \mathbb{R}_{+}$, where at least one of two functions
$\delta h=h_{1}-h_{2}, \; \delta k=k_{1}-k_{2}$ is not zero. Since $\delta h$ and $\delta k$ are
analytic in $\mathbb{C}\setminus \mathbb{R}_{+}$, we can assume without loss
of generality that $\Im z_{0} \neq 0$. It follows then from (\ref{hk}%
) -- (\ref{kh}) that
\begin{equation}
\delta h(z_{0})=-\delta k(z_{0})\;I_{k}(z_{0}),\;\delta k(z_{0})=-\delta
h(z_{0})\;I_{h}(z_{0}),  \label{dhdk}
\end{equation}%
where%
\begin{eqnarray*}
I_{k}(z) &=&\int_{0}^{\infty }\frac{\lambda ^{2}\nu _{R}(d\lambda )}{%
(\lambda k_{1}(z)-z)(\lambda k_{2}(z)-z)},\; \\
I_{h}(z) &=&\int_{0}^{\infty }\frac{\lambda ^{2}\nu _{K}(d\lambda )}{%
(\lambda h_{1}(z)+1)(\lambda h_{2}(z)+1)}.
\end{eqnarray*}%
Viewing (\ref{dhdk}) as a system of linear equations for $\delta h$ and $%
\delta k$ that has a non-trivial solution, we conclude that
\begin{equation}
1=I_{k}(z_{0})I_{h}(z_{0}),\; \Im z_{0} \neq 0.  \label{det1}
\end{equation}%
On the other hand, we have by Schwarz inequality%
\begin{equation}
|I_{k}(z)|\leq (A_{k_{1}}(z)A_{k_{2}}(z))^{1/2},\;|I_{h}(z)|\leq
(A_{h_{1}}(z)A_{h_{2}}(z))^{1/2},  \label{iaia}
\end{equation}%
where%
\[
A_{k}(z)=\int_{0}^{\infty }\frac{\lambda ^{2}\nu _{R}(d\lambda )}{|\lambda
k(z)-z|^{2}},\;A_{h}(z)=\int_{0}^{\infty }\frac{\lambda ^{2}\nu
_{K}(d\lambda )}{|\lambda h(z)+1|^{2}}.
\]

In addition, the imaginary parts of (\ref{hk}) and (\ref{kh}) yield%
\begin{eqnarray*}
\Im k(z) &=&-\Im h(z)A_{h}(z), \\
\Im h(z) &=&-\Im k(z)A_{k}(z)+\Im z\int_{0}^{\infty }\frac{%
\lambda \nu _{R}(d\lambda )}{|\lambda k(z)-z|^{2}},
\end{eqnarray*}%
hence, by (\ref{hcond}),%
\[
0<A_{k}(z)A_{h}(z)=1-\frac{\Im z}{\Im h(z)}\int_{0}^{\infty }\frac{%
\lambda \nu _{R}(d\lambda )}{|\lambda k(z)-z|^{2}}<1,\;\Im z\neq 0.
\]%
This and (\ref{iaia}) lead to the strict inequality
\[
|I_{k}(z)I_{h}(z)|^{2}\leq
(A_{k_{1}}(z)A_{h_{1}}(z))(A_{k_{2}}(z)A_{h_{2}}(z))<1,\;\;\Im z\neq 0
\]%
which contradicts (\ref{det1}).

Let us prove assertion (ii) of the lemma. Since $h^{(p)}$ and $k^{(p)}$ are
analytic and uniformly in $p$ bounded outside the closed positive semiaxis,
there exist subsequences $\{h^{(p_{j})},k_{{}}^{(p_{j})}\}_{j}$ converging
pointwise in $\mathbb{C}\setminus \mathbb{R}_{+}$ to a certain analytic pair
$(\widetilde{h},\widetilde{k})$. Let us show that $(\widetilde{h},\widetilde{%
k})=(h,k)$. It suffices to consider real negative $z=-\xi >0$ (see (\ref%
{imi})). Write for the analog of (\ref{kh}) for $\nu _{K^{(p)}}$:%
\begin{eqnarray*}
k^{(p)} &=&\int_{0}^{\infty }\frac{\lambda \nu _{K^{(p)}}(d\lambda )}{%
h^{(p)}\lambda +1} \\
&=&\int_{0}^{\infty }\frac{\lambda \nu _{K^{(p)}}(d\lambda )}{\widetilde{h}%
\lambda +1}+(\widetilde{h}-h^{(p)})\int_{0}^{\infty }\frac{\lambda ^{2}\nu
_{K^{(p)}}(d\lambda )}{(h^{(p)}\lambda +1)(\widetilde{h}\lambda +1)}.
\end{eqnarray*}%
Putting here $p=p_{j}\rightarrow \infty $, we see that the l.h.s. converges
to $\widetilde{k}$, the first integral on the right converges to the r.h.s
of (\ref{kh}) with $\widetilde{h}$ instead of $h$ since $\nu _{K^{(p)}}$
converges weakly to $\nu _{K}$, the integrand is bounded and continuous
and the second integral is bounded in $p$ since $h^{(p)}(-\xi )>0$, $%
\widetilde{h}(-\xi )>0$ and the second moment of $\nu _{K^{(p)}}$ is bounded
in $p$ according to (\ref{nukr}), hence, the second term vanishes as $%
p=p_{j}\rightarrow \infty $. An analogous argument applied to (\ref{hk}%
) show $(\widetilde{h},\widetilde{k})$ is a solution of (\ref{hk}) -- (\ref%
{kh}) and then the unique solvability of the system implies that $(%
\widetilde{h},\widetilde{k})=(h,k)$.
\end{proof}
\textbf{Acknowledgment}. We
are grateful to the anonymous referee for the careful reading of our manuscript
and suggestions which helped us to improve considerably the presentation.
L.P. is grateful to the Ecole Normale Sup\'eriore (Paris) for the invitation
to the Department of Physics, where the final version of the paper was prepared.

\end{document}